\newif\iflabor%
	\def\U@DECODE#1{\bgroup\def\UTFviii@defined##1{\expandafter#1\string##1+}}
	\def\U@LET#1:#2+{\egroup\U@DUC{\UTFviii@hexnumber{\decode@UTFviii#2\relax}}}
	\def\U@DEF#1:#2+{\U@LET:#2+{\@nameuse{意#2義}}\@namedef{意#2義}}
	\let\U@DUC\DeclareUnicodeCharacter\U@DUC{4EE4}{\U@DECODE\U@LET}令定{\U@DECODE\U@DEF}
	\def\bigl@C#1{\bigl#1\iflabor\else彩\fi}	\def\bigr@C#1{\iflabor\else彩\fi\bigr#1}
	\def\({\bigl@C(}	\def\){\bigr@C)}	令（{\Bigl(}			令）{\Bigr)}			
	\def\|{\mathrel\Vert}	令‖{\mathrel\Big\Vert}	令｜{\mid\nobreak}		
	\def\[{\@ifstar{\begin{equation*}\bgroup}{\begin{equation}\bgroup}}
	\def\]{\@ifstar{\egroup\end{equation*}}{\egroup\end{equation}}}
	\DeclarePairedDelimiter\abs\lvert\rvert	
	\DeclareMathOperator\ran{rank}			\DeclareMathOperator\spa{span}				
	\DeclareMathOperator\Mat{Mat}			\DeclareMathOperator\tr{tr}					
	\DeclareMathOperator\Cyc{Cyc}			
	\def\bma#1{\begin{bmatrix}#1\end{bmatrix}}
	\def\bsm#1{\left[\begin{smallmatrix}#1\end{smallmatrix}\right]}
	\let\AB\allowbreak
	\let\EA\expandafter
	\let\NE\noexpand
	\def\LHS/{left-hand side}
	\def\RHS/{right-hand side}
	\let\PMP\pgfmathparse		\def\PMR{\pgfmathresult}		\let\PMS\pgfmathsetmacro
	\let\PMT\pgfmathtruncatemacro		
	\def\MD@three#1#2#3#4.{\PMP{0x#1#2#3}}\expandafter\MD@three\pdfmdfivesum{\jobname}.
	\tikzset{
		every picture/.style={cap=round,join=round},
		Tanner/.style={
			cn/.style={rectangle,minimum size=1.8em,draw},
			vn/.style={circle,minimum size=2em,inner sep=0,draw},
			hn/.style={vn,fill=UIB},
		}
	}
	\pgfplotsset{
		compat/show suggested version=false,compat=1.13, 
		cycle multiindex* list={
			Dark yellow,Periwinkle,Citron,Gray-blue,Salmon,Teal,Cool Gray 6\nextlist
			mark=*,mark=triangle,mark=diamond*,mark=square,mark=pentagon*,
			mark=o,mark=triangle*,mark=diamond,mark=square*,mark=pentagon\nextlist
			,dash pattern=on16off.8,{dash pattern=on16off.8on1off.8}\nextlist
		},
		legend image code/.code={
			\draw[mark repeat=2,mark phase=2,#1]plot coordinates{
				(-5pt,-5pt)(0pt,0pt)(5pt,5pt)
			};
		},
		every axis/.style={
			width=13cm,height=8cm,const plot,
			cycle list shift=\accumulatenumplots,
		},
		before end axis/.code={
			\xdef\accumulatenumplots{\the\numexpr\accumulatenumplots+\numplots}
		}
	}
	\def\accumulatenumplots{0}
	\def\PM#1$#2${\texorpdfstring{$#2$}{#1}}
	\def\PT#1†#2†{\texorpdfstring{#2}{#1}}
	\def\U#1+{\unichar{"#1}}
\theoremstyle				理thm:Theorem~?s				
	\def\@ReadTypeLabel#1:#2?{\xdef\RTL@TYPE{#1}\xdef\RTL@LABEL{#2}}
	\def\eqlabel#1{\@ReadTypeLabel#1?\label[\RTL@TYPE]{\RTL@TYPE:\RTL@LABEL}}
	\def\steplabel{\incr@eqnum\tag\theequation\eqlabel}
	\def\block#1#2{_{#1#2}}
	\def\blockblock#1#2#3{{#1\block{#2}{#3}\if@display\else}{\fi}\block}
	\def\series#1{^{(#1)}}
	\def\seriesseries#1#2{{#1\series{#2}\mkern-2mu\if@display\else}{\fi}\series}
	\def\seriesseriesseries#1#2#3{
		{\seriesseries{#1}{#2}{#3}\mkern-2mu\if@display\else}{\fi}\series}
	\def\A{A\block}\def\AA{\blockblock A}
	\def\B{B\block}\def\BB{\blockblock B}
	\def\C{C\block}
	\def\J{J\block}
	\def\S{S\series}\def\SS{\seriesseries S}
	\def\L{L\series}
	\def\al#1{α\series{#1}\block}
	\def\be#1{β\series{#1}\block}
\title
\author
\subjclass
\address{
				               Department of Mathematics,               				
				      University of Illinois at Urbana--Champaign,      				
				                 Urbana, Illinois 61801                 				
}
\email{%
				            hpwang2 and duursma @illinois.edu           				
}
\begin{document}\makeatother\message{\asciiart}

\begin{abstract}
	To multiply astronomic matrices using parallel workers subject to straggling,
	we recommend interleaving checksums with some fast matrix multiplication algorithms.
	Nesting the parity-checked algorithms, we weave a product code flavor protection.
	
	Two demonstrative configurations are as follows:
	(A)	$9$ workers multiply two $2×2$ matrices;
		each worker multiplies two linear combinations of entries therein.
		Then the entry products sent from any $8$ workers
		suffice to assemble the matrix product.
	(B)	$754$ workers multiply two $9×9$ matrices.
		With empirical frequency $99.8\%$, $729$ workers suffice,
		wherein $729$ is the complexity of the schoolbook algorithm.
	
	In general, we propose probability-wisely favorable configurations
	whose numbers of workers are close to, if not less than,
	the thresholds of other codes (e.g., entangled polynomial code and PolyDot code).
	Our proposed scheme applies recursively, respects worker locality,
	incurs moderate pre- and post-processes, and extends over small finite fields.
\end{abstract}

\maketitle

\section{Introduction}

	\emph{Matrix multiplication} is a pivotal operation
	in all of engineering, science, and mathematics.
	As problems and challenges in these areas grow in magnitude,
	so do the underlying matrices.
	The demand is to multiply matrices speedily, securely, and secretly.
	
	With today's technology, it is possible to assign a difficult computational job%
	---be it matrix multiplication or other---to numerous interconnected devices.
	The head device, called the \emph{manager}, pre-processes the job and allocates
	the workloads to some other devices in the network, called the \emph{workers}.
	Each worker works out the assigned task in parallel,
	and submits the outcome back to the manager as soon as it finishes.
	The manager gathers those answers, post-processes, and declares completion.
	
	Workers encounter turbulence in their workflow.
	It is not uncommon to see that a minority of workers, termed \emph{stragglers},
	return their answers significantly later than the majority.
	Idling, the manager as well as the punctual workers
	have no choice but to wait for the stragglers to catch up.
	
	To counter the turbulence, a thoughtful manager designs and distributes
	redundant tasks to workers such that it is able to
	assemble the final output from but a portion of worker answers.
	A plain tactic is to assign repetitive tasks to multiple workers,
	and hope that they do not straggle all at once.
	As a motivating example, say there are $100$ workers and $10$ tasks.
	For the $i$th task, for each $1≤i≤10$,
	assign the worker labeled $(i,j)$, for all $1≤j≤10$, to do the task.
	let $T_{(i,j)}$ be the time the worker spends.
	Then tasks are all done at time
	\[*\max_{1≤i≤10}\min_{1≤j≤10}T_{(i,j)}.\]*
	For simple, unstructured jobs,
	repetition seems to be the only remedy, and is found good enough in practice.
	
	For linear operations---for instance matrix-vector multiplication---%
	the methodology of coding comes into play.
	Unlike repetition, a linear combination like $A_1+A_2$
	serves as a backup of $A_1$ when $A_2$ is known,
	and serves as a backup of $A_2$ when $A_1$ is known.
	As a result of linearity, any two out of $A_1x$, $A_2x$, and $(A_1+A_2)x$
	lead to the final matrix-vector product
	\[*\bma{A_1\\A_2}x=\bma{A_1x\\A_2x}
		=\bma{1\\-1&1}\bma{A_1x\\(A_1+A_2)x}
		=\bma{1&-1\\&1}\bma{(A_1+A_2)x\\A_2x}.\]*
	In formal notation, this is the $[3,2,2]$-Hamming code
	whose generator matrix is $[\mkern1.5mu^1_0{}^0_1{}^1_1]^⊤$.
	
	In general, to obtain $Ax$ when $A$ is about a square matrix and
	$x$ is a thin matrix, the manager adds redundancy to $A$ as follows:
	Prepare a tall generator matrix $G$ to form $˜A≔GA$.
	Then, send the $i$th row of $˜A$ to the $i$th worker
	and ask the latter to evaluate the dot product with $x$.
	The manager collects the dot products from the workers
	and declares completion if it finds the corresponding rows of $G$ full-rank.
	To be precise, let $P_ℐ$ be the restriction to the rows indexed by $ℐ$.
	Then $P_ℐ(˜Ax)=(P_ℐG)Ax$ is the collection of the dot products
	the manager receives, where $ℐ$ is the indices of the workers that return.
	Consequently, $(P_ℐG)^{-1}P_ℐ(˜Ax)=Ax$
	whenever a left inverse of $P_ℐG$ is defined.
	Within this framework, the overall time spent at the workers is
	\[*\min_ℐ\max_{i∈ℐ}T_i.\]*
	$T_i$ is the time the $i$th worker spends.
	The minimum is over all $ℐ$ that make $P_ℐG$ easy to invert.
	
	In reality, preparing a ``good'' $G$
	is subject to various contradictory constraints.
	For one, we want $P_ℐG$ invertible for as many $ℐ$'s as possible.
	The optimality is reached when $G$ is a generator matrix of an MDS code,
	and is approximated when $G$ is a random matrix.
	In \cite{LLPPR18}, the MDS approach was considered.
	The authors estimated the (theoretically) best worker time
	\[*\min_{\abs{ℐ}=m}\max_{i∈ℐ}T_i
		=†the $m$th least quantity among †T_1,T_2…T_{†\#workers†},\]*
	where $m$ is the width of $A$, the height of $x$.
	And they compared it with the naïve worker time $\max_{1≤i≤m}T_i$.
	
	Whereas the worker time is minimized, other costs emerge.
	In \cite{DCG16}, it was addressed that
	executing long dot products is prone to errors.
	Whence, they further divide the rows of $˜A≔GA$ into smaller segments.
	Workers now execute short dot products, hence the name \emph{Short-Dot code}.
	
	Other costs include the burden to invert $P_ℐG$ when $G$ is MDS or random.
	To that end, polar coding was applied in \cite{BP19}.
	Since polar code achieves capacity (especially over erasure channels),
	the manager waits for ``a little bit more than $m$ workers'' to respond.
	In exchange for the ``little bit more'',
	the decoding complexity at the manager side is vastly reduced.
	This speeds up the overall job.
	
	As if capacity-achieving is not good enough, \cite{MCSPJ20} proposed a $G$
	based on LT (Luby transform) code, a rateless code with low decoding complexity.
	With the rateless property, the manager need not worry about
	the capacity of the workers and the rate of the code to be used.
	Rather, it simply issues a new dot product whenever a worker reports availability
	(after the latter finishes the last task or reboots).
	Like in the polar code case, the manager declares completion when
	the number of responses reaches a pre-set bar that is slightly greater than $n$.
	
	Other concerns in this corner include
	assigning heavier tasks to more powerful workers.
	This is done by sending unequal numbers of rows of $˜A≔GA$
	to heterogeneous workers \cite{RPPA19}.
	Yet another concern is the communication cost \cite{SGR19}.
	Beyond this point, the marginal gain in performance seems to come from
	the understanding of the context (e.g., the MapReduce framework or the hardware).
	In this paper, however, we do not touch these aspects of computation.

\subsection{Coded matrix multiplication}

	In calculating matrix-matrix multiplication $AB=C$,
	the bilinear nature is exploited to help defeat stragglers.
	On top of adding redundancies to $A$, one may add redundancies to $B$.
	That is, choose generator matrices $G,H$ to form $˜A≔GA$ and $˜B≔BH$.
	Broadcast rows of $˜A$ and columns of $˜B$ to workers,
	and let them evaluate the dot products.
	The manager collects the dot products, and
	``fills in the blanks'' (i.e., to infer the unknowns) whenever possible.
	
	Two possible rules to derive a blank dot product are available:
	Vertically, $G$ introduces relations among the entries in the same column of $˜A˜B$.
	The relation reads $G^⟂(˜A˜B)=0$ for any $G^⟂$ that
	annihilates $G$ from the left (i.e., $G^⟂G=0$).
	Horizontally, $H$ introduces relations among the entries in the same row of $˜A˜B$.
	The relation reads $(˜A˜B)H^⟂=0$ for all $H^⟂$ such that $HH^⟂=0$.
	The manager applies the two Sudoku-like rules
	alternatively to maximize the recovery.
	
	This concept that a data matrix $M∈𝔽^{ℓ×n}$ has annihilators
	(or \emph{parity checks}) from both sides is not new;
	the debut dates back to Elias \cite{Elias54}.
	In fact, Elias's construction starts with a multi-dimensional array
	and injects redundancies along every cardinal direction.
	This construction is named \emph{product codes}
	or \emph{turbo product codes} (TPC).
	For more on this topic,
	see a review paper \cite{MAS16} with a three-figure list of references.
	Note that its dual construction is called, equivocally,
	product codes or tensor product codes \cite{Wolf06}.
	In this paper, we only refer to the former, primary product codes,
	not to the dual constructions.
	
	Back to distributed computation.
	Usage of TPC in matrix-matrix multiplication is first seen
	in the context of algorithm-based fault tolerance \cite{HA84}
	and later in machine-learning--driven works.
	We overview the latter below.
	
	Same as in the matrix-vector case, the hardness in the matrix-matrix case
	is how to design $G$ and $H$ to optimize over numerous considerations.
	In \cite{LSR17}, for instance, both $G$ and $H$ are MDS;
	and the authors estimated the asymptote of the worker time
	under a probabilistic model.
	
	Knowing that MDS codes are hard to decode, \cite{BLOR18} suggested
	tensor powers of the $[3,2,2]$-Hamming code---$[\mkern1.5mu^1_0{}^0_1{}^1_1]$%
	---playing as $G^⊤$ and as $H$.
	This gives the redundant product $˜A˜B$ a layout of
	multi-dimensional TPC, in lieu of being two-dimensional.
	The pros of this proposal is that all parity checks are utterly short, hence swift.
	The cons, on the contrary, is the low code rate.
	Our work shares common elements with this one and faces a similar trade-off.
	Incidentally, the impact on the code rate is mild in our case and
	we are able to spot a niche where our rate outruns others'.
	
	In \cite{WLS18}, the authors copped with the complaint that
	$G,H$ originating with good codes (especially MDS and random)
	tend to destroy the sparsity of a matrix.
	The authors recommended making $G,H$ sparse, in a way that
	performs acceptably well while not creating too many nonzero entries.
	
	\cite{GWCR18} commented that a sparse matrix usually undergoes
	a process so-called \emph{sketching} to reduce its dimension;
	think of this as compressing (source-coding) a matrix
	while keeping its ``meat'' as faithfully as possible.
	The authors then suggested \emph{over}-sketching a matrix to retain some of
	its redundancies, and turning the last bit of redundancies against the stragglers.
	This tactic is thus named \hbox{\emph{OverSketch}}.
	
	In the next subsection, we dive into other codes
	that fall outside the $(GA)(BH)$ vein.
	They are the codes we want to compete with.

\subsection{Beyond bilinearity}

	It has come to a stage where the manager starts taking into account
	that $AB$ is a product of matrices, not of integers or of polynomials.
	Whilst the $(GA)(BH)$ framework exploits the fact that
	$A$ consists of rows and $B$ consists of columns,
	one can also see $A$ as a list of columns and $B$ as a list of rows.
	
	As part of the new perspective,
	\emph{polynomial code} in \cite{YMA17,YMA20} can be paraphrased as below:
	The manager prepares for each worker $s$ a row vector $g_s∈𝔽^{1×ℓ}$
	and a column vector $h_s∈𝔽^{n×1}$, and asks it to compute $(g_sA)(Bh_s)$.
	The manager can decode $AB$ if the span of $h_sg_s$
	over the punctual workers is the space of all matrices.
	This criterion is derived from $(g_sA)(Bh_s)=\tr(g_sABh_s)=\tr((h_sg_s)(AB))$.
	Since $\tr(¯X^⊤Y)$ is the Frobenius inner product of matrices $X$ and $Y$,
	one can decode $Y$ (in this case, $AB$) if
	a set of $X$'s (in this case, $h_sg_s$) span the space of matrices.
	To accomplish the best result,
	the authors let $ζ_s∈𝔽$ be distinct scalars, and let
	\[*g_s≔\bma{1&ζ_s&⋯&ζ_s^{ℓ-1\vphantom|}},\qquad
		h_s≔\bma{1&ζ_s^ℓ&⋯&ζ_s^{(n-1)ℓ}}^⊤.\]*
	Here, the height of $A$ is $ℓ$ and the width of $B$ is $n$.
	As $h_sg_s$ consists of contiguous powers of $ζ_s$,
	any subset set of size $ℓn$ span the space of the matrices.
	Hence the recovery threshold of polynomial coding is said to be $ℓn$.
	
	On the same track,
	\emph{MatDot code} in \cite{DFHJCG20}
	(which shares the same design elements with coded convolution
	in \cite[Theorem~4]{YMA17})
	can be paraphrased as below:
	The manager prepares for each worker $s$ a column vector $x_s∈𝔽^{m×1}$
	and a row vector $y_s∈𝔽^{1×m}$, and asks it to compute $(Ax_s)(y_sB)$.
	The manager can recover $C≔AB$ if the identity matrix is
	in the span of $x_sy_s$ over the punctual workers.
	To this end, the authors let $ζ_s∈𝔽$ be distinct scalars, and let
	\[*x_s≔\bma{1&ζ_s&⋯&ζ_s^{n-1}}^⊤,\qquad y_s≔\bma{ζ_s^{n-1}&⋯&ζ_s&1}.\]*
	This implies that each $x_sy_s$ is a Toeplitz matrix, and any $2n-1$ many $x_sy_s$
	span the space of Toeplitz matrices, in which the identity matrix lies.
	The number $2n-1$ is referred to as
	the \emph{recovery threshold} of MatDot coding.
	
	Polynomial code and MatDot code generalize.
	\emph{PolyDot code} (PDC) in \cite{DFHJCG20}
	lets worker $s$ compute $(g_sAx_s)(y_sBh_s)$ for some
	vectors $g_s,h_s,x_s,y_s$ that contain powers of $ζ_s$.
	When $ℓ$ (the height of $A$) coincides with $n$ (the width of $B$),
	they are able to achieve a threshold of $ℓ(2m-1)n$.
	Likewise, \emph{entangled polynomial code} (EPC) in \cite{YMAS18,YMA20}
	assigns worker $s$ to compute a similar product $(g_sAx_s)(y_sBh_s)$ with
	$g_s,h_s,x_s,y_s$ consisting of a different set of powers of $ζ_s$.
	The threshold they achieve is $ℓmn+m-1$,
	which is about half of PDC's threshold.
	
	When $m=1$, both PDC and EPC degenerate to polynomial coding,
	and the thresholds become $ℓ(2m-1)n=ℓmn+m-1=ℓn$.
	This quantity happens to be the bilinear rank (in fact, the border rank)
	of matrix multiplication of type $⟨ℓ,1,n⟩$ \cite[Lemma~7.1]{Blaser13}.
	That is to say, the threshold $ℓn$ is optimal in this $m=1$ case.
	When $m≥2$, especially when $ℓ=m=n≥2$,
	EPC's threshold $ℓmn+m-1$ is better than PDC's $ℓ(2m-1)n$.
	Meanwhile, the bilinear rank is subcubic---%
	$\ran⟨m,m,m⟩≈m^ω$ for some $ω<2.3729$ \cite{LeGall14,AW20}.
	This evinces some room for improvement.
	For instance, \cite[section~VI]{YMA20} provided a different scheme
	that has threshold $2R-1$, where $R$ is the bilinear rank.
	
	This work starts from some fast matrix multiplications,
	inserts redundancies into those algorithms,
	and then concatenates them like a TPC does to its constituent codes.
	As long as our strategy stays close enough to the subcubic complexity,
	the threshold dominates the cubic ones of PDC and EPC,
	and is competitive against the $2R-1$ construction for small matrix sizes.
	Even if our thresholds exceed the competitors', there is a chance
	that our manager declares completion before theirs.
	This reflects the fact that TPC, even though having a bad minimum distance,
	performs fairly well in the stochastic, chaotic wild.

\subsection{Complexity convention}

	Throughout this work, the matrices $A$ and $B$ are thought to have entries
	coming from an algebra $𝔸$ over a large enough field $𝔽$---%
	for instance, $𝔽$ is complex numbers and $𝔸$ is $1{,}024$-by-$1{,}024$ matrices.
	We presume that storage, communication, logic, arithmetic in $𝔽$,
	addition in $𝔸$, and scalar multiplication $𝔽×𝔸→𝔸$ are negligibly cheap;
	and focus on the number of entry multiplications $𝔸×𝔸→𝔸$
	as the measurement of complexity.
	Unless stated otherwise,
	one worker carries out one entry multiplication and replies.
	Other cheap operations can be executed by either the manager or the workers.

\subsection{Organization of the paper}

	\Cref{sec:fast} briefs Strassen's fast multiplication.
	\Cref{sec:check} brings in the idea about
		how to inject checksums into fast multiplication.
	\Cref{sec:Laderman} presents Laderman's instruction
		for $3×3$ matrices followed by checksums.
	\Cref{sec:matroid} elaborates on the relation
		of error-correction to the matroid structure.
	\Cref{sec:recurse} concatenates checked algorithms
		and elaborates why it boosts resiliency.
	\Cref{sec:unify} unionizes the concatenations
		to protect better then the sum of its parts.
	\Cref{sec:sample} contains plots for probabilistic analysis.

\section{Strassen the Fast Matrix Multiplication}\label{sec:fast}

	The main theme of the present paper is to calculate the matrix-matrix multiplication
	\[*
		\bma{\A11&\A12\\\A21&\A22}⋆
		\bma{\B11&\B12\\\B21&\B22}=
		\bma{\C11&\C12\\\C21&\C22}
	\]*
	defined by
	\[*\C ik≔∑_{j=1}^2\A ij\B jk.\]*
	Hereafter, both juxtaposition and star $⋆$ mean multiplication.
	We put $⋆$ at the ones we want to highlight (usually the calculation bottleneck).
	
	Let $\J11,\J12,\J21,\J22$ be the matrices
	with a $1$ at the specified position, and zeros elsewhere;
	they form a basis of $\Mat(2,2)$, the vector space of $2$-by-$2$ matrices.
	Let $\J11ˇ,\J12ˇ,\J21ˇ,\J22ˇ$ be the dual basis.
	Then, the matrix multiplication as a linear map
	\[*⋆：\Mat(2,2)×\Mat(2,2)⟶\Mat(2,2)\]*
	lifts to the \emph{multiplication tensor}
	\begin{align*}
		⟨2,2,2⟩	&	∈\Mat(2,2)ˇ⊗\Mat(2,2)ˇ⊗\Mat(2,2),		\\
		⟨2,2,2⟩	&	≔(\J11ˇ⊗\J11ˇ+\J12ˇ⊗\J21ˇ)⊗\J11			
						+(\J11ˇ⊗\J12ˇ+\J12ˇ⊗\J22ˇ)⊗\J12		\\*
				&	\qquad+(\J21ˇ⊗\J11ˇ+\J22ˇ⊗\J21ˇ)⊗\J21	
						+(\J21ˇ⊗\J12ˇ+\J22ˇ⊗\J22ˇ)⊗\J22.	
	\end{align*}
	For general parameters, the multiplication tensor is
	\[*⟨ℓ,m,n⟩≔∑_{i=1}^ℓ∑_{j=1}^m∑_{k=1}^n\J ijˇ⊗\J jkˇ⊗\J ik
		∈\Mat(ℓ,m)ˇ⊗\Mat(m,n)ˇ⊗\Mat(ℓ,n).\]*
	
	The question that is deeply connected to the computational complexity is,
	What is the rank of $⟨ℓ,m,n⟩$?
	The \emph{rank} is the minimum number $r$ such that
	the multiplication tensor can be expressed as
	\[⟨ℓ,m,n⟩=∑_{s=1}^rX_sˇ⊗Y_sˇ⊗Z_s\eqlabel{for:tensor}\]
	for some $(X_sˇ,Y_sˇ,Z_s)∈\Mat(ℓ,m)ˇ×\Mat(m,n)ˇ×\Mat(ℓ,n)$ for each $1≤s≤r$.
	The rank $r$ is a measure for the number of entry multiplications
	an algorithm needs to evaluates $A⋆B$.
	The less the rank is, the faster the algorithm evaluates.
	From the definition, it is clear that $\ran⟨2,2,2⟩≤8$, or $\ran⟨ℓ,m,n⟩≤ℓmn$.
	The inequality is, surprisingly, not tight.

\subsection{An improvement}

	A manager hires seven workers and requests them to work on
	these seven entry multiplications (one line for each worker):
	\begin{alignat*}2
		\S1		≔&&		(\A11+\A22)		&⋆	(\B11+\B22),	\steplabel{for:Str1}\\
		\S2		≔&&		(\A21+\A22)		&⋆		\B11,		\\
		\S3		≔&&			\A11		&⋆	(\B12-\B22),	\\
		\S4		≔&&			\A22		&⋆	(-\B11+\B21),	\\
		\S5		≔&&		(\A11+\A12)		&⋆		\B22,		\\
		\S6		≔&&		(-\A11+\A21)	&⋆	(\B11+\B12),	\\
		\S7		≔&\;&	(\A12-\A22)		&⋆	(\B21+\B22).	\steplabel{for:Str7}
	\end{alignat*}
	The next theorem by V.~Strassen marked
	the epoch of fast matrix multiplication (FMM).
	
	\begin{thm}\label{thm:Strassen}
		\cite{Strassen69}
		$\S1,\S2…\S7$ determine $\C11,\C12,\C21,\C22$.
	\end{thm}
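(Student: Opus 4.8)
The plan is to exhibit explicit linear combinations of $\S1,\dots,\S7$ that recover each of the four blocks $\C11,\C12,\C21,\C22$, and then verify the identities by expanding. First I would write out each $\S s$ as a sum of four products $\A ij\B jk$ using bilinearity of $\star$; for instance $\S1=\A11\B11+\A11\B22+\A22\B11+\A22\B22$, and similarly for the rest. The goal blocks are $\C ik=\sum_{j=1}^2\A ij\B jk$, so what must be checked is that certain $\pm1$ combinations of the $\S s$ telescope to exactly two surviving terms each.

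The four combinations I would propose are the classical Strassen ones:
\[*
	\C11=\S1+\S4-\S5+\S7,\quad \C12=\S3+\S5,\quad
	\C21=\S2+\S4,\quad \C22=\S1-\S2+\S3+\S6.
\]*
The key step is then a term-by-term bookkeeping: expand the right-hand side of each equation using the expansions of $\S1,\dots,\S7$ from the first step, and confirm that all unwanted products $\A ij\B jk$ (those with $j\neq$ the needed index, or landing in the wrong block) cancel in pairs, leaving precisely $\A i1\B1k+\A i2\B2k$. Since everything is linear over the algebra $\mathbb A$ and no commutativity of $\mathbb A$ is used (the left factors always come from $A$-blocks and the right factors from $B$-blocks, in that order), this is a routine but slightly tedious verification; it is exactly the content of the tensor decomposition \eqref{for:tensor} with $r=7$ specialized to $\langle2,2,2\rangle$.

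The only real obstacle is the sign bookkeeping in the two ``long'' combinations, $\C11$ and $\C22$, where four of the $\S s$ interact and one must track eight signed terms collapsing to two; a convenient way to organize this is to record, for each of the sixteen monomials $\A ij\B jk$, the vector of coefficients it receives from $(\S1,\dots,\S7)$ and then take the stated linear functional. Alternatively, one can phrase the whole thing as checking that $\sum_{s=1}^7 X_s^\vee\otimes Y_s^\vee\otimes Z_s$ equals $\langle2,2,2\rangle$, where $X_s^\vee,Y_s^\vee$ read off the linear combinations of $A$- and $B$-entries in $\S s$ and $Z_s$ reads off which blocks of $C$ the recovery formulas feed $\S s$ into; this is the same computation repackaged, and either route closes the proof.
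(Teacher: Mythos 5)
Your proposal is correct and follows essentially the same route as the paper: it exhibits exactly the same recovery formulas $\C11=\S1+\S4-\S5+\S7$, $\C12=\S3+\S5$, $\C21=\S2+\S4$, $\C22=\S1-\S2+\S3+\S6$, and verifies them by direct bilinear expansion, which is precisely the paper's one-line verification. The additional remarks (no commutativity of $𝔸$ is used, and the computation is the rank-$7$ decomposition of $⟨2,2,2⟩$ in disguise) match the paper's tensor-language corollary.
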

	
	\begin{proof}
		Verify, by the definitions,
		\def\cdot{\phantom{{}+\S0}}
		\begin{align*}
				\steplabel{for:C11}
			\C11	&	=\S1	·		·		+\S4	-\S5	·		+\S7,	\\
			\C12	&	=·		·		\S3		·		+\S5	·		·,		\\
			\C21	&	=·		\S2		·		+\S4	·		·		·,		\\
			\C22	&	=\S1	-\S2	+\S3	·		·		+\S6	·.	
				\steplabel{for:C22}
		\end{align*}
		See \cite{GM17} and its references for more intuitions.
		\Crefrange{for:Str1}{for:Str7} and \crefrange{for:C11}{for:C22}
		together are referred to as the \emph{Strassen algorithm}.
	\end{proof}
	
	Remark:
	How the intermediate variables---the $S$'s---are defined
	gives rise to these \emph{algorithmic tensors} in $\Mat(2,2)ˇ⊗\Mat(2,2)ˇ$:
	\begin{alignat*}2
		⟨\S1⟩	≔&&		(\J11ˇ+\J22ˇ)	&⊗	(\J11ˇ+\J22ˇ),	\steplabel{ten:Str1}\\
		⟨\S2⟩	≔&&		(\J21ˇ+\J22ˇ)	&⊗		\J11ˇ,		\\
		⟨\S3⟩	≔&&			\J11ˇ		&⊗	(\J12ˇ-\J22ˇ),	\\
		⟨\S4⟩	≔&&			\J22ˇ		&⊗	(-\J11ˇ+\J21ˇ),	\\
		⟨\S5⟩	≔&&		(\J11ˇ+\J12ˇ)	&⊗		\J22ˇ,		\\
		⟨\S6⟩	≔&&		(-\J11ˇ+\J21ˇ)	&⊗	(\J11ˇ+\J12ˇ),	\\
		⟨\S7⟩	≔&\;&	(\J12ˇ-\J22ˇ)	&⊗	(\J21ˇ+\J22ˇ).	\steplabel{ten:Str7}
	\end{alignat*}
	Therefore, \cref{thm:Strassen} is equivalent to the next corollary.
	
	\begin{cor}
		$⟨2,2,2⟩$ possesses rank $7$ or less.
		More precisely,
		\[*⟨2,2,2⟩∈\spa(⟨\S1⟩,⟨\S2⟩…⟨\S7⟩)⊗\Mat(2,2).\]*
	\end{cor}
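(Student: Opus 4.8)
The plan is to deduce the corollary directly from \cref{thm:Strassen} by tracking how the intermediate variables $\S1,\dots,\S7$ encode linear forms in the entries of $A$ and $B$. The key observation is that the definition of $\S s$ in \crefrange{for:Str1}{for:Str7} is a pair of linear functionals: one reading off a linear combination of the entries of $A$ (a vector in $\Mat(2,2)ˇ$), the other reading off a linear combination of the entries of $B$ (another vector in $\Mat(2,2)ˇ$). These are exactly the algorithmic tensors $⟨\S s⟩ ∈ \Mat(2,2)ˇ⊗\Mat(2,2)ˇ$ displayed in \crefrange{ten:Str1}{ten:Str7}. So the first step is just to observe that, by construction, $\S s$ evaluated on the pair $(A,B)$ equals $⟨\S s⟩$ contracted against $A⊗B$.

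Next I would use the reconstruction formulas \crefrange{for:C11}{for:C22}. Each $\C ik$ is written as a fixed integer linear combination $\sum_s c\block{ik}s\S s$ with coefficients $c\block{ik}s∈\{-1,0,1\}$ read off from those four display lines. Assembling the four output entries into the matrix $C$, we get $C = \sum_{s=1}^7 \S s\, Z_s$ where $Z_s ≔ \sum_{i,k} c\block{ik}s \J ik ∈ \Mat(2,2)$ is the matrix recording in which output slots $\S s$ appears. Concretely, $Z_1 = \J11+\J22$, $Z_2 = -\J21-\J22+\J11$\,—\,wait, more carefully: $Z_2 = \J21+\J22$ flipped in sign where needed; the point is only that each $Z_s$ is some explicit fixed matrix, and I would not belabor the exact entries in the writeup.

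Combining the two steps: for every $(A,B)$ we have $A⋆B = \sum_{s=1}^7 \langle ⟨\S s⟩, A⊗B\rangle\, Z_s$, and since the pairing is bilinear and $(A,B)$ ranges over all of $\Mat(2,2)×\Mat(2,2)$, this is precisely the tensor identity $⟨2,2,2⟩ = \sum_{s=1}^7 ⟨\S s⟩ ⊗ Z_s$. Reading $⟨\S s⟩ = X_sˇ⊗Y_sˇ$ off \crefrange{ten:Str1}{ten:Str7} exhibits $⟨2,2,2⟩$ as a sum of $7$ simple tensors, so $\ran⟨2,2,2⟩ ≤ 7$; and the displayed membership statement $⟨2,2,2⟩∈\spa(⟨\S1⟩…⟨\S7⟩)⊗\Mat(2,2)$ is immediate because each summand lies in $⟨\S s⟩⊗\Mat(2,2) ⊆ \spa(⟨\S1⟩…⟨\S7⟩)⊗\Mat(2,2)$.

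The main obstacle here is purely bookkeeping: matching the sign conventions in \crefrange{for:Str1}{for:Str7} with those in \crefrange{for:C11}{for:C22} so that the $Z_s$ come out correctly, and making sure the dualization in \crefrange{ten:Str1}{ten:Str7} is transcribed faithfully. There is no genuine mathematical difficulty — the content is entirely contained in \cref{thm:Strassen}, and the corollary is a reformulation in tensor language. I would present it as a short paragraph rather than a displayed multi-line computation, since the latter would just reproduce the equations already on the page.
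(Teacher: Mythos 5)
Your proposal is correct and is essentially the paper's own argument: the paper likewise expands $⟨2,2,2⟩$ via \crefrange{for:C11}{for:C22} and regroups the sum by the $⟨\S s⟩$, obtaining $⟨2,2,2⟩=∑_s⟨\S s⟩⊗Z_s$ with each summand of the form $X_sˇ⊗Y_sˇ⊗Z_s$ (e.g.\ $Z_1=\J11+\J22$, $Z_2=\J21-\J22$), which gives both the rank bound and the span membership. The only blemish is your unresolved hesitation about the sign of $Z_2$; in a final writeup you should pin the $Z_s$ down explicitly, but this is bookkeeping, not a gap.
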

	
	\begin{proof}
		$⟨2,2,2⟩$, the multiplication tensor in $\Mat(2,2)ˇ⊗\Mat(2,2)ˇ⊗\Mat(2,2)$, is
		\begin{align*}
			&	(⟨\S1⟩+⟨\S4⟩-⟨\S5⟩+⟨\S7⟩)⊗\J11	+(⟨\S3⟩+⟨\S5⟩)⊗\J12					\\*
			&	\qquad+(⟨\S2⟩+⟨\S4⟩)⊗\J21		+(⟨\S1⟩+⟨\S3⟩-⟨\S2⟩+⟨\S6⟩)⊗\J22,	
		\end{align*}
		which is equal to 
		\begin{align*}
			&	⟨\S1⟩⊗(\J11+\J22)+⟨\S2⟩⊗(\J21-\J22)+⟨\S3⟩⊗(\J12-\J22)				\\*
			&	\qquad+⟨\S4⟩⊗(\J11+\J21ˇ)ˇ+⟨\S5⟩⊗(\J12-\J11)+⟨\S6⟩⊗\J22+⟨\S7⟩⊗\J11.	
		\end{align*}
		Now $⟨\S1⟩⊗(\J11+\J22)$ and the other six summands are of the form
		$X_sˇ⊗Y_sˇ⊗Z_s$ as in \cref{for:tensor}, hence the upper bound of $7$.
		
		Remark:
		It turns out that $7$ is tight \cite{Pan81}.
		Another remark:
		Traditionally, it is $⟨\S1⟩⊗(\J11+\J22)$ that is defined to be
		an algorithmic tensor that constitutes the Strassen algorithm,
		and this theorem should have been stated as
		``$⟨2,2,2⟩∈$ the span of the seven algorithmic tensors''
		without the ``${}⊗\Mat(2,2)$'' part.
		The remark at the end of \cref{pro:8outof9} explains why we choose not to.
	\end{proof}
	
	To sum up, seven entry products control the product of two $2$-by-$2$ matrices.
	The problem is, the workers that will be calculating $\S1$--$\S7$
	may cease to respond, hence the necessity to build
	a code around the $S$'s to ensure erasure-resiliency.
	As pointed out in the introduction,
	repetition is sufficient but inefficient.
	Coded multiplication is shown to be superior,
	and is the solution applied in the sequel.

\section{Strengthen for Distributed Computation}\label{sec:check}

	In this section, we define some ``random'' linear combinations of $\C ij$.
	These coefficients ought to be large, random numbers in $𝔽$,
	but we fix some small realizations to work with.
	When those combinations are linearly independent of
	Strassen's blueprint---the $S$'s---they become promising checksums
	that help the manager derive any missing datum.
	
	We will add three checksums to the Strassen algorithm, one at a time.

\subsection{First checksum}

	The manager makes two extra workers compute,
	along with $\S1$--$\S7$, these as backups
	\begin{alignat*}2
		\S8		≔&&		(\A11+2\A21)	&⋆	(-\B11+\B12),	\\
		\S9		≔&\;&	(\A12+2\A22)	&⋆	(-\B21+\B22).	
	\end{alignat*}
	
	We can now make our foremost contribution.
	
	\begin{pro}\label{pro:8outof9}
		Symbols $\S1,\S2…\S9$ form a $[9,8,2]$-MDS code.
		Any eight out of the nine determine $\C11,\C12,\C21,\C22$.
		To put it another way, one of the nine workers may straggle/crash but
		the manager will still manage to assemble $C$ from the other workers' responses.
	\end{pro}
	
	\begin{proof}
		Expand the following associativity equation
		\[*\bma{1&2}C\bma{-1\\1}=\(\bma{1&2}A\)⋆\left(B\bma{-1\\1}\right).\]*
		The \RHS/ is
		\[*\bma{\A11+2\A21&\A12+2\A22}⋆\bma{-\B11+\B12\\-\B21+\B22},\]*
		which is an inner product whose first summand/entry multiplication is $\S8$,
		and the second is $\S9$.
		In the meantime, the \LHS/ of the equation is $-\C11+\C12-2\C21+2\C22$,
		where each $\C ij$ is a linear combination of $\S1$--$\S7$.
		Expand the \LHS/ down to the $S$'s;
		they induce a parity-check equation
		\[\S1	-4\S2	+3\S3	-3\S4	+2\S5	+2\S6	-\S7
			=\S8+\S9.\eqlabel{equ:Q89}\]
		This yields a single parity-check code among symbols $\S1$--$\S9$.
		Therefore, any missing symbol can be restored by the rest.
		And then, the manager assimilates $\C11$ to $\C22$ via
		the second half of the Strassen algorithm (\crefrange{for:C11}{for:C22}).
		
		Remark:
		The proposition is equivalent to saying
		\[*⟨2,2,2⟩∈\spa\(†any eight out of †⟨\S1⟩,⟨\S2⟩…⟨\S9⟩\)⊗\Mat(2,2)\]*
		where $⟨\S8⟩,⟨\S9⟩$ are defined in the way \crefrange{ten:Str1}{ten:Str7} are.
	\end{proof}
	
	We name this scheme \emph{Pluto code},
	after the International Astronomical Union reclassified Pluto as a dwarf planet,
	lowering the number of planets in the Solar System from nine to eight.
	
	Pluto code can go one step further;
	make use of two additional workers to oppose to one more erasure.

\subsection{Second checksum}

	Have two more workers compute, besides $\S1$--$\S9$,
	\begin{alignat*}2
		\S{10}	≔&&		(3\A11-\A21)	&⋆	(\B11+2\B12),	\\
		\S{11}	≔&\;&	(3\A12-\A22)	&⋆	(\B21+2\B22).	
	\end{alignat*}
	The upcoming proposition explains the resiliency provided by the first four backups.
	
	\begin{pro}\label{pro:9outof11}
		Any nine out of $\S1,\S2…\S{11}$ determine $\C11,\C12,\C21,\C22$.
		To put it another way, two of the eleven computations
		may fail without endangering the overall result.
	\end{pro}
	
	\begin{proof}
		First of all, expand a modified equation
		\[*\bma{3&-1}C\bma{1\\2}=\(\bma{3&-1}A\)⋆\left(B\bma{1\\2}\right).\]*
		The \LHS/ is $3\C11+6\C12-\C21-2\C22$, the \RHS/ being
		\[*\bma{3\A11-\A21&3\A12-\A22}⋆\bma{\B11+2\B12\\\B21+2\B22}.\]*
		They induce an equation
		\[\S1	+\S2	+4\S3	+2\S4	+3\S5	-2\S6	+3\S7
			=\S{10}+\S{11}.\eqlabel{equ:Qab}\]
		Consequence:
		$\S1$--$\S7$, $\S{10}$, and $\S{11}$ form a single parity-check code.
		
		Now look at the parity-check matrix for $\S1$--$\S{11}$:
		\[\setcounter{MaxMatrixCols}{11}\bma{
			1	&	-4	&	3	&	-3	&	2	&	2	&	-1	&-1	&-1	&	&	\\
			1	&	1	&	4	&	2	&	3	&	-2	&	3	&	&	&-1	&-1	
		}\eqlabel{mat:H11}\]
		The first row corresponds to \cref{equ:Q89}, the second to \eqref{equ:Qab}.
		It is clear that this matrix does not check an MDS code.
		That being said, one sees that
		all $2$-by-$2$ minors of the first seven columns are nonzero.
		So this code fixes two erasures in the first seven symbols%
		---the symbols we care about.
		Once $\S1$--$\S7$ are secured, continue the original Strassen algorithm.
		
		Takeaway:
		We do not need the individual values of $\S8$ and $\S9$, but their sum.
		The same role is played by $\S{10}$ and $\S{11}$.
		Remark:
		This proposition is equivalent to saying
		\[*⟨2,2,2⟩∈\spa\(†any nine out of †⟨\S1⟩,⟨\S2⟩…⟨\S{11}⟩\)⊗\Mat(2,2)\]*
		where $⟨\S{10}⟩,⟨\S{11}⟩$ are defined in a similar manner.
		We will not use the tensor language in the remaining of the paper.
		But it is worth keeping in mind that the determination of $\C ij$
		is nothing more than the membership of the corresponding tensors.
	\end{proof}
	
	\begin{cor}\label{cor:MDS2}
		These nine symbols $\S1,\S2…\S7,(\S8+\S9),(\S{10}+\S{11})$
		amount to a $[9,7,3]$-MDS code.
	\end{cor}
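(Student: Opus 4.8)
The plan is to exhibit an explicit $2×9$ parity-check matrix for the nine symbols $\S1,\S2…\S7,(\S8+\S9),(\S{10}+\S{11})$ and then certify the MDS property one pair of coordinates at a time, reusing nearly all of the arithmetic already done for \cref{pro:9outof11}.

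First I would record the parity-check matrix. Regarding $\S8+\S9$ and $\S{10}+\S{11}$ as single symbols, \cref{equ:Q89,equ:Qab} turn into two $𝔽$-linear relations satisfied by the $9$-tuple $(\S1,\S2…\S7,\S8+\S9,\S{10}+\S{11})$, with coefficient matrix
\[*H≔\bma{
1 & -4 & 3 & -3 & 2 & 2 & -1 & -1 & 0 \\
1 & 1 & 4 & 2 & 3 & -2 & 3 & 0 & -1
}.\]*
This $H$ is \cref{mat:H11} with its last four columns merged in pairs; in particular its eighth column is $(-1,0)^⊤$ and its ninth is $(0,-1)^⊤$. It has rank $2$, and the last two symbols are fixed linear combinations of the free symbols $\S1$--$\S7$, so the nine symbols form the $[9,7]$ linear code $\ker H$; the Singleton bound then caps its minimum distance at $3$.

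Next I would check that every pair of columns of $H$ is linearly independent, equivalently that each of the $\binom{9}{2}=36$ many $2×2$ minors of $H$ is nonzero---precisely the assertion that $\ker H$ is MDS with $d=3$. The $\binom{7}{2}=21$ minors supported on the first seven columns are exactly the ones already observed to be nonzero inside the proof of \cref{pro:9outof11}. Each minor formed from the eighth column $(-1,0)^⊤$ and a column $j≤7$ equals $±$ the second-row entry of column $j$, one of $±1,±1,±4,±2,±3,±2,±3$; each minor formed from the ninth column $(0,-1)^⊤$ and a column $j≤7$ equals $±$ the first-row entry of column $j$, one of $±1,±4,±3,±3,±2,±2,±1$; and the minor on columns $8$ and $9$ is $\det\bsm{-1&0\\0&-1}=1$. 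All $36$ minors are nonzero, so $\ker H$ is a $[9,7,3]$-MDS code. Hence any seven of the nine symbols determine the whole nine-tuple, in particular $\S1$--$\S7$, after which \crefrange{for:C11}{for:C22} reconstructs $C$.

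I do not expect a genuine obstacle here: the argument is a single bounded, finite check, and the only point worth a moment's thought is the block structure---collapsing the two redundant backup pairs leaves two standard-basis columns in $H$, which makes the fifteen ``new'' minors immediate, while the remaining twenty-one are inherited from \cref{pro:9outof11}.
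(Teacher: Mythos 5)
Your proof is correct and follows essentially the same route as the paper: it writes down the same $2×9$ parity-check matrix obtained by merging the backup columns of \cref{mat:H11} and certifies the MDS property via the $2×2$ minors, with the first twenty-one inherited from the proof of \cref{pro:9outof11} and the rest immediate from the two standard-basis-type columns. The paper simply asserts that this merged matrix is MDS, so your explicit minor count is just a fuller write-up of the same argument.
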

	
	\begin{proof}
		The parity-check matrix
		\[*\bma{
			1	&	-4	&	3	&	-3	&	2	&	2	&	-1	&-1	&	\\
			1	&	1	&	4	&	2	&	3	&	-2	&	3	&	&-1	
		}\]*
		is now MDS.
		(The last four columns of \cref{mat:H11} are merged into two.)
	\end{proof}
	
	\begin{cor}
		Tolerating one error, $\S1,\S2…\S{11}$ determine
		$\C11$, $\C12$, $\C21$, and $\C22$.
	\end{cor}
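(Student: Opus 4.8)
The plan is to reduce the statement to \cref{cor:MDS2}. Let $r_1, r_2, \dots, r_{11}$ be the values the manager actually receives from the eleven workers, and suppose at most one $r_s$ differs from the intended $\S s$. Form the nine quantities $r_1, \dots, r_7$ together with $r_8 + r_9$ and $r_{10} + r_{11}$. A single corrupted worker has its index inside exactly one of the blocks $\{1\}, \{2\}, \dots, \{7\}, \{8,9\}, \{10,11\}$, so this $9$-tuple agrees with the honest tuple $(\S1, \dots, \S7, \S8 + \S9, \S{10} + \S{11})$ in all but at most one coordinate: an error in one of positions $1$ through $7$ stays in that coordinate, while an error in $\{8,9\}$ lands in the $8$th coordinate of the tuple and an error in $\{10,11\}$ in the $9$th. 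By \cref{cor:MDS2} the honest tuple is a codeword of a $[9,7,3]$-MDS code, and a code of minimum distance $3$ corrects one symbol error; hence the manager reconstructs $\S1, \dots, \S7$ exactly and then reads off $\C11, \C12, \C21, \C22$ via \crefrange{for:C11}{for:C22}.

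To make the one-error correction concrete I would run syndrome decoding against the $2 \times 9$ parity-check matrix of \cref{cor:MDS2}. If the syndrome $σ$ of the received $9$-tuple is $0$, then no error occurred---that matrix has no zero column, and scaling a nonzero $𝔸$-element by a nonzero $𝔽$-scalar stays nonzero, so a lone nonzero error has nonzero syndrome---and $r_1, \dots, r_7$ are already correct. If $σ\neq 0$, it is a (left) $𝔸$-multiple of exactly one column of the matrix; since the columns of an MDS parity-check matrix are pairwise $𝔽$-non-proportional, the projective ratio of the two entries of $σ$ pins down that column, hence the erroneous coordinate, and then its value, so the correction is immediate.

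The one thing to watch is that $\S1, \dots, \S{11}$ do not themselves form a genuine distance-$3$ code: in \cref{mat:H11} columns $8$ and $9$ coincide, as do columns $10$ and $11$, so a corruption confined to $\{8,9\}$ or to $\{10,11\}$ cannot be located. This is harmless because such a corruption leaves $\S1$--$\S7$---the only symbols feeding \crefrange{for:C11}{for:C22}---untouched, which is exactly what the reduction above exploits. Equivalently, one can argue straight from \cref{mat:H11}: its eleven columns fall into the nine proportionality classes $\{1\}, \dots, \{7\}, \{8,9\}, \{10,11\}$---columns $1$--$7$ are pairwise non-proportional by \cref{pro:9outof11}, and none of them is a multiple of $(1,0)^⊤$ or $(0,1)^⊤$ since neither row of the matrix has a zero among its first seven entries---so the syndrome always reveals the class of the error; when that class is a singleton $\{j\}$ with $j ≤ 7$ the error is corrected outright, and otherwise $\S1$--$\S7$ were error-free to begin with. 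There is no real obstacle here beyond noticing this reduction; everything else is routine.
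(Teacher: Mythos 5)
Your proposal is correct and follows the same route as the paper, which simply declares the corollary ``an aftermath of the minimum distance being $3$'' of the $[9,7,3]$-MDS code in \cref{cor:MDS2}; you merely spell out the reduction (a single corrupted worker perturbs at most one coordinate of the nonuple $\S1,\dotsc,\S7,(\S8+\S9),(\S{10}+\S{11})$) and the syndrome decoding that the paper leaves implicit. The extra care you take about the coinciding columns $8,9$ and $10,11$ of \cref{mat:H11} is consistent with the paper's own discussion of why $\S1$--$\S{11}$ are not MDS while the aggregated nonuple is.
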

	
	\begin{proof}
		An aftermath of the minimum distance being $3$.
		
		Remark:
		This work does not put too much emphasis on the error correction aspect.
		This corollary is to remind ourselves that the ability to fill in erasures
		is bound to the ability to detect and correct errors.
	\end{proof}
	
	Whenever either $\S{10}$ or $\S{11}$ is erased,
	we can repair the erased one using any seven out of $\S1$--$\S7$ and $(\S8+\S9)$.
	Whenever both $\S{10},\S{11}$ are erased, we can repair neither
	because all we can learn is $(\S{10}+\S{11})$.
	This is another way to contrast the fact that $\S1$--$\S{11}$ are not MDS
	with the fact that $\S1$--$\S7$, $(\S8+\S9)$, and $(\S{10}+\S{11})$ are MDS.
	
	The phenomenon above is nowhere near a fatal deficit because the general direction
	is to repair the first seven symbols to pave the way for Strassen.
	Notwithstanding that, \cref{sec:recurse} exposes that repairing
	the backup symbols $\S8$--$\S{11}$ could be helpful to the overall job.
	Before that, we attempt to install redundancy for one last time.

\subsection{Third checksum}

	The manager can create as many backups as desired.
	So far the pattern seems to be:
	To tolerate one more erasure, hire two more workers.
	The pattern persists when the third checksum is added,
	whereas new difficulties come into sight.
	
	Let workers compute, in addition to $\S1$--$\S{11}$,
	\begin{alignat*}2
		\S{12}	≔&&		(2\A11-3\A21)	&⋆	(2\B11+\B12),	\\
		\S{13}	≔&\;&	(2\A12-3\A22)	&⋆	(2\B21+\B22).	
	\end{alignat*}
	Then an equation connecting $\S1$--$\S7$ to $\S{12}$ and $\S{13}$ is obtained from
	\[*\bma{2&-3}C\bma{2\\1}=\(\bma{2&-3}A\)⋆\left(B\bma{2\\1}\right).\]*
	The parity-check matrix for
	$\S1$--$\S7,(\S8+\S9),(\S{10}+\S{11}),(\S{12}+\S{13})$ is thus:
	\[\bma{	
		1	&	-4	&	3	&	-3	&	2	&	2	&	-1	&-1	&	&	\\
		1	&	1	&	4	&	2	&	3	&	-2	&	3	&	&-1	&	\\
		1	&	-3	&	-1	&	-2	&	-2	&	-3	&	4	&	&	&-1	\\[-1ex]
		•	&	•	&		&	•	&		&		&		&	&	&	\\[-1ex]
		•	&		&	•	&		&	•	&		&		&	&	&	\\[-1ex]
		•	&		&		&		&		&	•	&	•	&	&	&	\\[-5ex]
	}\rule[-7ex]{0pt}{0pt}\eqlabel{mat:Qcd}\]
	There are three zero minors, as indicated by the bullets.
	
	Some minors vanish not because we chose inappropriate $\S{12}$ and $\S{13}$.
	Rather, this is because ${(\S2-\S1)},\S3,{(\S4+\S1)},\S5,\S6,\S7$
	determine $\C11$--$\C22$ (look again at \crefrange{for:C11}{for:C22}).
	In other words, the best we can learn about $\S1,\S2,\S4$ given
	the other $S$'s and $C$'s are the linear combinations $(\S2-\S1)$ and $(\S4+\S1)$.
	For the other two zero minors, similar aggregation can be stated---%
	one is to replace $\S1,\S3,\S5$ by linear combinations $(\S3+\S1)$ and $(\S5-\S1)$;
	the other is replacing $\S1,\S6,\S7$ by $(\S6+\S1)$ and $(\S7+\S1)$.
	
	In conclusion, we cannot hope to recover, say,
	$(\S1,\S2,\S4)$ as a triple given $\S3$ and $\S5$--$\S{13}$,
	just like we cannot recover $(\S8,\S9)$ as a pair given the others.
	That does not stop us from inferring $\C11$--$\C22$, though.
	The next lemma and proposition sum up the situation.
	
	\begin{lem}\label{lem:MDS3}
		Each of the following three nonuples ($\,9$-tuples) is a $[9,6,4]$-MDS code: \\
		\smaller
		\def\backups{(\S8+\S9),(\S{10}+\S{11}),(\S{12}+\S{13})}
		$\((\S2-\S1),\S3,(\S4+\S1),\S5,\S6,\S7,\backups\)$;\AB
		$\(\S2,(\S3+\S1),\S4,(\S5-\S1),\S6,\S7,\backups\)$;\AB
		$\(\S2,\S3,\S4,\S5,(\S6+\S1),(\S7+\S1),\backups\)$.
	\end{lem}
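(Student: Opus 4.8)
The plan is to reduce the lemma to a single finite computation, by checking that the three substitutions defining the nonuples in the statement all transform \cref{mat:Qcd} into one and the same matrix. Write $c_1,\dots,c_7$ for the columns of \cref{mat:Qcd} indexed by $\S1,\dots,\S7$. The first thing to record is the chain of identities
\[*
	c_1=c_4-c_2=c_3-c_5=c_6+c_7=(1,1,1)^⊤,
\]*
read straight off the entries. Each of these is a three-term dependence among columns of \cref{mat:Qcd} — precisely one of the three bulleted vanishing minors — and, read the other way, it says that the combination of $\S1,\S i,\S j$ occurring in each row of \cref{mat:Qcd}, for the relevant pair $(i,j)∈\{(2,4),(3,5),(6,7)\}$, already lies in the span of the two replacement symbols of the corresponding nonuple. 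Thus, for instance, under $\S2↦(\S2-\S1)$ and $\S4↦(\S4+\S1)$ each of the three parity-check relations of \cref{mat:Qcd} turns into a relation among the nine new symbols in which the $\S1$-column has disappeared while the $\S2$- and $\S4$-columns (now carrying the new labels, in the same slots) are left unchanged; likewise for the other two nonuples, where the replaced pair is again drawn from $\{2,\dots,7\}$. Consequently all three nonuples have literally the same parity-check matrix, namely the one obtained from \cref{mat:Qcd} by deleting the $\S1$-column,
\[*
	H'≔\bma{
		-4	&	3	&	-3	&	2	&	2	&	-1	&	-1	&		&		\\
		1	&	4	&	2	&	3	&	-2	&	3	&		&	-1	&		\\
		-3	&	-1	&	-2	&	-2	&	-3	&	4	&		&		&	-1
	},
\]*
so it suffices to analyze $H'$.

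Next I would fix the parameters. The block $-I_3$ in the last three columns of $H'$ gives $\ran H'=3$, whence $\ker H'$ is a $[9,6]$ code. To see that each nonuple's code is exactly $\ker H'$ — in particular of dimension $6$, not $7$ — note that the code cut out by \cref{mat:Qcd} has dimension $10-3=7$, while the linear map realizing the substitution has one-dimensional kernel: that kernel is spanned by the vector of \cref{mat:Qcd} with $(\S1,\S2,\S4)=(1,1,-1)$ and every other coordinate equal to $0$ (respectively the analogous vector for the other two nonuples), and it lies in the code of \cref{mat:Qcd} by virtue of $1-4+3=0$, $1+1-2=0$, and $1-3+2=0$. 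Hence the image of that code under the substitution has dimension $7-1=6$ and, being annihilated by $H'$, equals $\ker H'$.

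Finally, a $[9,6]$ code is MDS — equivalently has minimum distance $9-6+1=4$ — exactly when every $3\times3$ minor of its parity-check matrix is nonzero. Using the $-I_3$ block of $H'$, this comes down to verifying that the $3\times6$ integer array formed by the first six columns of $H'$ has no vanishing entry, no vanishing $2\times2$ minor, and no vanishing $3\times3$ minor — a finite collection of small determinants, each of which one computes to be a nonzero integer; consequently each nonuple is a $[9,6,4]$-MDS code. (The same tabulation shows the conclusion persists over any field whose characteristic avoids the few primes dividing those determinants.) I do not expect a genuine obstacle: the single idea is to notice that the substitutions are arranged precisely to annihilate the $\S1$-column of \cref{mat:Qcd} — the very collapse, read backwards, that produced the three bulleted minors — after which the MDS property is just the routine minor check above, entirely in the spirit of the one-line verification behind \cref{cor:MDS2}.
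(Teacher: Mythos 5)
Your proposal is correct and follows essentially the same route as the paper: all three nonuples are governed by the single parity-check matrix obtained by deleting the $\S1$-column of \eqref{mat:Qcd} (your column identities $c_1=c_4-c_2=c_3-c_5=c_6+c_7$ make explicit why the substitutions do this), and the MDS property is then a finite nonvanishing-minor check. The extra dimension count and the reduction of the minor check via the $-I_3$ block are harmless elaborations of what the paper leaves implicit.
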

	
	\begin{proof}
		They share the same parity-check matrix
		\[*\bma{
			-4	&	3	&	-3	&	2	&	2	&	-1	&-1	&	&	\\
			1	&	4	&	2	&	3	&	-2	&	3	&	&-1	&	\\
			-3	&	-1	&	-2	&	-2	&	-3	&	4	&	&	&-1	
		}\]*
		(deleting the first column from \cref{mat:Qcd}) and this is MDS.
	\end{proof}
	
	\begin{pro}\label{pro:10outof13}
		Any ten out of $\S1,\S2…\S{13}$ determine $\C11,\C12,\C21,\C22$.
		To rephrase it, Pluto code tolerates
		three erasures out of thirteen computations.
	\end{pro}
	
	\begin{proof}
		By the lemma.
	\end{proof}
	
	It is no coincidence that the rows of the parity-check matrix
	are permutations of each other up to signs.
	We are informed by \cite{Burichenko14} that a group action transfers
	$-\S2,\S3,\S4,-\S5,\S6,\S7$ to $\S3,\S7,-\S5,\S6,\S4,-\S2$.
	We choose carefully the definitions of $\S8,\S9$ and derive,
	according to the group law, the definitions of $\S{10}$--$\S{13}$.
	More on that in \cref{app:triangle}.
	
	It is also intentional that the parity-check matrix
	contains but multiples of $2$ and $3$.
	That implies that the one parity-check $\S1$--$\S9$
	is MDS over all rings whose $2$ and $3$ are units.
	For $\S1$--$\S{11}$ to become MDS (in the manner of \cref{cor:MDS2}),
	there are $\binom92=45$ determinants to be inverted.
	Up to symmetries and signs, there are $23$ unique determinants,
	and they are multiples of the first seven primes assuming our choices.
	(Sidenote: seven is not optimal!)
	For $\S1$--$\S{13}$ to become MDS (in the regard of \cref{lem:MDS3}),
	one can automate the checking that
	the determinants comprise the first seven primes.
	
	From this point onward, it is not hard to see that
	every two more workers help conquer one more erasure/straggler.
	We stop at three checksums because the code rate is getting worse.
	If a manager wants to calculate $2$-by-$2$ matrix products with as many as
	thirteen workers, it may make use of EPC and enjoy a threshold of nine.
	See also \cref{app:RS} for an approach to build Pluto with EPC-flavor checksums.

\subsection{Tanner graph}

	We conclude this section with Tanner graphs in \cref{fig:Tanner911,fig:Tanner13}.
	Tanner graphs refine and visualize the relation among $\S1$--$\S{13}$
	stated in \cref{pro:8outof9}, \cref{cor:MDS2}, and \cref{lem:MDS3}.
	That is all we want to say about $2$-by-$2$ matrices.
	
	The next section develops Pluto codes upon $3$-by-$3$ matrices.

\begin{figure}
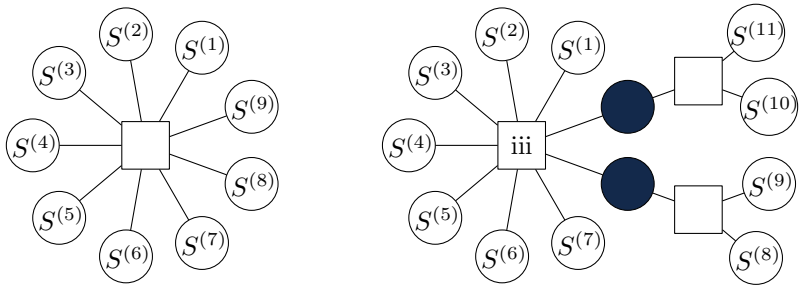

	\tikz[Tanner]{
		\draw node[cn](C){}
			foreach\s in{1,...,9}{
				(\s*40+20:1.5)node(S\s)[vn]{$\S\s\!$}(C)--(S\s)
			}
		;
		\tikzset{xshift=5cm}
		\draw node[cn](C){iii}
			foreach\s in{1,...,7}{
				(\s*40+20:1.5)node(S\s)[vn]{$\S\s\!$}(C)--(S\s)
			}
		;
		\foreach\s in{0,1}{
			\draw
				(\s*40-20:1.5)node[hn](H\s){}(C)--(H\s)
				+(\s*40-20:1)node[cn](C\s){}(H\s)--(C\s)
				foreach\t in{0,1}{
					(C\s)+(\s*20+\t*60-40:1)node[vn](V\s\t)
					{$\S{\the\numexpr8+\s*2+\t}\!$}(C\s)--(V\s\t)
				}
			;
		}
	}
	\caption{
		To the left:
		The Tanner graph that links $\S1$--$\S9$ (cf.\ \cref{pro:8outof9}).
		To the right:
		The Tanner graph that links $\S1$--$\S{11}$ (cf. \cref{cor:MDS2}).
		The check node labeled iii is an ``MDS check'' of minimum distance $3$---%
		it repairs any two variable nodes given the rest.
		Shaded are hidden (punctured) variable nodes,
		and they are $(\S8+\S9)$ and $(\S{10}+\S{11})$, respectively.
	}\label{fig:Tanner911}
\end{figure}

\begin{figure}
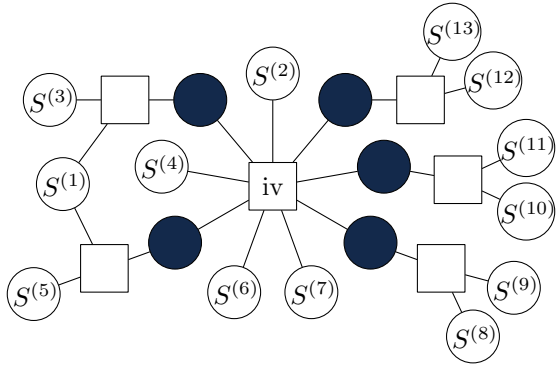

	\tikz[Tanner]{
		\draw
			node[cn](C){iv}
			foreach\s in{2,4,6,7}{
				(\s*40+10:1.5)node(S\s)[vn]{$\S\s\!$}(C)--(S\s)
			}
			(S4)+(190:4/3)node[vn](S1){$\S1\!$}
			foreach\s in{3,5}{
				(\s*40+10:1.5)node(H\s)[hn]{}(C)--(H\s)
				(H\s)+(\s*10+150:1)node[cn](C\s){}(S1)--(C\s)--(H\s)
				(C\s)+(\s*10+150:1)node[vn](S\s){$\S\s\!$}(C\s)--(S\s)
			}
			foreach\s in{8,10,12}{
				(\s*20+170:1.5)node[hn](H\s){}(C)--(H\s)
				(H\s)+(\s*5-60:1)node[cn](C\s){}(H\s)--(C\s)
				(C\s)+(\s*20-225:1)node[vn](S\s){$\S\s\!$}(C\s)--(S\s)
				(C\s)+(\s*20-175:1)node[vn](S\s){$\S{\the\numexpr\s+1}\!$}(C\s)--(S\s)
			}
		;
	}
	\caption{
		A Tanner graph for $\S1$--$\S{13}$.
		Cf.\ the second nonuple in \cref{lem:MDS3}.
		The hidden variables to the left are $(\S3+\S1)$ and $(\S5-\S1)$;
		to the right are $(\S8+\S9)$, $(\S{10}+\S{11})$, and $(\S{12}+\S{13})$.
	}\label{fig:Tanner13}
\end{figure}

\section{Laderman Algorithm and Checksums}\label{sec:Laderman}

	In this section, let us turn our attention to $3$-by-$3$ matrix multiplication:
	\[
		\bma{\A11&\A12&\A13\\\A21&\A22&\A23\\\A31&\A32&\A33}⋆
		\bma{\B11&\B12&\B13\\\B21&\B22&\B23\\\B31&\B32&\B33}=
		\bma{\C11&\C12&\C13\\\C21&\C22&\C23\\\C31&\C32&\C33}
	\]
	Laderman \cite{Laderman76} holds the current record that,
	instead of $27$ entry multiplications, $23$ suffice.
	Here are the definitions of $\L1$--$\L{23}$:
	{\def\diamondsuit(#1){(\hbox to140pt{\smaller[2]$#1$})}
	\def\clubsuit(#1){(\hbox to146pt{\smaller[2]$#1$})}
	\begin{alignat*}2
		\L1		≔&&	♢(\A11+\A12+\A13-\A21-\A22-\A32-\A33)	&⋆	\B22,				\\
		\L2		≔&&				(\A11-\A21)		&⋆	(-\B12+\B22),					\\
		\L3		≔&&				\A22	&⋆	♣(-\B11+\B12+\B21-\B22-\B23-\B31+\B33),	\\
		\L4		≔&&			(-\A11+\A21+\A22)	&⋆	(\B11-\B12+\B22),				\\
		\L5		≔&&				(\A21+\A22)		&⋆	(-\B11+\B12),					\\
		\L6		≔&&					\A11		&⋆		\B11,						\\
		\L7		≔&&			(-\A11+\A31+\A32)	&⋆	(\B11-\B13+\B23),				\\
		\L8		≔&&				(\A11-\A31)		&⋆	(-\B13+\B23),					\\
		\L9		≔&&				(\A31+\A32)		&⋆	(-\B11+\B13),					\\
		\L{10}	≔&&	♢(\A11+\A12+\A13-\A22-\A23-\A31-\A32)	&⋆	\B23,				\\
		\L{11}	≔&&				\A32	&⋆	♣(-\B11+\B13+\B21-\B22-\B23-\B31+\B32),	\\
		\L{12}	≔&&			(-\A13+\A32+\A33)	&⋆	(\B22+\B31-\B32),				\\
		\L{13}	≔&&				(\A13-\A33)		&⋆	(\B22-\B32),					\\
		\L{14}	≔&&					-\A13		&⋆		-\B31	,					\\
		\L{15}	≔&&				(\A32+\A33)		&⋆	(-\B31+\B32),					\\
		\L{16}	≔&&			(-\A13+\A22+\A23)	&⋆	(\B23+\B31-\B33),				\\
		\L{17}	≔&&				(\A13-\A23)		&⋆	(\B23-\B33),					\\
		\L{18}	≔&&				(\A22+\A23)		&⋆	(-\B31+\B33),					\\
		\L{19}	≔&&					\A12		&⋆		\B21,						\\
		\L{20}	≔&&					\A23		&⋆		\B32,						\\
		\L{21}	≔&&					-\A21		&⋆		-\B13,						\\
		\L{22}	≔&&					-\A31		&⋆		-\B12,						\\
		\L{23}	≔&\;&				\A33		&⋆		\B33.						
	\end{alignat*}}%
	So far, this is a character-by-character copy from \cite{Laderman76} except that
	$\L8$, $\L{14}$, $\L{21}$, and $\L{22}$ have both their factors changing signs,
	which leave $\L8$, $\L{14}$, $\L{21}$, and $\L{22}$ themselves unchanged.
	This highlights a dihedral symmetry among the $23$ assignments
	as discussed in \cref{app:dihedral}.
	Listed below is the manager's recipe to cook $C$;
	we reorder $L$'s to honor said symmetry:
	{\everymath{\displaystyle}
	\PMS\fakeL{width("$L\series{00}$")}
	\def\L#1{\rlap{$L\series{#1}$}\kern\fakeL pt}
	\PMS\halfL{width("${}+L\series{00}$")/2}
	\def~{\kern\halfL pt}
	\begin{align*}
		\C11	&	=\L6	+\L{14}	~	~	~	~	+\L{19},					\\
		\C12	&	=\L{14}	+\L6	+\L4	+\L5	+\L1	+\L{15}	+\L{12},	\\
		\C21	&	=\L6	+\L{14}	+\L{16}	+\L{17}	+\L3	+\L2	+\L4,		\\
		\C13	&	=\L{14}	+\L6	+\L7	+\L9	+\L{10}	+\L{18}	+\L{16},	\\
		\C31	&	=\L6	+\L{14}	+\L{12}	+\L{13}	+\L{11}	+\L8	+\L7,		\\
		\C22	&	=~	\L6		~	+\L4	+\L5	+\L{20}	+\L2,				\\
		\C23	&	=~	\L{14}	~	+\L{16}	+\L{17}	+\L{21}	+\L{18},			\\
		\C33	&	=~	\L6		~	+\L7	+\L9	+\L{23}	+\L8,				\\
		\C32	&	=~	\L{14}	~	+\L{12}	+\L{13}	+\L{22}	+\L{15}.			
	\end{align*}}%
	These $23+9$ assignments are collectively called the \emph{Laderman algorithm}.
	We next demonstrate how to install redundancy.

\subsection{First checksum}

	To interleave backups with the Laderman algorithm,
	the following equation is utilized:
	\[*\bma{1&2&3}C\bma{2\\-1\\3}=\(\bma{1&2&3}A\)⋆\left(B\bma{2\\-1\\3}\right)\]*
	The \LHS/ is a linear combination of $\C11$--$\C33$,
	so it can be rewritten as a linear combination of $\L1$--$\L{23}$.
	The \RHS/ is an inner product that costs three entry multiplications;
	number them $\L{24}$, $\L{25}$, and $\L{26}$, respectively.
	More formally,
	\begin{alignat*}2
		\L{24}	≔&&			(\A11+2\A21+3\A31)	&⋆	(2\B11-\B12+3\B13),				\\
		\L{25}	≔&&			(\A12+2\A22+3\A32)	&⋆	(2\B21-\B22+3\B23),				\\
		\L{26}	≔&\;&		(\A13+2\A23+3\A33)	&⋆	(2\B31-\B32+3\B33).				
	\end{alignat*}
	Then the \LHS/
	\[*\arraycolsep2.8pt\setcounter{MaxMatrixCols}{25}
		\bma{
			-1 & 2 & 4 & 1 & -3 & 21 & 18 & 15 & 12 & 3 & 6 &
			2 & 3 & 17 & -4 & 13 & 10 & 9 & 2 & -2 & 6 & -3 & 9
		}
		\bma{\L1\\[-.5ex]⋮\\[.5ex]\L{23}}
	\]*
	equals the \RHS/ $(\L{24}+\L{25}+\L{26})$.
	Therefore, symbols $\L1$--$\L{26}$ form a $[26,25,2]$-MDS code.
	This summarize the counterpart of \cref{pro:8outof9}.
	
	\begin{pro}
		$\L1$--$\L{26}$ form a $[26,25,2]$-MDS code.
		Any $25$ of them determine $\C11$--$\C33$.
	\end{pro}

\subsection{Second checksum}

	In a similar manner, consider this equation:
	\[*\bma{2&-1&3}C\bma{1\\3\\2}=\(\bma{2&-1&3}A\)⋆\left(B\bma{1\\3\\2}\right)\]*
	Index the three summands/entry products on the \RHS/
	by $\L{27}$, $\L{28}$, and $\L{29}$.
	Notationally,
	\begin{alignat*}2
		\L{27}	≔&&			(2\A11-\A21+3\A31)	&⋆	(\B11+3\B12+2\B13),				\\
		\L{28}	≔&&			(2\A12-\A22+3\A32)	&⋆	(\B21+3\B22+2\B23),				\\
		\L{29}	≔&\;&		(2\A13-\A23+3\A33)	&⋆	(\B31+3\B32+2\B33).				
	\end{alignat*}
	They induce a parity-check equation---that
	\[*\arraycolsep2.8pt\setcounter{MaxMatrixCols}{25}
		\bma{
			6 & -4 & -1 & 2 & 3 & 17 & 13 & 9 & 10 & 4 & 3 & 18
			& 12 & 21 & 15 & 1 & -3 & 2 & 2 & -3 & -2 & 9 & 6
		}
		\bma{\L1\\[-.5ex]⋮\\[.5ex]\L{23}}
	\]*
	equals $(\L{27}+\L{28}+\L{29})$.
	
	The two equations constitute a $2$-by-$25$ parity-check matrix
	that happens to be MDS.
	Thus the symbols $\L1$--$\L{23}$ and $(\L{24}+\L{25}+\L{26})$
	together with $(\L{27}+\L{28}+\L{29})$ form a $[25,23,3]$-MDS code.
	This concludes the counterpart of \cref{cor:MDS2}.
	
	\begin{pro}
		$\L1$--$\L{23}$, $(\L{24}+\L{25}+\L{26})$, and $(\L{27}+\L{28}+\L{29})$
		form a $[25,23,3]$-MDS code.
		Any $23$ of them determine $\C11$--$\C33$.
	\end{pro}
	
	We used symmetry to derive the checksums here, too.
	More elaboration on that in \cref{app:dihedral}.
	We invite readers to check out the paper \cite{Burichenko15}
	for the full symmetry of the Laderman algorithm, especially
	if they want to optimize the prime factors of the minors/determinants.
	Readers should also check \cite{HKS20} for other families of $3$-by-$3$ algorithms
	that could have performed better than Laderman by any means.

\subsection{More checksums}

	The story does not stop here.
	Every three extra workers grant the manager
	the ability to withstand one more erasure/straggler.
	Reminder:
	EPC has a threshold of $29$;
	the manager should switch to EPC after two more rounds
	(i.e., when there are $35$ workers in total).
	
	Nonetheless, it gets harder to characterize the codes
	as they are farther away from being MDS.
	For instance, the Laderman algorithm with three checksums induces
	sixteen triples that behave like $\S1,\S2,\S4$ in the last section.
	They are enumerated in \cref{fig:shuriken}.
	
	As much as some triples break the MDS property,
	this phenomenon is working in our favor because missing such three symbols
	requires only two checksums to infer the final output.
	The next section generalizes the concept from the perspective of matroids.

\begin{figure}
	\tikz[scale=2]{
		\draw[nodes={circle,draw,minimum size=2.5em,inner sep=1}]
			(0,2)node(C11){$\C11$}(2,2)node(C12){$\C12$}
					(1,1)node(C**){$\C{•}{•}$}
			(0,0)node(C21){$-\C21$}(2,0)node(C22){$-\C22$}
		;
		\draw[nodes={inner sep=1,auto,'},every edge/.append style={->}]
			(C22)edge[bend left]node[']{$\S1$}(C11)
			(C21)edge[bend right]node{$\S2$}(C22)
			(C22)edge[bend right]node{$\S3$}(C12)
			(C21)edge[bend left]node[']{$\S4$}(C11)
			(C11)edge[bend left]node[']{$\S5$}(C12)
			(C22)edge[bend right]node[]{$\S6$}(C**)
			(C**)edge[bend right]node{$\S7$}(C11)
		;
	}
	\caption{
		$\C{•}{•}$ is $-\C11-\C22+\C21+\C22=\bma{-1&1}C\bma{1&1}^⊤$.
		The decoding part of Strassen induces a matroid structure,
		and is represented by this directed graph.
	}\label{fig:incidence}
\end{figure}

\begin{figure}
	\tikz[xscale=-1]{
		\PMS\a{sqrt(6)}
		\PMS\b{1+sqrt(3)}
		\PMS\c{2*sqrt(3)-2}
		\PMS\d{2*sqrt(3)}
		\def\blackonwhite$#1${%
			\pdfliteral{q 1 J 1 j 1 Tr}\pgfsetlinewidth{4}%
			\color{white}\rlap{$#1$}\pdfliteral{Q}\color{black}$#1$%
		}
		\def\NL#1 {node(L#1){\blackonwhite$\L{#1}$}}
		\def\TRI#1 #2 #3 #4 #5 {
			(-30:2)--(150:\c)(30:\c)--(210:2)(-30:2)arc(0:60:\d)arc(120:180:\d)
			(0:0)\NL#1 (30:\c)\NL#2 (90:2)\NL#3 (150:\c)\NL#4 (210:2)\NL#5
		}
		\draw[every to/.style={bend right=15}]
			{[rotate=-45,yshift=\b cm]	\TRI	4	5	20	2	3	}
			{[rotate= 45,yshift=\b cm]	\TRI	16	17	21	18	10	}
			{[rotate=135,yshift=\b cm]	\TRI	7	9	23	8	11	}
			{[rotate=225,yshift=\b cm]	\TRI	12	13	22	15	1	}
		;
		\draw(-1,1)node{$\L{14}$}(1,1)node{$\L6$}
			         (0,0)node{$\L{19}$}
			(-1,-1)node{$\L6$}(1,-1)node{$\L{14}$}
		;
	}
	\caption{
		Inspired by the Fano plane.
		Symbols on the same arc are having nullity $1$---
		they behave like $\S1,\S2,\S4$ in \cref{lem:MDS3}.
		For example, $\L1$, $\L2$, and $\L4$ do not lead to the MDS property
		until being replaced by two linear combinations thereof,
		namely $(\L1+\L4)$ and $(\L2+\L4)$.
		Three symbols, $\L6,\L{14},\L{19}$, do not involve in any arc;
		their placements in this diagram reveal how
		the dihedral group order $8$ acts on all the $L$'s.
		C.f.\ \cref{app:dihedral}.
	}\label{fig:shuriken}
\end{figure}

\section{Matroid Theoretical Perspective}\label{sec:matroid}

	A subset like $\{\S1,\S2\}$ satisfies the following property:
	if the manager knows $\{\S1,\S2\}^∁$ and $\C11$--$\C22$,
	then it can infer both $\S1$ and $\S2$.
	We say the subset $\{\S1,\S2\}$ has rank $2$ and nullity $0$.
	In similar fashion, the subset $\{\S1,\S2,\S3\}$ is such that,
	if the manager knows $\{\S1,\S2,\S3\}^∁$ and $\C11$--$\C22$,
	then it can infer all three of $\S1,\S2,\S3$;
	we say it has rank $3$ and nullity $0$.
	On the other hand, the subset $\{\S1,\S2,\S4\}$
	is said to have rank $2$ and nullity $1$ because
	the manager can infer but two linear combinations thereof.
	
	For general subsets, we let the rank be
	how many linear combinations the manager learns.
	And let the nullity be the degree of ambiguity/equivocation;
	it equals the size/cardinality of the subset minus the rank.
	Thus, for example, the ground set $\{\S1…\S7\}$ has rank $4$ and nullity $3$.

\subsection{Strassen matroid}

	Can we predict how well the manager learns easier?
	A systematic treatment to describe the behavior of the subsets is as below:
	compress \crefrange{for:C11}{for:C22} into a linear transformation represented by:
	\[*\def\cdot{\color{427}0}\bma{
		1	&	·	&	·	&	1	&	-1	&	·	&	1	\\
		·	&	·	&	1	&	·	&	1	&	·	&	·	\\
		·	&	1	&	·	&	1	&	·	&	·	&	·	\\
		1	&	-1	&	1	&	·	&	·	&	1	&	·	
	}\]*
	Any subset of columns therein possesses properties like \emph{cardinality},
	\emph{rank}, \emph{nullity}, \emph{corank}, etc.\ as in the matroid theory.
	And those terms coincide with our usage of rank and nullity in the last paragraph.
	
	In fact, this Strassen-derived matrix/matroid can be made a (directed) graph.
	First negate the bottom two rows,
	then stack an extra row such that all columns add up to zero.
	The result is:
	\[\def\cdot{\color{427}0}\bma{
		1	&	·	&	·	&	1	&	-1	&	·	&	1	\\
		·	&	·	&	1	&	·	&	1	&	·	&	·	\\
		·	&	-1	&	·	&	-1	&	·	&	·	&	·	\\
		-1	&	1	&	-1	&	·	&	·	&	-1	&	·	\\
		·	&	·	&	·	&	·	&	·	&	1	&	-1	
	}\eqlabel{mat:Strinf}\]
	The meaning of the new row, if anything, is how $\S1$--$\S7$
	is mapped to $-\C11-\C22+\C21+\C22=\bma{-1&1}C\bma{1&1}^⊤$.
	Of \cref{fig:incidence} this is the incidence matrix.
	
	Its \emph{corank-nullity polynomial} is thus defined.
	Remark:
	Historically, the same concept is captured under different names---%
	Whitney numbers, Tutte polynomial, rank generating function, to name a few.
	We find nullity the proper property as it characterizes
	how many checksums are needed to compensate a missing subset.
	And corank is the dual of nullity.
	We compute the polynomial of \cref{mat:Strinf}.
	
	\begin{pro}\label{pro:StrCoNu}
		The corank--nullity polynomial for Strassen is, over all fields,
		$x^4+7x^3+21x^2+32x+3x^2y+20+15xy+18y+3xy^2+7y^2+y^3$.
	\end{pro}
	Some interpretations of the monomials are:
	$x^4$ is the empty set;
	$7x^3$ means all one-subsets can be restored by the rest;
	$21x^2$ means all two-subsets can be restored by the rest;
	$3x^2y$ is the three three-subsets that cannot,
	as indicated below \cref{mat:Qcd};
	and $y^3$ is the ground set.
	
	Uniqueness remark:
	\cite{deGroot78} showed that the Strassen algorithm
	is essentially unique up to the trivial $\operatorname{GL}$-actions
	(for instance, replacing $A,B$ by $AR,R^{-1}B$).
	Thus the decoding matrix, or \cref{mat:Strinf}, is unique up to signs and reordings.
	Since the decoding matrix is a directed graph,
	which is not sensitive to the field characteristic,
	we conclude that the corresponding matroid is constant across all fields.
	An implication is that the corank-nullity polynomial
	in \cref{pro:StrCoNu} stays unchanged regardless of the context.

\subsection{Laderman polynomial}

	Similar to the previous subsection,
	compression of the Laderman algorithm ends up with:
	\[*\def\cdot{\color{427}0}\def\c#1{\clap{$^{#1}$}}\bma{
		·&·&·&·&·&1&·&·&·&·&·&·&·&1&·&·&·&·&1&·&·&·&·	\\
		1&·&·&1&1&1&·&·&·&·&·&1&·&1&1&·&·&·&·&·&·&·&·	\\
		·&·&·&·&·&1&1&·&1&1&·&·&·&1&·&1&·&1&·&·&·&·&·	\\
		·&1&1&1&·&1&·&·&·&·&·&·&·&1&·&1&1&·&·&·&·&·&·	\\
		·&1&·&1&1&1&·&·&·&·&·&·&·&·&·&·&·&·&·&1&·&·&·	\\
		·&·&·&·&·&·&·&·&·&·&·&·&·&1&·&1&1&1&·&·&1&·&·	\\
		·&·&·&·&·&1&1&1&·&·&1&1&1&1&·&·&·&·&·&·&·&·&·	\\
		·&·&·&·&·&·&·&·&·&·&·&1&1&1&1&·&·&·&·&·&·&1&·	\\
		·&·&·&·&·&1&1&1&1&·&·&·&·&·&·&·&·&·&·&·&·&·&1	\\
		&&&\c4&&&&\c8&&&&\c{12}&&&&\c{16}&&&&\c{20}		\\[-2ex]
	}\]*
	We compute its corresponding polynomial.
	
	\begin{thm}
		Laderman's corank-nullity polynomial reads, over generic fields,	\\
		\pgfplotstableread[header=false]{
			127348	188514	148229	78493	30079	8505	1755	253		23		1	
			427578	442448	242994	87583	21554	3530	350		16		0		0	
			779465	569150	222202	56082	9181	896		40		0		0		0	
			990980	507027	139626	24664	2708	164		4		0		0		0	
			961856	339280	63942	7652	528		16		0		0		0		0	
			743144	176918	21588	1640	60		0		0		0		0		0	
			466076	73168	5276	230		4		0		0		0		0		0	
			238812	24150	878		16		0		0		0		0		0		0	
			99647	6341	88		0		0		0		0		0		0		0	
			33451	1300	4		0		0		0		0		0		0		0	
			8835	198		0		0		0		0		0		0		0		0	
			1770	20		0		0		0		0		0		0		0		0	
			253		1		0		0		0		0		0		0		0		0	
			23		0		0		0		0		0		0		0		0		0	
			1		0		0		0		0		0		0		0		0		0	
		}\tableCorankNullity
		\def\prespace#1#2{%
			\noexpand\phantom{$\ifnum#1=0\else
				\ifnum#1=1\else\ifnum#1=2x\else x^{#1}\fi\fi
				\ifnum#2=0\else\ifnum#2=1y\else y^{#2}\fi\fi
			\fi$}%
		}
		\def\monomial#1#2{%
			\noexpand\rlap{$
				\ifnum#1=0\else\ifnum#1=1x\else x^{#1}\fi\fi
				\ifnum#2=0\else\ifnum#2=1y\else y^{#2}\fi\fi
				\ifnum#2=14.\fi
			$}%
		}
		\pgfplotstabletypeset[
			every head row/.style={output empty row},
			every column/.style={column type={@{}r@{}l}},
			assign cell content/.code={
				\edef\coef{\ifnum#1=127348 \else+\fi#1}
				\ifnum\coef=0
					\edef\pgfmarshalnext{\noexpand\pgfkeysalso{@cell content=&}}
				\else
					\ifnum\coef=1\def\coef{+}\fi
					\edef\pgfmarshalnext{
						\noexpand\pgfkeysalso{@cell content=
							\prespace\pgfplotstablecol\pgfplotstablerow$\coef$&
							\monomial\pgfplotstablecol\pgfplotstablerow
						}
					}
				\fi
				\pgfmarshalnext
			}
		]\tableCorankNullity\kern-\linewidth	\\
		For characteristic $3$, add a correction term of $-4+4xy$.
		For characteristic $2$, add a correction term of $(-1+xy)
			(4x^3+52x^2+290x+8x^2y+740+124xy+786y+20xy^2+465y^2+176y^3+40y^4+4y^5)$.
		(Notice the movement $1→xy$ of the masses.)
	\end{thm}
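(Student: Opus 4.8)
The plan is to forget that the $9\times23$ array displayed just above encodes a matrix-multiplication algorithm and to study it purely as the matrix representing a matroid on the ground set $E=\{\L1…\L{23}\}$. Since the Laderman algorithm recovers all nine entries of $C$, that array has full row rank, so $r(E)=9$ and the corank--nullity polynomial equals $\sum_{T\subseteq E}x^{9-r(T)}y^{\abs{T}-r(T)}$, a function of the column matroid alone. Over a ``generic'' field---which I will read as $ℚ$, or any field whose characteristic avoids a finite bad set---this polynomial is obtained by sweeping through all $2^{23}$ column subsets $T$ and tallying the pair $(9-r(T),\abs{T}-r(T))$. The output must respect the usual corank--nullity bookkeeping: the coefficients sum to $2^{23}$; the corner monomials $x^9$ (empty set) and $y^{14}$ (ground set) carry coefficient $1$; the $y^0$ coefficients sum to the number of independent subsets while the $x^0$ coefficients sum to the number of spanning subsets; and, matching the sixteen dependent triples of \cref{fig:shuriken}, the coefficient of $x^7y^1$ must come out to $16$. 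I note at the outset that, unlike the Strassen matroid of \cref{pro:StrCoNu}---which is field-independent by the uniqueness theorem of \cite{deGroot78} recalled in the uniqueness remark---no comparable uniqueness statement is known for Laderman, so genuine characteristic dependence is to be expected rather than feared.

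Next I would pin down the bad primes. A prime $p$ is bad exactly when the $𝔽_p$-column matroid strictly refines the $ℚ$-one, equivalently when some $r(T)\times r(T)$ minor of a column submatrix that is nonzero over $ℚ$ becomes divisible by $p$. Because the array is $0/1$ with nine rows, every such minor is an integer of absolute value at most $3^9$ by the Hadamard bound, so the bad primes are bounded, and a direct computation---most transparently through the Smith normal forms of the rank-$9$ submatrices---should confirm that they are exactly $\{2,3\}$ and, crucially, that every column subset whose rank drops loses exactly one unit of it. This last fact forces each correction term to be $(xy-1)$ times a polynomial with nonnegative integer coefficients: a subset $T$ falling from rank $\rho$ to $\rho-1$ migrates from $x^{9-\rho}y^{\abs{T}-\rho}$ to $x^{10-\rho}y^{\abs{T}-\rho+1}$, contributing $x^{9-\rho}y^{\abs{T}-\rho}(xy-1)$ to the difference---precisely the ``movement $1\to xy$ of the masses'' mentioned in the statement.

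It then remains to recompute the matroid over $𝔽_3$ and over $𝔽_2$ and subtract from the generic polynomial. For $p=3$ the expected picture is that precisely four subsets change, all of them bases (size-$9$, rank-$9$ sets) that collapse to rank $8$ while every proper superset stays at rank $9$; the net change is $4(xy-1)=-4+4xy$, as claimed. For $p=2$ one tabulates the collapsing subsets by their generic corank and nullity and reads off the coefficient polynomial $4x^3+52x^2+290x+8x^2y+740+\cdots$ that multiplies $xy-1$. I expect this tally to be organized by the Fano-plane-flavored structure already visible in \cref{fig:shuriken} and by the order-$8$ dihedral symmetry of \cref{app:dihedral}, so that one can enumerate orbits of collapsing configurations instead of listing them individually.

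The main obstacle is not conceptual but the sheer scale and trustworthiness of the $2^{23}$-subset enumeration. I would keep it honest by (i) visiting the subsets in Gray-code order, so that each step is a single rank-one update of a reduced echelon form at cost $O(9^2)$; (ii) running the generic count independently over several large primes used as stand-ins for $ℚ$ and demanding exact agreement; (iii) collapsing the enumeration modulo the $D_8$ action of \cref{app:dihedral}, which cuts the work by roughly a factor $8$ and catches bookkeeping slips since every orbit size must divide $8$; and (iv) checking the output against the corank--nullity identities from the first paragraph and against the small-subset counts read off from \cref{fig:shuriken}. Only once all of these internal checks agree across the generic, characteristic-$3$, and characteristic-$2$ runs would I commit to the displayed table and the two correction terms.
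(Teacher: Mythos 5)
Your plan is correct and is essentially the paper's own argument: the paper's ``Characteristic remark'' likewise evaluates the minors (finding they lie between $-2$ and $3$, so only characteristics $2$ and $3$ can misbehave), then brute-force computes the corank--nullity polynomial of the $9\times23$ decoding matrix over $ℚ$, $𝔽_2$, and $𝔽_3$ and reads the correction terms off the differences. Your additional bookkeeping (Gray-code rank updates, the $D_8$ orbit reduction, the sanity identities, and the observation that one-unit rank drops explain the $(-1+xy)$ factorization) is extra implementation detail on top of the same computation, not a different route.
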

	
	Some interpretations of the monomials are:
	$x^9$ is the empty set;
	$23x^8$ means all one-subsets can be restored by the rest;
	$\binom{23}2x^7$ means all two-subsets can be restored by the rest;
	$16x^7y$ is the sixteen three-subsets that cannot,
	as arranged in \cref{fig:shuriken};
	and $y^{14}$ is the ground set.
	
	Characteristic remark:
	The course to determine the corank-nullity polynomials is two-fold.
	Step one:
	We evaluate all full minors;
	they range from $-2$ to $3$.
	This means that the only places it could ``go wrong''
	are when the characteristic is $2$ or $3$.
	Step two:
	We evaluate the corank-nullity polynomials over $ℚ,𝔽_2,𝔽_3$,
	and take the differences to see ``what goes wrong''.
	
	Since the incidence matrix of a directed graph has full minors ranging
	from $-1$ to $1$, the Laderman matroid cannot be a directed graph by contraposition.
	The best we can hope for is some contractions-restrictions that lead to
	directed graphs, and we did find some instances in \cref{app:minor}.
	
	The corank-nullity polynomials help us estimate
	the probabilistic performance of Pluto coding.
	See \cref{fig:3x3} for the prediction for the parity-checked Laderman algorithms.
	For the parity-checked Strassen algorithms (\cref{fig:2x2}), the estimate
	falls below the actual performance because $\S6$--$\S{19}$
	can fix each other, which is not taken into consideration here.
	
	There are much more that can be said regarding this topic;
	we leave them for future works.
	The next section discusses the concatenation of Pluto codes---how to build
	compound strategies for larger matrices out of the Strassen and Laderman algorithms.

\section{Recursive Strategy, Iterative Recovery}\label{sec:recurse}

	To forge (say) $\S1$, a worker is to multiply $\A11+\A22$ with $\B11+\B22$.
	The left factor/multiplier $\A11+\A22$ would be a huge matrix
	granted that the original matrix $A$ is a gigantic one.
	The same judgement applies to the right factor/multiplicand $\B11+\B22$%
	---it would be huge if $B$ is gigantic.
	To multiply two huge matrices, the worker applies
	the Strassen algorithm one more time.
	First and the foremost, it subdivides $\A11+\A22$ into four large blocks.
	Assume this convention for the indices of the large blocks:
	\[*\A ij=\bma{
		\AA ij11	&	\AA ij12	\\	\AA ij21	&	\AA ij22	
	}
	\qquad†and†\qquad
	\B ij=\bma{
		\BB ij11	&	\BB ij12	\\	\BB ij21	&	\BB ij22	
	}.\]*
	That leads to
	\def\Aaa{\AA{©1}{1©}11{+}\AA{®2}{2®}11}	\def\Aab{\AA{©1}{1©}12{+}\AA{®2}{2®}12}
	\def\Aba{\AA{©1}{1©}21{+}\AA{®2}{2®}21}	\def\Abb{\AA{©1}{1©}22{+}\AA{®2}{2®}22}
	\def\Baa{\BB{©1}{1©}11{+}\BB{®2}{2®}11}	\def\Bab{\BB{©1}{1©}12{+}\BB{®2}{2®}12}
	\def\Bba{\BB{©1}{1©}21{+}\BB{®2}{2®}21}	\def\Bbb{\BB{©1}{1©}22{+}\BB{®2}{2®}22}
	\[*\A{©1}{1©}+\A{®2}{2®}=\bma{\Aaa&\Aab\\\Aba&\Abb}\]*
	and the same expansion with all the $A$'s replaced by $B$.
	
	From here, the worker decomposes the huge multiplication
	behind $\S1$ into seven large products.
	The result, for the record, is
	{\smaller[1]
	\begin{alignat*}2
		\SS11		≔&&		\((\Aaa)+(\Abb)\)	&⋆	\((\Baa)+(\Bbb)\),		\\
		\SS12		≔&&		\((\Aba)+(\Abb)\)	&⋆		(\Baa),				\\
		\SS13		≔&&				(\Aaa)		&⋆	\((\Bab)-(\Bbb)\),		\\
		\SS14		≔&&				(\Abb)		&⋆	\(-(\Baa)+(\Bba)\),		\\
		\SS15		≔&&		\((\Aaa)+(\Aab)\)	&⋆		(\Bbb),				\\
		\SS16		≔&&		\(-(\Aaa)+(\Aba)\)	&⋆	\((\Baa)+(\Bab)\),		\\
		\SS17		≔&\;&	\((\Aab)-(\Abb)\)	&⋆	\((\Bba)+(\Bbb)\).		
		\shortintertext{\normalsize
			Cf.\ \crefrange{for:Str1}{for:Str7}.
			Per demand, the backups will be defined to be}
		\SS18		≔&&		\((\Aaa)+2(\Aab)\)	&⋆	\(-(\Baa)+(\Bab)\),		\\
		\SS19		≔&&		\((\Aba)+2(\Abb)\)	&⋆	\(-(\Bba)+(\Bbb)\),		\\
		\SS1{10}	≔&&		\(3(\Aaa)-(\Aab)\)	&⋆	\((\Baa)+2(\Bab)\),		\\
		\SS1{11}	≔&&		\(3(\Aba)-(\Abb)\)	&⋆	\((\Bba)+2(\Bbb)\),		\\
		\SS1{12}	≔&&		\(2(\Aaa)-3(\Aab)\)	&⋆	\(2(\Baa)+(\Bab)\),		\\
		\SS1{13}	≔&&		\(2(\Aba)-3(\Abb)\)	&⋆	\(2(\Bba)+(\Bbb)\).		
	\end{alignat*}}%
	Clearly, any eight out of $\SS11$--$\SS19$ determine $\S1$.
	Any nine out of $\SS11$--$\SS1{11}$ determine $\S1$.
	Any ten out of $\SS11$--$\SS1{13}$ determine $\S1$.
	
	Imagine that each $\SS1t$ will be taken care of by one subordinate device,
	which we call a \emph{second-level worker}, or simply a \emph{$2$-worker}.
	When a second-level worker divides and distributes its sub-task further,
	the next layer are filled with \emph{third-level workers}, or \emph{$3$-workers}.
	And so on.
	The next subsection elaborates on the relation among
	the first- and second-level workers, which easily generalizes to all levels.

\subsection{TPC structure inside}\label{sec:TPC}

	Some abstract notation better reveals
	the structure of TPC hidden within the square of Strassen.
	Let $\al sij$ and $\be sij∈𝔽$ (where $1≤s≤13$ and $1≤i,j≤2$) be
	the scalars appearing in the parity-checked Strassen algorithms
	\[*\S s≔（∑_{ij}\al sij\A ij）⋆（∑_{ij}\be sij\B ij）.\]*
	Then, the left factor of $\S s$ becomes
	\[*∑_{ij}\al sij\A ij
		=∑_{ij}\al sij\bma{
			\AA ij11	&	\AA ij12	\\
			\AA ij21	&	\AA ij22	
		}
		=\bma{
			∑_{ij}\al sij\AA ij11	&	∑_{ij}\al sij\AA ij12	\\
			∑_{ij}\al sij\AA ij21	&	∑_{ij}\al sij\AA ij22	
		}
	\]*
	and the right factor of $\S s$ becomes
	\[*∑_{ij}\be sij\B ij
		=∑_{ij}\be sij\bma{
			\BB ij11	&	\BB ij12	\\
			\BB ij21	&	\BB ij22	
		}
		=\bma{
			∑_{ij}\be sij\BB ij11	&	∑_{ij}\be sij\BB ij12	\\
			∑_{ij}\be sij\BB ij21	&	∑_{ij}\be sij\BB ij22	
		}.
	\]*
	
	The $s$th worker, for each $1≤s≤13$, hereby applies another layer of Strassen
	to the multiplication that defines $\S s$.
	Let $\SS st$ be the $t$th sub-product that will be assigned
	to the $t$th second-level worker, for $1≤t≤13$.
	The left factor of $\SS st$ is thus
	\[*∑_{kl}\al tkl·\(†the †kl†-sub-block of the left factor of †\S s\)
		=∑_{kl}\al tkl∑_{ij}\al sij\AA ijkl.\]*
	The right factor of $\SS st$ is, by the same reasoning,
	\[*∑_{kl}\be tkl·\(†the †kl†-sub-block of the right factor of †\S s\)
		=∑_{kl}\be tkl∑_{ij}\be sij\BB ijkl.\]*
	In summary,
	\[\SS st=（∑_{ijkl}\al sij\al tkl\AA ijkl）⋆（∑_{ijkl}\be sij\be tkl\BB ijkl）.
		\eqlabel{for:SSful}\]
	It does not take too long to see that the expansion of $\SS st$ is symmetric
	between the outer ($s$--$i$-$j$) layer and the inner ($t$--$k$-$l$) layer.
	The symmetry is a free byproduct that helps us spot relations among $\SS st$.
	
	Merely from the fact that $\SS s1$--$\SS s9$ is a tactic
	to calculate $\S s$ with backups, we deduce that (cf.\ \cref{equ:Q89})
	{\smaller
	\[*\SS s1	-4\SS s2	+3\SS s3	-3\SS s4	+2\SS s5	+2\SS s6	-\SS s7
		=\SS s8+\SS s9.\]*}%
	By symmetry (cf.\ \cref{for:SSful}), we deduce that
	{\smaller
	\[*\SS 1t	-4\SS 2t	+3\SS 3t	-3\SS 4t	+2\SS 5t	+2\SS 6t	-\SS 7t
		=\SS 8t+\SS 9t.\]*}%
	\Cref{equ:Qab} generalizes likewise:
	{\smaller
	\begin{align*}
		\SS s1	+\SS s2	+4\SS s3	+2\SS s4	+3\SS s5	-2\SS s6	+3\SS s7
			&	=\SS s{10}+\SS s{11}, \\
		\SS 1t	+\SS 2t	+4\SS 3t	+2\SS 4t	+3\SS 5t	-2\SS 6t	+3\SS 7t
			&	=\SS {10}t+\SS {11}t. 
	\end{align*}}%
	Two more equations involving $\SS s{12},\SS s{13},\SS{12}t,\SS{13}t$ are omitted.

	Consequence:
	Put $\SS11$--$\SS99$ in an array like \cref{fig:Elias} does.
	Then each row is a standalone $[9,8,2]$-MDS code;
	each column is a standalone $[9,8,2]$-MDS code.
	Together, they weave a web of protection like a TPC does.
	
	To decode, the manager examines every axis (an axis is a row or a column).
	If an axis contains nine known symbols, it is considered finished.
	If an axis contains eight known symbols and one unknown,
	the manager invokes the MDS relation to predict the missing one, and put it
	in the array despite the corresponding worker still running.
	If an axis contains two or more unknowns, skip for now.
	The manager executes the check iteratively
	while it is waiting for the workers to submit their answers;
	it declares completion if it learns $\SS11$--$\SS77$.
	
	Hereafter, we refer to this calculation strategy as ``$9·9$''.
	Its performance is analyzed in the sequel.

\begin{figure}
	\tikz[yscale=.5]{
		\begin{scope}[rounded corners=1em]
			\fill[Dark yellow](7-.1,.1)rectangle(9+.1,-9+.1);
			\fill[Gray-blue](.1,-7+.1)rectangle(9-.1,-9-.1);
			\fill[Dark yellow,opacity=.5](7-.1,.1)rectangle(9+.1,-9+.1);
		\end{scope}
		\foreach\s in{1,...,9}{
			\foreach\t in{1,...,9}{
				\draw(\t-.5,.5-\s)node{$\SS\s\t$};
			}
		}
	}
	\caption{
		The array of intermediate symbols to be figured out
		in multiplying $4×4$ matrices.
		Backups are shaded.
		Each row is $[9,8,2]$-MDS;
		each column is $[9,8,2]$-MDS.
		Cf.\ \cite[Fig.~1]{Elias54}.
	}\label{fig:Elias}
\end{figure}

\subsection{Stopping sets}

	Speaking of the recovery threshold (or equivalently, the minimum distance),
	this two-dimensional array of single parity-check codes does not possess a good one
	because four erasures ruin it (the square from $\SS11$ to $\SS22$).
	Despite the poor parameter, $9·9$ resists probabilistic errors pretty well.
	To see this, we characterize exactly the completion criterion.
	
	Construct a bipartite graph as follows:
	the s-part and the t-part each has nine vertices.
	For every cell in the array,
	an unknown $\SS st$ corresponds to an edge incident with
	vertex $s$ of the s-part and vertex $t$ of the t-part;
	a known $\SS st$ corresponds to a non-edge.
	Then, the manager can learn a new $\SS st$
	if that edge is incident with a degree-one vertex
	(that is, if that $\SS st$ is the lonely unknown in its column or row).
	Repeat this knowledge-propagation process;
	the manager stops after running out of degree-$1$ vertices.
	At this moment, if any of $\SS11$--$\SS77$ remains as an edge,
	the manger cannot declare completion;
	otherwise it can.
	We summarize this paragraph as follows.
	
	\begin{thm}\label{thm:2core}
		Assume $9·9$, the strategy depicted in \cref{fig:Elias}.
		The manager declares completion when the $2$-core of the corresponding
		bipartite graph does not contain any of the edges $\SS11$--$\SS77$.
	\end{thm}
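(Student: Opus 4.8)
The plan is to identify the manager's knowledge-propagation with the classical peeling procedure that strips a graph down to its $2$-core, and then to translate the completion rule accordingly.

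First I would fix a moment at which the manager runs the check, so that the set of already-returned symbols is fixed and the remaining cells of the $9×9$ array of \cref{fig:Elias} are exactly the unknowns; these unknowns are precisely the edges of the bipartite graph $G$ constructed just before the theorem (nine vertices in the s-part, nine in the t-part, one edge for each unknown $\SS st$). By the observation in \cref{sec:TPC} that every row $\SS s1$--$\SS s9$ and every column $\SS 1t$--$\SS 9t$ is a $[9,8,2]$-MDS code, knowing eight of the nine symbols on a row (resp.\ column) determines the ninth, while knowing seven or fewer determines none of the missing ones. Hence a single step of the manager's procedure---``recover the lonely unknown in some row or column''---is literally a single peeling step on $G$: pick a vertex of degree one and delete its unique incident edge. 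Deleting an edge can only lower the degrees of its two endpoints, and so may expose new degree-one vertices; the procedure thus halts exactly when $G$ has no degree-one vertex left, i.e.\ at the $2$-core of $G$, together with the isolated vertices that peeling leaves behind.

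Next I would invoke the two standard facts that make the peeling well defined. (i) No peeling step ever removes an edge of the $2$-core: if $e$ were the first such edge removed, then at that instant all $2$-core edges are still present, yet $e$ is deleted because one of its endpoints $v$ has current degree one---impossible, since $v$ lies in the $2$-core and is therefore incident to at least two $2$-core edges, all of them still present. (ii) When the peeling halts, every non-isolated vertex has degree at least two, so the surviving edges span a subgraph of minimum degree at least two and hence lie inside the (maximal such) $2$-core. Combining (i) and (ii): the edges deleted during peeling are exactly those edges of $G$ that do not lie in its $2$-core, independently of the order in which the degree-one vertices were chosen. In particular, the manager eventually learns a symbol $\SS st$ if and only if the corresponding edge does not lie in the $2$-core of $G$.

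Finally I would combine this with the completion rule stated above---the manager declares completion precisely once it has learned all of $\SS11$--$\SS77$. By the previous paragraph this happens exactly when none of the edges $\SS11$--$\SS77$ lies in the $2$-core of $G$, which is the assertion of the theorem. I expect the only genuinely substantive point to be the $2$-core identification in the middle paragraph; it is routine graph theory, but stating it carefully is what makes the converse direction meaningful---``the manager cannot declare completion'' is a statement about this specific iterative procedure, not about information-theoretic recoverability---whereas everything else is just unwinding the bipartite encoding and quoting the $[9,8,2]$-MDS property of the rows and columns.
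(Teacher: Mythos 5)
Your proposal is correct and matches the paper's intent: the paper states \cref{thm:2core} merely as a summary of the preceding paragraph describing the iterative row/column repair, with no formal proof, and your identification of that procedure with peeling to the $2$-core (order-independence, no $2$-core edge ever removed, halting exactly at the $2$-core) is precisely the argument being taken for granted there. You simply make explicit the routine graph-theoretic lemma the paper leaves implicit, so there is nothing to correct.
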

	
	In LDPC (low-density parity-check) code terminology, a set of unknowns
	that prohibits the decoding from progressing is called a \emph{stopping set}.
	The smallest (in terms of cardinality) stopping sets in this case are $4$-cycles.
	As it turns out, four randomly-chosen edges are unlikely to line up as a $4$-cycle.
	Hence the said strategy withstands, with high probability,
	four erasures, sometimes even more.
	For the picture of a more detailed distribution, see \cref{sec:sample}.
	
	Move on to a new strategy ``$11·11$'';
	more precisely, $11$ workers compute $\S1$--$\S{11}$ by each hiring
	$11$ second-level workers to compute $\SS s1$--$\SS s{11}$.
	The bipartite graph can be defined similarly with a slight modification:
	Start with eleven vertices on both sides,
	and place edges based on whether $\SS st$ is known.
	Then we merge the eighth and the ninth vertices, for each side;
	and we merge the tenth and the eleventh vertices, for each side.
	Finally, we remove repeated edges.
	Thus, for example, an edge connects
	the eighth vertex in the s-part to the ninth vertex in the t-part
	if $\SS8{10}$, $\SS8{11}$, $\SS9{10}$, \emph{or} $\SS9{11}$ is unknown.
	Below goes the generalization of \cref{thm:2core}.
	
	\begin{thm}\label{thm:3core}
		For strategy $11·11$, the manager announces completion
		when the $3$-core of the corresponding bipartite graph
		does not contain any of the edges $\SS11$--$\SS77$.
	\end{thm}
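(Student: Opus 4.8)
The plan is to mirror the argument underlying \cref{thm:2core}, making two promotions: ``degree at most $1$'' becomes ``degree at most $2$'', and the single-parity-check $[9,8,2]$-MDS code of \cref{pro:8outof9} becomes the double-parity-check $[9,7,3]$-MDS code of \cref{cor:MDS2}. The guiding principle is that the iterative (peeling) erasure decoder clears an axis---a row or a column of effective symbols---exactly when the number of erasures on that axis is at most the minimum distance minus one; for distance $3$ that number is $2$, so the decoder clears a vertex precisely when it has at most two incident edges, and iterating this clearing procedure is by definition the extraction of the $3$-core.

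First I would fix the merged picture. Collapsing the row pairs $\{8,9\}$ and $\{10,11\}$ and the column pairs $\{8,9\}$ and $\{10,11\}$ turns the $11×11$ array of symbols $\SS st$ into a $9×9$ array of \emph{effective symbols}, the merged entries being the two- and four-fold sums such as $\SS 8t+\SS 9t$ and $\SS 88+\SS 89+\SS 98+\SS 99$. By the construction of the bipartite graph, an effective symbol is unknown exactly when some $\SS st$ in its block is unknown, i.e.\ exactly when the matching edge is present. Each of the nine ``row-axes'' of the merged array is a codeword of the $[9,7,3]$-MDS code of \cref{cor:MDS2}: for the unmerged rows $1$ through $7$ this is \cref{cor:MDS2} applied one recursion level down (to the subproduct that defines the corresponding $\S s$), and the merged rows $\{8,9\}$ and $\{10,11\}$ are coordinatewise sums of two such codewords, hence codewords again. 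The nine ``column-axes'' are treated identically, now using the transposed parity checks recorded in \cref{sec:TPC} (the relations deduced from \cref{equ:Q89,equ:Qab} ``by symmetry''). Hence every axis carries an $[9,7,3]$-MDS codeword whose erasures are read off as its incident edges, so the decoder may clear any vertex of degree at most $2$, thereby learning every effective symbol on that axis and deleting all of its edges.

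Next I would run the knowledge-propagation loop: repeatedly pick a vertex of current degree at most $2$, clear its axis, and delete its incident edges, until no such vertex remains. Each step is a bona fide MDS recovery (at most two erasures), and conversely the only mechanism by which this decoder ever learns a new effective symbol is such a step. Because the set of vertices that can ever be cleared---hence the set of edges ever deleted---is independent of the order of choices (the standard uniqueness of the $k$-core), the loop terminates exactly when the surviving graph is the $3$-core of the initial bipartite graph. Finally, for $s,t\le 7$ the cell $\SS st$ coincides with its own effective symbol, and the edge $\SS st$ is deleted during the loop iff one of the two single vertices at its ends is cleared iff $\SS st$ becomes known; therefore the manager learns all of $\SS11$--$\SS77$---and then recovers $\S1,\dots,\S7$, and then $C$, by two further passes of Strassen decoding (\crefrange{for:C11}{for:C22})---precisely when none of the edges $\SS11$--$\SS77$ survives in the $3$-core. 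That is the statement.

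The step I expect to be the real work is the bookkeeping around the merged vertices: one has to verify that each shape of merged axis---a merged row or column coming from the pair $\{8,9\}$ or from $\{10,11\}$, including the four-fold corner entries where a row merging meets a column merging---genuinely inherits the $[9,7,3]$-MDS structure (a brief linearity computation, but one must watch which sums the combined parity checks land on), and that throughout the peeling the bipartite graph keeps encoding the erasure pattern of these \emph{sums} rather than of the individual $\SS st$. Everything after that is a routine repetition of the $9·9$ reasoning behind \cref{thm:2core}.
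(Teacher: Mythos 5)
Your argument is correct and follows essentially the same route as the paper, which states \cref{thm:3core} as the direct generalization of \cref{thm:2core}: peel the merged bipartite graph, using \cref{cor:MDS2} (and its column-direction twin from \cref{sec:TPC}) to clear any axis with at most two unknown effective symbols, so that the residual is exactly the $3$-core and completion is equivalent to no goal edge surviving in it. Your "real work" item is already settled by your own observation that coordinatewise sums of codewords are codewords (noting only that rows/columns $8$--$11$ are themselves codewords of the inner $[9,7,3]$ code for the same reason as rows $1$--$7$), so nothing essential is missing.
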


\subsection{Rectangular/cuboid generality}\label{sec:rectangle}

	Earlier in this section, we demonstrated that a parity-checked Strassen algorithm
	can be \emph{tensored} with itself to build a calculation scheme
	for greater matrices, whose decoding enjoys a TPC-flavor template.
	It happens that Pluto codes based on the Laderman algorithm can be tensored, too.
	In actuality, all Pluto codes can be tensored,
	including those that are already tensor products.
	Therefore, for example, one can tensor $\S1$--$\S9$ with $\L1$--$\L{26}$
	to construct a Pluto code that multiplies $6×6$ matrices.
	One can tensor $\S1$--$\S9$ with $\SS11$--$\SS99$ to multiply $8×8$ matrices.
	And so forth.
	
	The FMM literature uses the notion $⟨2,2,2;7⟩$ to convey the message that
	seven entry multiplications determine the product of two $2×2$ matrices.
	We abuse this notion;
	write $⟨2,2,2;9⟩$ to express the calculation scheme $\S1$--$\S9$;
	write $⟨2,2,2;11⟩$ to express $\S1$--$\S{11}$;
	and write $⟨2,2,2;13⟩$ to express $\S1$--$\S{13}$.
	Readers can guess what are $⟨3,3,3;23⟩$
	along with $⟨3,3,3;26⟩$ as well as $⟨3,3,3;29⟩$.
	In the new language we restate that tensoring is a versatile mechanism.
	
	\begin{thm}
		Let $⟨ℓ,m,n;r⟩$ and $⟨ℓ',m',n';r'⟩$ be two Pluto codes.
		(They do not have to have any protection;
		raw FMM works, too.)
		Then there exists a Pluto code $⟨ℓℓ',mm',nn';rr'⟩$
		that assumes the TPC template.
	\end{thm}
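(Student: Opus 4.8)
The plan is to make precise the construction already sketched for the $⟨2,2,2;13⟩\otimes⟨2,2,2;13⟩$ case in \cref{sec:TPC} and observe that nothing there used any special feature of the Strassen algorithm beyond the bookkeeping formula \eqref{for:SSful}. Concretely, suppose the first Pluto code is given by scalars $\al sij,\be sij$ for $1\le s\le r$ (with $1\le i\le ℓ$, $1\le j\le m$ on the $A$-side and $1\le j\le m$, $1\le k\le n$ on the $B$-side—I will reuse the paper's index conventions loosely), so that $\S s≔(\sum_{ij}\al sij\A ij)⋆(\sum_{jk}\be sjk\B jk)$, and the recovery map expresses each $\C ik$ as a fixed linear combination of $\S1,\dots,\S r$; similarly let the second code be given by $\al[\prime]tkl,\be[\prime]tkl$ for $1\le t\le r'$. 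The composite scheme subdivides each block $\A ij$ into an $ℓ'\times m'$ grid of sub-blocks $\AA ijkl$ and each $\B jk$ into an $m'\times n'$ grid, assigns to worker $(s,t)$ the product
\[*
  \SS st=\(\sum_{ijkl}\al sij\,\al[\prime]tkl\,\AA ijkl\)⋆\(\sum \be sjk\,\be[\prime]tlp\,\BB jklp\),
\]*
exactly as in \eqref{for:SSful}. First I would verify that this is a valid (unchecked or checked) bilinear algorithm for $⟨ℓℓ',mm',nn'⟩$: collecting the sub-blocks into genuine $ℓℓ'\times mm'$ and $mm'\times nn'$ matrices, worker $(s,t)$ is computing exactly the $(s,t)$ summand one gets by running the outer algorithm on the big matrices and then running the inner algorithm on each of the resulting $r$ products. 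So applying the inner recovery map along $t$ for each fixed $s$ reconstructs $\S s$ (in its big-matrix incarnation), and then the outer recovery map along $s$ reconstructs every $\C{}{}$-block; composing gives the claimed $⟨ℓℓ',mm',nn';rr'⟩$. This is the routine ``tensor product of bilinear algorithms'' fact and I would cite it or dispatch it in a line.

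Next I would establish the TPC template, i.e.\ that the array $(\SS st)_{s,t}$ decodes by alternating one-dimensional decoders. The key point is the symmetry observed after \eqref{for:SSful}: because each parity check of the inner code is a linear relation $\sum_t c_t\S[\prime]t=0$ among the inner intermediate symbols, and because \eqref{for:SSful} is multilinear in the two families of scalars, the same relation holds coordinatewise—$\sum_t c_t\,\SS st=0$ for every fixed $s$—so each ``row'' $(\SS s1,\dots,\SS s{r'})$ satisfies the parity structure of the inner code; and symmetrically each ``column'' $(\SS 1t,\dots,\SS rt)$ satisfies the parity structure of the outer code. (More precisely the hidden/punctured aggregate symbols, such as $(\S8+\S9)$ in \cref{pro:8outof9}, behave the same way: an aggregate check of the inner code induces, for each $s$, an aggregate check on row $s$.) Hence the manager can run the inner decoder on every row and the outer decoder on every column, iterating until no progress is made, and declares completion once the ``payload'' sub-array—indices $(s,t)$ with $s\le r_{\mathrm{payload}}$, $t\le r'_{\mathrm{payload}}$, the analogue of $\SS11$–$\SS77$ in \cref{thm:2core}—is fully known. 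I would phrase the conclusion exactly as in \cref{thm:2core,thm:3core} (a core/stopping-set condition on a suitable bipartite multigraph), noting that the general statement only claims existence of such a code and such a template, not a sharp threshold.

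The only genuinely delicate point—and the one I would spend the most care on—is the iteration argument when one or both factors are \emph{already} tensor products and/or are not MDS, so that some axes carry the weaker ``hidden-aggregate'' structure (the $\S8,\S9$ phenomenon and the nullity-one triples of \cref{lem:MDS3,fig:shuriken}) rather than clean single-parity or MDS checks. In that regime an axis's one-dimensional decoder is not ``fill in the unique blank'' but ``fill in all blanks provided the missing pattern is correctable for that axis's matroid,'' and one must check that interleaving such decoders along the two directions still (i) only ever writes down correct values—this is immediate since every write is forced by a true linear relation among the $\AA{}{}$'s or $\BB{}{}$'s—and (ii) terminates with the payload known whenever the global pattern of erasures is correctable, which is really just the statement that the product code's decodability is governed by the two constituent matroids and is handled by the same core/stopping-set language already set up. Everything else is the bilinear-tensor bookkeeping of \eqref{for:SSful}, which I would present compactly rather than expand in full.
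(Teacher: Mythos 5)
Your proposal is correct and takes essentially the same route as the paper, whose proof of this theorem is literally the instruction to mimic the construction of \cref{sec:TPC}: form the products as in \eqref{for:SSful} with the two families of scalars and observe that multilinearity transfers each constituent code's parity relations to every row and every column of the worker array. Your write-up simply makes that mimicry explicit in general (validity of the tensored bilinear algorithm, induced axis-wise checks, iterative row/column decoding), which is exactly what the paper intends.
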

	
	\begin{proof}
		Mimic what \cref{sec:TPC} does to $⟨2,2,2;9⟩$ and $⟨2,2,2;9⟩$.
	\end{proof}
	
	Pluto codes that are the tensor products of others are said to be \emph{composite}.
	Otherwise they are said to be \emph{prime}.
	A number of prime Pluto codes follow.
	
	\cite{HK71} proposed
	\[*\left⟨ℓ,2,n;\left⌈÷{3ℓn+\max(ℓ,n)}2\right⌉\right⟩,\]*
	in particular $⟨2,2,3;11⟩$ and $⟨3,2,3;15⟩$.
	(And \cite{Alekseyev85} proved that $⟨2,2,3;11⟩$ is optimal over all fields).
	We might as well insert checksums to these algorithms
	to construct Pluto codes
	\[*⟨2,2,3;13⟩≅⟨3,2,2;13⟩,\quad⟨2,3,2;14⟩,
		\quad⟨3,2,3;17⟩,\quad⟨2,3,3;18⟩≅⟨3,3,2;18⟩.\]*
	(Details omitted.)
	Suitable tensor products of these Pluto codes
	multiply $6×6$, $12×12$, or even $18×18$ matrices:
	\begin{align*}
		⟨6,6,6;234⟩		&	=⟨2,2,3;13⟩⊗⟨3,3,2;18⟩,	\\
		⟨6,6,6;238⟩		&	=⟨2,3,2;14⟩⊗⟨3,2,3;17⟩,	\\
		⟨12,12,12;2366⟩	&	=⟨2,2,3;13⟩⊗⟨2,3,2;14⟩⊗⟨3,2,2;13⟩,	\\
		⟨18,18,18;5508⟩	&	=⟨2,3,3;18⟩⊗⟨3,2,3;17⟩⊗⟨3,3,2;18⟩.	
	\end{align*}
	Note that the schoolbook complexity is $⟨18,18,18;5832⟩$,
	meaning that the Pluto code $⟨18,18,18;5508⟩$ relies on fewer workers;
	what is more, not all of them must reply to the manager.
	
	Whereas Pluto code $⟨12,12,12;2366⟩$ is blown away by the schoolbook method
	$⟨12,12,12;1728⟩$, other building blocks seem a lot better:
	\begin{align*}
		⟨12,12,12;1344⟩	&	=⟨3,3,4;32⟩⊗⟨4,4,3;42⟩,	\\
		⟨12,12,12;1353⟩	&	=⟨3,4,3;33⟩⊗⟨4,3,4;41⟩.
	\end{align*}
	Here $⟨3,3,4;29⟩$ and $⟨3,4,4;38⟩$ are taken from \cite{Smirnov13}.
	
	Explanation:
	Whether a composite Pluto code outperforms the naïve approach depends
	heavily on whether the bilinear ranks of the prime components
	are much less than the naïve complexities to begin with.
	For $⟨2;2;2;7⟩$ and $⟨2,2,3;11⟩$,
	adding as few as one checksum exceeds the naïve complexity.
	On the other hand, $⟨2,3,3;15⟩$ and $⟨3,3,3;23⟩$ afford one additional checksum,
	let alone even greater building blocks $⟨3,3,4;29⟩$ and $⟨3,4,4;38⟩$ who afford two.
	In the long run, the tensor powers of the ``wealthy'' ones
	show an advantage over the naïve methods, the latter including EPC and PDC.
	We give two gargantuan examples that support this argument:
	Pluto $⟨3,3,3;26⟩^{⊗3}=⟨27,27,27;17576⟩$ versus naïve $⟨27,27,27;19683⟩$.
	Pluto $⟨3,3,4;32⟩⊗⟨3,4,3;33⟩⊗⟨4,3,3;32⟩=⟨36,36,36;33792⟩$ versus	\\
	naïve $⟨36,36,36;46656⟩$.
	
	Outside this subsection, we identify a prime Pluto by the number of workers.
	Hence $7,9,11,\dotsc$ are those of type $⟨2,2,2⟩$;
	and $23,26,29,\dotsc$ are those of type $⟨3,3,3⟩$;
	and suchlike.
	There is one conflict at $⟨2,2,2;13⟩$ and $⟨2,2,3;13⟩$;
	they will be distinguishable from the context, otherwise we will specify.

\subsection{Tree and flexibility}

	Another approach to nest and visualize Pluto codes
	is by means of hierarchy trees, as illustrated in \cref{fig:tree}.
	In the figure, a manager is in charge of calculating $C≔AB$.
	It divides the matrices into $\A11$--$\B22$,
	and then assigns the calculations of $\S1$--$\S9$ to nine workers.
	The $s$th worker (for each $1≤s≤9$) will further divide
	the left and right factors of $\S s$ into smaller pieces,
	and each piece will be handled by a second-level worker.
	Per personal considerations, workers may subdivide differently---%
	one worker might subdivide into $3×3$ pieces whilst others still do $2×2$.
	Workers are also free to decide how many backups they want, if at all.
	
	Doing so, they mimic irregular LDPC codes with endless possibilities.
	Not only are irregular LDPC codes more flexible,
	but also strictly better than the regular ones.
	Intuitively speaking, the low-degree check nodes
	in an irregular LDPC damage the code rate but converge faster,
	while the high-degree check nodes honor the code rate but converge slower.
	The state-of-the-art irregular LDPC is a proper mixture
	of high- and low-degree check nodes that max out the said trade-off.
	
	Later in \cref{sec:sample}, we will experiment some strategies that are
	mostly regular/\AB homogeneous tensor products of prime Pluto codes.
	Those have an obvious disadvantage that the number of
	required $2$-workers are somewhat rigid, which forces the manager
	to prepare \emph{exactly} that certain amount of $2$-workers.
	What we offer in this subsection is that,
	by attaching superfluous $2$-workers to random $1$-workers
	or discarding putative $2$-workers from arbitrary $1$-workers,
	the total number of $2$-workers may deviate from the designed quantity,
	whilst the aftereffect on the performance is supposedly indistinguishable.
	A more in-depth comparison on that subject is to be conducted.
	
	The $3$-by-$3$ discussion ends here.
	In the next section, we introduce another powerful technique to forge Pluto codes,
	which also brings irregularity/inhomogeneity into the graph.

\begin{figure}
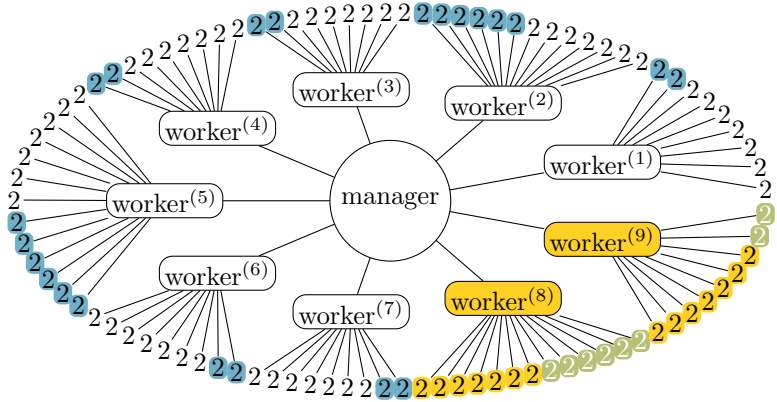

	\makeatletter
	\tikzset{
		worker/.style={draw,rounded corners=1em/2,inner sep=2},
		2worker/.style={
			inner sep=1,rounded corners=1em/4,
			/utils/exec={
				\xdef\t{\the\numexpr\t+1}
				\EA\pgfutil@in@\s{258}
				\ifpgfutil@in@
					\ifnum\t=14\xdef\s{\the\numexpr\s+1}\xdef\t{1}\fi
				\else
					\ifnum\t=10\xdef\s{\the\numexpr\s+1}\xdef\t{1}\fi
				\fi
				\ifnum\s>7\pgfkeysalso{fill=Dark yellow}\fi
				\ifnum\t>7
					\pgfkeysalso{fill=Gray-blue}
					\ifnum\s>7
						\pgfkeysalso{fill=Gray-blue!50!Dark yellow,text=white}
					\fi
				\fi
			}
		}
	}
	\tikz[yscale=1/2]{
		\draw node(m)[circle,draw]{manager};
		\def\tt{0}\def\s{1}
		\foreach\s in{1,...,9}{
			\tikzset{rotate=\s*40-20,shift={(3,0)}}
			\ifnum\s>7\tikzset{nodes={fill=Dark yellow}}\fi
			\draw node(s\s)[worker]{worker$\series\s$}(s\s)--(m);
		}
		\tikzset{
			decoration={
				markings,
				mark=between positions .5/93 and 1 step 1/93 with {
					\coordinate(t\tt);
					\xdef\tt{\the\numexpr\tt+1}
				}
			}
		}
		\draw[yscale=4/5,decorate]circle(5);
		\def\t{0}
		\foreach\tt in{0,...,92}{
			\pgfpointanchor{t\tt}{center}
			\xdef\tx{\the\pgf@x}\xdef\ty{\the\dimexpr\pgf@y*5/4}
			\draw(\tx,\ty)node(t)[2worker]{$2$}(t)--(s\s);
		}
	}
	\caption{
		A manager breaks each of the two gigantic matrices $A,B$ into four huge blocks,
		associating $\S1$--$\S9$ to nine workers.
		The shaded workers are the backups ($\S8$ and $\S9$).
		For each $1≤s≤9$, the $s$th worker breaks the huge blocks further,
		assigning $\SS s1$--$\SS s9$ to nine $2$-workers (labeled ``2'').
		The shaded $2$-workers are the backups;
		the white-on-green ones are the backups of backups.
		Exception:
		Individual workers ($2$nd, $5$th, and $8$th) can hire
		four more $2$-workers and assign them $\SS s{10}$--$\SS s{13}$.
	}\label{fig:tree}
\end{figure}

\section{Union/Gluing Strategies}\label{sec:unify}

	This section is devoted to how to stack/overlay a TPC strategy
	as generated in the last section onto another strategy.
	We need some preparation.
	
	Recall $4$-by-$4$ matrices
	\[*A≔\bma{
		\AA1111&\AA1112	&&		\AA1211&\AA1212	\\
		\AA1121&\AA1122	&&		\AA1221&\AA1222	\\[2ex]
		\AA2111&\AA2112	&&		\AA2211&\AA2212	\\
		\AA2121&\AA2122	&&		\AA2221&\AA2222	
	}\]*
	and $B$ (indexed similarly).
	(Note that, under the usual convention,
	$\AA ijkl$ would have been indexed as $\A{(2i-2+k)}{(2j-2+l)}$.)
	To calculate $A⋆B$, the manager might repeat a parity-checked Strassen algorithm
	twice, or introduce checksums to the squared Strassen algorithm.
	\emph{It might as well do both.}
	The former, squaring a parity-checked algorithm, is what \cref{fig:Elias} did.
	The latter, parity-checking a squared algorithm, is detailed below.
	
	Consider this parity-check equation:
	\[*\bma{1&1&1&1}C\bma{3\\1\\1\\2}
		=\(\bma{1&1&1&1}A\)⋆\left(B\bma{3\\1\\1\\2}\right)\eqlabel{equ:motive}\]*
	The \LHS/ is a linear combination of entries of $C$,
	so it is a linear combination of $\SS11$--$\SS77$.
	The \RHS/ is a $1×4$ vector dotting a $4×1$ vector.
	Label the four entry products therein
	$ß{50}$, $ß{51}$, $ß{52}$, and $ß{53}$, respectively.
	Id est,
	\[*ß{50}≔(\AA1111+\AA1121+\AA2111+\AA2121)⋆(3\BB1111+\BB1112+\BB1211+2\BB1212),\]*
	and $ß{51}$--$ß{53}$ are defined in the same way.
	Then a single parity-check code pops up.
	\begin{pro}\label{pro:52outof53}
		$\SS 11$--$\SS77$ and $ß{50}$--$ß{53}$ make a $[53,52,2]$-MDS code.
	\end{pro}
	The following array catalogs the coefficients of $\SS11$--$\SS77$ in this relation:
	\[\bma{
		7	&	1	&	6	&	8	&	-1	&	3	&	4	\\
		1	&	3	&	-2	&	4	&	-3	&	-1	&	2	\\
		6	&	-2	&	8	&	4	&	2	&	4	&	2	\\
		8	&	4	&	4	&	12	&	-4	&	2	&	6	\\
		-1	&	-3	&	2	&	-4	&	3	&	1	&	-2	\\
		3	&	-1	&	4	&	2	&	1	&	2	&	1	\\
		4	&	2	&	2	&	6	&	-2	&	1	&	3	
	}\eqlabel{mat:H7x7}\]
	Why it is symmetric is left to readers.
	
	By perturbing the coefficients in \cref{equ:motive} dexterously,
	we might generate four more backups, which can be denoted by $ß{54}$--$ß{57}$.
	The game continues in a way that every four new \textit\ss's
	offer insurance against one more erasure.
	
	So far, we have invented four types of backups:
	\begin{itemize}
		\item	$\SS st$ for $1≤s≤7<t$;
		\item	$\SS st$ for $1≤t≤7<s$;
		\item	$\SS st$ for $7<s,t$; and
		\item	$ß{49}$--$ß{53}$ (and potentially four more).
	\end{itemize}
	It is up to the manager to decide how much insurance it intends to receive
	and which combinations of the insurance is cost-efficient.
	We analyze two union strategies in the next subsection.

\subsection{Cutting short cycles}

	First consider the union of $7·9$ and $49+4$.
	This union means that a total of $67$ workers work out $\SS11$--$\SS79$
	and $ß{50}$--$ß{53}$.
	For convenience, this scheme is denoted by $7·9∪53$ from now on.
	Tensoring takes precedence over unioning.
	
	By \cref{pro:8outof9}, one erasure in $\SS s1$--$\SS s9$
	(for every $1≤s≤9$) can be fixed.
	When there are more erasures than that, the $\textit\ss$-part kicks in.
	Here goes a running example:
	Suppose all but $\SS11$, $\SS12$, and $\SS21$ are known,
	then $\SS22$--$\SS29$ fix $\SS21$.
	The manager is left with two unknowns, $\SS11$ and $\SS12$,
	constrained by two relations coming from
	$\SS11$--$\SS19$ and from $\SS11$--$\SS77$ plus $ß{50}$--$ß{53}$.
	One hopes that these two relations are ``linearly independent''---that
	they impose uniquely solvable system of linear equations in $\SS11$ and $\SS12$.
	That indeed holds true, and not only for this particular example.
	
	\begin{thm}\label{thm:2cycle}
		$7·9∪53$ withstands one erasure on every axis
		parallel to the ``$\,9\!$''-direction.
		Beyond that, it withstands one more erasure.
	\end{thm}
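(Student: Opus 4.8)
The plan is to run the obvious peeling decoder along the ``$9$''-axes and then repair the one axis---if any---that peeling cannot finish, using the single global \textit\ss-check. Write the erased symbols as $E_0\cup\{e\}$, where $E_0$ puts at most one erasure on each axis $\SS s1$--$\SS s9$ ($1≤s≤7$) and $e$ is the one extra erasure the statement permits. Observe at the outset that, unlike the $9·9$ scheme of \cref{thm:2core}, the $7·9$ array carries no parity check in the ``$7$''-direction---the outer Strassen is unprotected---so once an axis loses two symbols the \textit\ss-check of \cref{pro:52outof53} is the only remaining resource, which is precisely why it is brought into play.

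First I would peel the easy axes. By \cref{pro:8outof9} together with the tensoring recipe of \cref{sec:TPC}, each axis $\SS s1$--$\SS s9$ is a $[9,8,2]$-MDS code, so any such axis carrying at most one erasure is recovered in one shot. Since $E_0$ contributes at most one erasure to each axis and $e$ contributes at most one more, after this pass at most a single axis---say the one indexed $s^*$---can still hold two unknowns, and every other $\SS st$ with $s,t≤7$ is then known. If no axis is left with two unknowns (for instance when $e$ is one of $ß{50}$--$ß{53}$, or when $e$ merely turns a previously untouched axis into a one-erasure axis), then $\SS11$--$\SS77$ is already in hand, the manager declares completion, and a stray erased \textit\ss---if any---is recovered from \cref{pro:52outof53}.

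Next I would clear the stubborn axis $s^*$, whose two unknowns lie in columns $t_1<t_2$. If $t_1,t_2∈\{8,9\}$, then $\SS{s^*}1$--$\SS{s^*}7$ are already known and we are done. If exactly one of them, say $t_2$, lies in $\{8,9\}$, then the \textit\ss-parity-check of \cref{pro:52outof53}---whose $\SS st$-coefficients are the entries of \cref{mat:H7x7}---retains the lone unknown $\SS{s^*}{t_1}$ with coefficient $h_{s^*t_1}$, nonzero because every entry of \cref{mat:H7x7} is nonzero; so $\SS{s^*}{t_1}$, and with it $\SS{s^*}1$--$\SS{s^*}7$, is recovered. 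In the remaining case $t_1,t_2≤7$ the two unknowns obey two linear equations: the tensored form of \cref{equ:Q89} on axis $s^*$, whose coefficient vector on columns $1$--$7$ is $c≔(1,-4,3,-3,2,2,-1)$, and the \textit\ss-check \cref{mat:H7x7}, whose coefficient vector is the $s^*$-th row $h$ of that matrix; they pin the pair down exactly when the $2×2$ minor $c_{t_1}h_{t_2}-c_{t_2}h_{t_1}$ is nonzero. So the crux is to check that for each of the seven rows $h$ of \cref{mat:H7x7} the seven ratios $h_i/c_i$ ($1≤i≤7$) are pairwise distinct---equivalently, that the $2×7$ array with rows $c$ and $h$ is MDS. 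This is a finite verification of $7\binom72=147$ inequalities; the running example $(\SS11,\SS12,\SS21)$ discussed above is the case $s^*=1$, $\{t_1,t_2\}=\{1,2\}$, whose minor is $1·1-(-4)·7=29≠0$.

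The main obstacle is exactly this verification: one must confirm that none of the $147$ minors vanishes, and---since $c$ and the rows of \cref{mat:H7x7} carry only small integers, so those minors are themselves small---one must ensure the conclusion survives over whichever ground field the manager really works over; it fails over very small fields, consistently with the paper's standing convention that the displayed small coefficients stand in for generic large ones. Granted the verification, recovering $\SS{s^*}{t_1}$ and $\SS{s^*}{t_2}$ fills in $\SS11$--$\SS77$, from which $C$ is reconstructed by the two nested rounds of \cref{thm:Strassen}.
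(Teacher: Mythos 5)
Your proposal is correct and takes essentially the same route as the paper's proof: peel each $9$-axis via \cref{pro:8outof9}, split into cases according to whether the at most two remaining unknowns are both backups, straddle the core and a same-row backup (where \cref{pro:52outof53} finishes), or lie together in the first seven columns of one row, and in that last case reduce to the invertibility of the $2\times2$ matrix built from a row of \cref{mat:H7x7} and the coefficients of \cref{equ:Q89}, settled by a finite brute-force check. Your rephrasing of that check as distinctness of the ratios, and the caveat about very small characteristics, are only cosmetic refinements of the same argument.
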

	
	\begin{proof}
		Suppose that there are still two unknowns after invoking \cref{pro:8outof9}.
		
		If both unknowns are among $\SS18$--$\SS79$ and $ß{50}$--$ß{53}$,
		then all of $\SS11$--$\SS77$ are known, which yield the correct outcome.
		If exactly one unknown is among $\SS11$--$\SS77$
		(call it $\SS st$ for $1≤s,t≤7$), then the other unknown must be
		either of the backup on the same row (call it $ \SS s{τ}$ for $8≤τ≤9$),
		otherwise \cref{pro:8outof9} would have eliminated $\SS st$.
		This means that $ß{50}$--$ß{53}$ are all known,
		and the manager can invoke \cref{pro:52outof53} to infer $\SS st$.
		The last case is when two unknowns are
		in the same row of the first seven columns.
		
		It suffices to verify that, for the two remaining unknowns
		$\SS st$ and $\SS s{τ}$ (where $1≤s,t,τ≤7$), the matrix
		\[*\bma{
			(s,t)†th entry of \cref{mat:H7x7}†	&(s,τ)†th entry of \cref{mat:H7x7}†	\\
			t†th coefficient of \cref{equ:Q89}†	&τ†th coefficient of \cref{equ:Q89}†
		}\]*
		is invertible.
		A brute-force approach confirms that this is true.
	\end{proof}
	
	Now consider $9·9$ union $53$.
	This strategy has a total of $85$ workers working on
	$\SS11$--$\SS99$ and $ß{50}$--$ß{53}$ in multiplying $4$-by-$4$ matrices.
	From \cref{thm:2core} we see that $\SS11$--$\SS99$ save against a single erasure
	in any axis, and is helpless facing cycles.
	Now the $53$-part cuts the cycles.
	
	\begin{thm}\label{thm:4cycle}
		$9·9∪53$ resists any erasure pattern wherein the $2$-core is edgeless.
		Moreover, knowing $ß{50}$--$ß{53}$ resists a $4$-cycle as the $2$-core.
	\end{thm}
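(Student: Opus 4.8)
The plan is to settle the first assertion in one line and spend the work on the second. If the $2$-core of the erasure graph is edgeless then it contains none of the edges $\SS11$--$\SS77$, so \cref{thm:2core} already certifies completion for the bare $9·9$ scheme; the four extra workers $ß{50}$--$ß{53}$ only add information, so $9·9∪53$ certainly succeeds (indeed the peeling process of \cref{thm:2core} then resolves \emph{every} $\SS st$). For the second assertion I would run that same peeling / knowledge-propagation process first: whenever a row or a column of the $9×9$ array contains one unknown, fill it in by the $[9,8,2]$-MDS relation of \cref{pro:8outof9}. When the process halts, the unknowns that remain are exactly the edges of the $2$-core, which by hypothesis is a single $4$-cycle; writing its $s$-vertices as $\{a,b\}$ and its $t$-vertices as $\{c,d\}$ (with $a≠b$, $c≠d$), the four surviving unknowns are $\SS ac,\SS ad,\SS bc,\SS bd$, while every other symbol --- in particular every $\SS st$ with $1≤s,t≤7$ not among these --- is known.

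Next I would invoke \cref{pro:52outof53}. With $ß{50}$--$ß{53}$ known, its parity check is the single equation $\sum_{1≤s,t≤7}h_{st}\SS st=ß{50}+ß{51}+ß{52}+ß{53}$, where $h_{st}$ is the $(s,t)$-entry of \cref{mat:H7x7}; after the known terms are moved to the right it becomes a linear equation in exactly those of the four cycle edges whose two indices both lie in $\{1,\dots,7\}$. Let $p$ count the $s$-vertices of the cycle that exceed $7$, and $p'$ the $t$-vertices that exceed $7$; the number of cycle edges inside the block $1≤s,t≤7$ is then $(2-p)(2-p')∈\{0,1,2,4\}$, and I would case-split on this. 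Two facts are used twice: every entry of \cref{mat:H7x7} is nonzero, and \cref{mat:H7x7} is symmetric.

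If $(2-p)(2-p')=0$ the cycle consists of backup symbols only, so $\SS11$--$\SS77$ are already all known and completion holds. If it equals $1$, the unique wanted cycle edge --- say $\SS ac$ with $a,c≤7$ --- is the only unknown left in the $53$-equation and is recovered by dividing by $h_{ac}≠0$. If exactly one of $p,p'$ is $1$, the two wanted cycle edges lie on a common axis (a row when $p=1$, a column when $p'=1$), and together with the \cref{pro:8outof9} relation on that axis and the $53$-equation they give a $2×2$ system whose matrix is, up to a transposition absorbed by the symmetry of \cref{mat:H7x7}, $\bsm{h_{ac}&h_{ad}\\q_c&q_d}$, with $q_1,\dots,q_9$ the parity coefficients of \cref{equ:Q89}; its invertibility for all admissible index choices is precisely the brute-force fact already checked in the proof of \cref{thm:2cycle}. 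The last case is $p=p'=0$, all four cycle edges wanted: the usable relations are the \cref{pro:8outof9} relations on rows $a,b$ and columns $c,d$ plus the $53$-equation. A direct computation shows that the $4×4$ coefficient matrix $\bsm{q_c&q_d&0&0\\0&0&q_c&q_d\\q_a&0&q_b&0\\0&q_a&0&q_b}$ of the four axis relations, in the ordered unknowns $\SS ac,\SS ad,\SS bc,\SS bd$, has determinant $0$ but an evident nonzero $3×3$ minor, hence rank $3$, with kernel spanned by $(q_bq_d,-q_bq_c,-q_aq_d,q_aq_c)$; so adjoining the $53$-equation determines all four unknowns exactly when its coefficient row $(h_{ac},h_{ad},h_{bc},h_{bd})$ fails to be orthogonal to this kernel, i.e.\ when $h_{ac}q_bq_d-h_{ad}q_bq_c-h_{bc}q_aq_d+h_{bd}q_aq_c≠0$. (Once any one cycle edge is resolved, the other three unravel around the cycle.)

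The main obstacle is the arithmetic in the last two cases: one must verify $h_{ac}q_bq_d-h_{ad}q_bq_c-h_{bc}q_aq_d+h_{bd}q_aq_c≠0$ for every pair of distinct $a,b$ and every pair of distinct $c,d$ in $\{1,\dots,7\}$, and that the $2×2$ determinants of the one-axis case likewise never vanish. These are finite computer checks --- shortened considerably by the symmetry of \cref{mat:H7x7} and by the dihedral action on the underlying Strassen data --- and the theorem would be false if even one of them failed, so this verification is where the real content sits; the conceptual part, reducing each erasure configuration to a single such scalar nonvanishing via the rank/kernel argument, parallels the proof of \cref{thm:2cycle}.
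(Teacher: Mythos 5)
Your proposal is correct and follows essentially the same route as the paper: case-split on how many of the four cycle edges lie in the $\SS11$--$\SS77$ block, dispatch the zero/one-edge cases via \cref{pro:52outof53}, reduce the two-edge case to the $2×2$ determinants already verified for \cref{thm:2cycle} (with the symmetry of \cref{mat:H7x7} covering the column variant), and in the four-edge case reduce to the single scalar non-orthogonality condition between the $53$-row and the rank-$3$ system's kernel vector, which is exactly the quantity the paper checks. The only remaining ingredient is the finite brute-force verification of that nonvanishing over all $1≤s,σ,t,τ≤7$, which the paper likewise relegates to computation.
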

	
	\begin{proof}
		When the $2$-core is empty, \cref{thm:2core} takes care of it.
		Suppose the $2$-core is a $4$-cycle.
		If the cycle overlaps with $\SS11$--$\SS77$ at at most one edge,
		\cref{pro:52outof53} rules out the unknown edge.
		If the $4$-cycle overlaps with $\SS11$--$\SS77$ at two edges,
		\cref{thm:2cycle} handles the case regardless
		whether the two edges share the same $s$-value or $t$-value.
		The last possibility is when the $4$-cycle lie entirely within $\SS11$--$\SS77$.
		
		\def\entry{†th of \eqref{mat:H7x7}†}
		\def\coef{†th coef of \eqref{equ:Q89}†}
		\def\dual#1/#2·#3 {\tfrac{#1}{#2†-coef†·#3†-coef of \eqref{equ:Q89}†}}
		It suffices to verify that, for four unknowns
		$\SS st,\SS s{τ},\SS{σ}t,\SS{σ}{τ}$ (where $1≤s,σ,t,τ≤7$), the matrix
		\[\bma{
			(s,t)\entry		&(s,τ)\entry	&(σ,τ)\entry	&(σ,t)\entry	\\
			t\coef			&τ\coef											\\
							&s\coef			&σ\coef							\\
							&				&τ\coef			&t\coef			\\
			s\coef			&				&				&σ\coef			
		}\eqlabel{mat:4cycle}\]
		is full-rank.
		Note that none of the coefficients are zero.
		Therefore the last four rows span the dual space of
		\[*\bma{
			\dual1/s·t		&\dual-1/s·τ	&\dual1/σ·τ		&\dual-1/σ·t		
		}.\]*
		It remains to show that this vector is not
		orthogonal to the first row of \cref{mat:4cycle}.
		Symbolically, we want
		\[*\dual(s,t)\entry/s·t	-\dual(s,τ)\entry/s·τ	
			+\dual(σ,τ)\entry/σ·τ	-\dual(σ,t)\entry/σ·t	≠0.\]*
		A brute-force approach confirms this is true for all $1≤s,σ,t,τ≤7$.
	\end{proof}
	
	For the union strategy $7·9∪9·7∪53$ as depicted in \cref{fig:union},
	the resiliency is presumably the same as stated in \cref{thm:4cycle}.
	The only difference is that $\SS88$--$\SS99$ are permanently unknown.
	
	It is not difficult to see that, say, $9·11∪11·9∪57$ is also a valid union.
	The analysis of such higher-order protections is left for future works.
	We next go for higher-dimensional arrays.

\begin{figure}
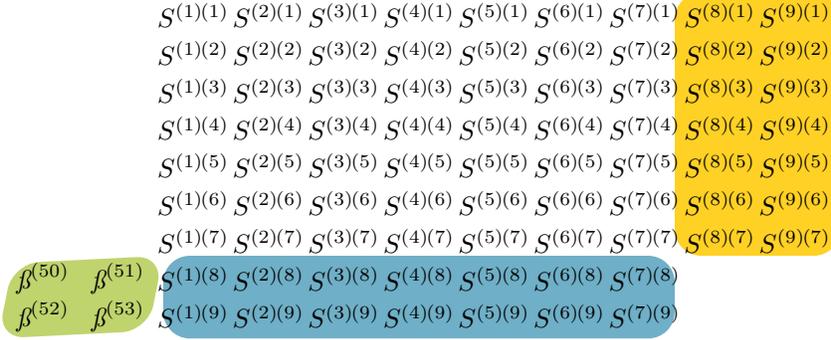

	\tikz[yscale=.5]{
		\begin{scope}[rounded corners=1em]
			\fill[Dark yellow](7-.1,.1)rectangle(9+.1,-7+.1);
			\fill[Gray-blue](.1,-7+.1)rectangle(7-.1,-9-.1);
			\fill[Citron](-2+.1,-7-.1)--(+.1,-7+.1)--(-.1,-9+.1)--(-2-.1,-9-.1)--cycle;
		\end{scope}
		\foreach\s in{1,...,9}{
			\foreach\t in{1,...,7}{
				\draw(\s-.5,.5-\t)node{$\SS\s\t$};
			}
		}
		\foreach\s in{1,...,7}{
			\foreach\t in{8,9}{
				\draw(\s-.5,.5-\t)node{$\SS\s\t$};
			}
		}
		\draw[shift={(-2.5,-6.5)}]
			(1,-1)node{$ß{50}$}(2,-1)node{$ß{51}$}
			(1,-2)node{$ß{52}$}(2,-2)node{$ß{53}$}
		;
	}
	\caption{
		The union of three strategies: $7·9$, $9·7$, and $49+4$.
		The gluing identifies $7·7$ (more concretely, $\SS11$--$\SS77$).
		Each of the first seven rows is $[9,8,2]$-MDS;
		so is every length-$9$ column $[9,8,2]$-MDS.
		The unshaded part and the parallelogramical-shaded part
		 together is $[53,52,2]$-MDS.
	}\label{fig:union}
\end{figure}

\subsection{Three-dimensional worker array}

	An $8×8$ matrix is
	\begin{itemize}
		\item	a $2×2$ matrix of $2×2$ matrices of $2×2$ matrices,
		\item	a $2×2$ matrix of $4×4$ matrices,
		\item	a $4×4$ matrix of $2×2$ matrices, and
		\item	a special data structure where
				the first $2×2$ and the third $2×2$	\\
				are treated as a $4×4$.
	\end{itemize}
	Different factorizations create endless compositions of error-correcting tactics.
	We discuss some in this section.
	
	\Cref{fig:cube} pictures a union of four strategies.
	In the middle is a $9×9×9$ array,
	which represent the third tensor power of $\S1$--$\S9$.
	Apart from the big cube there are three $2×2×9$ ``pillars''.
	Each $2×2$ slice of a pillar is a copy of $ß{50}$--$ß{53}$
	attached to the $7×7$ slice of $\SS11$--$\SS77$
	that shares a common coordinate;
	each pillar consists of $9$ such slices.
	
	Here is how we refer to those strategies.
	Label the axes $x$, $y$, and $z$ according to the right-hand law.
	The copies of $ß{50}$--$ß{53}$ perpendicular to the $x$-axis belongs to $9·53$.
	The replicas of $ß{50}$--$ß{53}$ perpendicular to the $z$-axis belongs to $53·9$.
	The duplicates of $ß{50}$--$ß{53}$ perpendicular to the $y$ axis
	belongs to $ρ(53·9)$, where $ρ$ is the rotation that sends
	the $x$-, $y$-, and $z$-axes to the $z$-, $x$-, and $y$-axes
	(in \cref{fig:cube}, it rotates $120^∘$ clockwise).
	In particular, $ρ(9·53)=53·9$ and $(ρ∘ρ)(53·9)=9·53$.
	
	Besides the aforementioned strategies, \cref{fig:cube}
	contains several interesting sub-strategies and inspires others.
	We cherry-pick some here.
	\begin{itemize}
		\item	$9·9·9$ can fix a single erasure on any axis.
				A smallest stopping set is a $2×2×2$ cube (eight unknowns).
		\item	$\Cyc7·7·9≔7·7·9∪7·9·7∪9·7·7$ is the counterpart of $7·9∪9·7$.
				A smallest stopping set is a ``claw'' with coordinates
				$(1,1,1)$, $(1,1,9)$, $(1,9,1)$, and $(9,1,1)$ (four unknowns).
		\item	$\Cyc7·53≔7·53∪53·7∪ρ(53·7)$ can fix a single erasure on any slice.
				A smallest stopping set is a $2×2$ square (four unknowns).
		\item	$\Cyc7·57≔7·57∪57·7∪ρ(57·7)$ can fix two erasures on any slice.
				A smallest stopping set is a $2×2×2$ cube minus a diagonal
				(six unknowns).
		\item	(\cref{fig:cube} itself) $9·9·9∪\Cyc9·53≔9·9·9∪9·53∪53·9∪ρ(53·9)$
				is such that every slice behaves like \cref{thm:4cycle}.
				A smallest stopping set is a $2×2×2$ cube plus
				all ``$ß{53}$'' whose slice intersects the cube (fourteen unknowns).
		\item	$7·9·7∪7·53∪53·7$ is such that every slice
				behaves like \cref{thm:2cycle}.
				A smallest stopping set is two on one of the $9$-axis
				plus the two ``$ß{53}$'' on the two slices that contain the axis.
	\end{itemize}
	
	We now call the end of the minimum distance flavor analysis
	and turn to a probabilistic analysis.

\begin{figure}
	\makeatletter
	\colorlet{Scube0}{Cool Gray 6}
	\colorlet{Scube1}{Gray-blue}
	\colorlet{Scube2}{Dark yellow}
	\colorlet{Scube4}{Salmon}
	\colorlet{Scube3}{Scube1!50!Scube2}
	\colorlet{Scube5}{Scube1!50!Scube4}
	\colorlet{Scube6}{Scube2!50!Scube4}
	\colorlet{Scube7}{Scube3!66.66!Scube4}
	\colorlet{Scube8}{Citron}
	\colorlet{Scube9}{Scube8!50!Scube4}
	\colorlet{Scube10}{Periwinkle}
	\colorlet{Scube11}{Scube10!50!Scube2}
	\colorlet{Scube12}{UCO}
	\colorlet{Scube13}{Scube12!50!Scube1}
	\pgfplotsforeachungrouped\index in{0,...,13}{
		\colorlet{Scube\index+}{Scube\index!75!white}
		\colorlet{Scube\index-}{Scube\index!75!black}
	}
	\def\cameradistance{36}
	\def\perspectivexyz#1#2#3{%
		\pgfpointxyz{#1-4}{#2-4}{#3-4}
		\PMS\depth{\rcarot*\pgftemp@x+\rcbrot*\pgftemp@y+\rccrot*\pgftemp@z}%
		\PMS\depthrescale{\cameradistance/(\cameradistance-\depth)/2}%
		\pgf@x=\depthrescale\pgf@x%
		\pgf@y=\depthrescale\pgf@y%
	}
	\def\dec{.3333}
	\def\X{\x\dec}\def\Y{\y\dec}\def\Z{\z\dec}
	\def\remember#1#2{%
		\pgf@process{#2}%
		\edef#1{\noexpand\pgfqpoint{\the\pgf@x}{\the\pgf@y}}%
	}
	\def\filldrawcube{%
			\remember\xyZ{\perspectivexyz\x\y\Z}\remember\xYZ{\perspectivexyz\x\Y\Z}%
		\remember\XyZ{\perspectivexyz\X\y\Z}\remember\XYZ{\perspectivexyz\X\Y\Z}%
												\remember\xYz{\perspectivexyz\x\Y\z}%
		\remember\Xyz{\perspectivexyz\X\y\z}\remember\XYz{\perspectivexyz\X\Y\z}%
		\pgfpathmoveto{\XYZ}\pgfpathlineto{\xYZ}%
		\pgfpathlineto{\xyZ}\pgfpathlineto{\XyZ}%
		\pgfpathclose\pgfsetfillcolor{Scube\paletteindex+}\pgfusepath{fill,draw}%
		\pgfpathmoveto{\XYZ}\pgfpathlineto{\XyZ}%
		\pgfpathlineto{\Xyz}\pgfpathlineto{\XYz}%
		\pgfpathclose\pgfsetfillcolor{Scube\paletteindex}\pgfusepath{fill,draw}%
		\pgfpathmoveto{\XYZ}\pgfpathlineto{\xYZ}%
		\pgfpathlineto{\xYz}\pgfpathlineto{\XYz}%
		\pgfpathclose\pgfsetfillcolor{Scube\paletteindex-}\pgfusepath{fill,draw}%
	}
	\tdplotsetmaincoords{90-23.5}{120} 
	\tikz[tdplot_main_coords,opacity=.8,line width=.2]{
		\iflabor
		\foreach\x in{9,10}{
			\foreach\y in{0,1}{
				\foreach\z in{2,...,10}{
					\PMT\paletteindex{8+(\z>8)}
					\filldrawcube
				}
			}
		}
		\foreach\x in{0,1}{
			\foreach\y in{2,...,10}{
				\foreach\z in{9,10}{
					\PMT\paletteindex{10+(\y>8)}
					\filldrawcube
				}
			}
		}
		\foreach\x in{2,...,10}{
			\foreach\y in{9,10}{
				\foreach\z in{0,1}{
					\PMT\paletteindex{12+(\x>8)}
					\filldrawcube
				}
			}
		}
		\foreach\x in{2,...,10}{
			\message{<\x>}
			\foreach\y in{2,...,10}{
				\foreach\z in{2,...,10}{
					\PMT\paletteindex{(\x>8)+2*(\y>8)+4*(\z>8)}
					\filldrawcube
				}
			}
		}
		\else
		\pgfpathmoveto{\perspectivexyz{10\dec}{10\dec}0}
		\pgfpathlineto{\perspectivexyz2{10\dec}0}
		\pgfpathlineto{\perspectivexyz0{10\dec}{10\dec}}
		\pgfpathlineto{\perspectivexyz02{10\dec}}
		\pgfpathlineto{\perspectivexyz{10\dec}0{10\dec}}
		\pgfpathlineto{\perspectivexyz{10\dec}02}
		\pgfpathclose
		\pgfusepath{draw}
		\fi
	}
	\caption{
		The union of four strategies---%
		$9·9·9$ and $53·9$ along with $9·53$ together with $ρ(9·53)$.
		Slices in every direction fall back to \cref{fig:Elias} union \cref{fig:union},
		or equivalently, $9·9$ union $53$.
	}\label{fig:cube}
\end{figure}

\section{Probabilistic Analysis}\label{sec:sample}

	In this section, we want to go beyond the MDS guarantees of Pluto coding
	and explore its performance in a probabilistic fashion.
	We want to compare Pluto to prior constructions, particularly EPC and PDC.
	Below is the setup for the comparison.
	
	Assume that each worker finishes its task at a random time,
	and that no two workers report their results at the same time.
	(If so, break tie by any means.)
	We further assume that the workers can report in any order,
	and every order (i.e., every permutation of workers) has an equal chance to occur.
	Define the \emph{recovery count} to be the number of done workers
	when the manager receives enough data to assemble the matrix $C≔AB$.
	The recovery count is thus a random variable depending on the permutation.
	
	Recall that EPC has recovery threshold $ℓmn+m-1$ and PDC has $ℓ(2m-1)n$.
	What that means in this context is that, the recovery counts
	of the said codes are guaranteed to be at most the thresholds.
	According to how EPC and PDC are defined, it is unlikely that
	their recovery counts will ever be strictly less than the aforementioned thresholds.
	We assume equality for the ease of comparison.
	
	With the rule set, we can now sample some Pluto codes' recovery counts
	a couple of times and plot the empirical cumulative distribution functions.
	Starting from \cref{fig:2x2}, we do this for $2×2$, $4×4$, and $8×8$ matrices
	to showcase how is Strassen doing.
	We repeat this for $6×6$, $9×9$, and $12×12$ matrices to showcase Laderman.
	Rectangular tactics mentioned in \cref{sec:rectangle} are also included.
	It is worth noting that in \cref{fig:2x2,fig:3x3},
	the exact probability is calculated, so the vertical axis is in percentage.
	
	Among these figures, please identify strategies using the numbers of workers.
	For instance, $7+2$ means $⟨2,2,2;9⟩$, i.e., $\S1$--$\S9$;
	and $9·26$ means the tensor strategy $⟨2,2,2;9⟩⊗⟨3,3,3;26⟩$.
	Et cetera.

\section{Conclusions and Discussion}

	Through the present paper we propose a method to multiply matrices
	that relies on workers that suffer from straggling.
	We suggest the manager to begin with an FMM algorithm
	and create checksums via the tautological equation
	\[*gCh=(gA)(Bh).\]*
	The resulting calculation scheme is named Pluto code;
	more precisely, it is a prime Pluto code.
	A composite Pluto code is then obtained from ``tensoring primes''.
	Furthermore, we can take unions of Pluto codes
	granted that the dimensions match.
	
	\Cref{tab:prime} puts some prime Pluto codes together for comprehension.
	Its columns are:
	matrix dimension/type;
	schoolbook complexity;
	bilinear rank;
	number of workers;
	exponent as in FMM literature;
	Pluto's threshold;
	EPC's threshold \cite{YMA20};
	PDC's threshold \cite{DFHJCG20};
	EPC's secondary threshold \cite[section~VI]{YMA20}.
	For practical dimensions, the last quantity is worse than EPC's primary threshold.
	\Cref{tab:compose} book-keeps some composite Pluto codes
	to illustrate how to tensor primes.
	
	We say a Pluto code is \emph{wealthy} if the number of
	required workers does not exceed the naïve complexity.
	Tensoring wealthy Pluto codes will always yield a wealthy one.
	In other words, Pluto codes provide erasure-resiliency
	(or \emph{straggler-mitigation} as said in some references) in the \emph{high-rate}
	region, where the manager prepares fewer workers than the naïve algorithm needs.
	In the past, Laderman and rectangular algorithms are only of academic interest
	because Strassen is simply better in terms of exponents and overheads.
	Nonetheless, the Pluto construction now motivates the search for, say,
	$⟨3,3,3;21⟩$ and $⟨4,4,4;48⟩$, so they can afford inserting more checksums.
	
	A more comprehensive elaboration on \emph{code rate} goes as follows:
	Fix $ℓ,m,n≥2$.
	The plain multiplication requires $ℓmn$ entry multiplications.
	FMM brings this number down to $\ran⟨ℓ,m,n⟩$.
	A wealthy Pluto code takes advantage of the gap $ℓmn-\ran⟨ℓ,m,n⟩$
	by asking the saved workers to build up backups.
	Comparing wealthy Pluto to naïve, we argue that, in return for
	the complicated linear combinations, Pluto endures lazy workers.
	This region is what we call the high-rate region.
	In this region, neither EPC nor PDC functions;
	actually, no known codes seem to be considering this region.
	
	Beyond $ℓmn+m-1$, EPC starts functioning.
	In this region, termed \emph{low-rate} region, EPC's threshold outperforms Pluto's.
	However, Pluto owns at least two advantages:
	One, it could still outcompete EPC under a probabilistic model.
	Two, whereas both EPC and Pluto rely heavily on linear combinations,
	the Pluto ones are ``shorter'' and ``lighter''.
	``Shorter'' implies that, over characteristic zero fields,
	Pluto poses less rounding errors than EPC does;
	and ``lighter'' means that Pluto coding works over smaller finite fields.
	Lastly, the TPC structure within composite Pluto respects worker locality
	in the network topology---especially, the decoding is highly parallelizable.
	See \cref{fig:worker} for an image summary.
	
	\cite[section~VI]{YMA20} proposed an EPC variant with threshold $2\ran⟨ℓ,m,n⟩-1$,
	which looks asymptotically better then $ℓmn+m-1$.
	According to \cite{Sedoglavic20}, however, the former is not less than the latter
	until $ℓ=m=n=28$, at which point we see $2·10{,}556-1<21{,}979$.
	Note that this construction now assumes a Reed--Solomon code of length $>20{,}000$.
	Over real numbers, the decoding of such a long code suffers from numerical errors;
	we do not expect that to be used in practical scenarios (but see \cite{FC19}).
	In spite of that, Pluto provides code lengths $10{,}556+28$, $10{,}556+56$,
	$10{,}556+84$, et seq, which transcends EPC again.
	
	In \cite{GWCR18} it is commented that around $5\%$ of workers straggle,
	and this is for AWS Lambda.
	For high-performance computation, the straggle probability is even more minuscule.
	We foresee that Pluto applies to these scenarios.

\begin{figure}
	\iflabor
	\tikz{
		\def\plotnum{0}\def\numplots{0}
		\begin{axis}[
			legend pos=south east,
			every axis plot/.style={xshift=(2*\plotnum+1-\numplots)*1.5},
			yticklabel={\PMT\pers{round(\tick*100)}\ifnum\pers=100 1\else\pers\%\fi}
		]
			\addlegendentry{$7+2$}\addplot+coordinates{
				(7,1/36)(8,1)};
			\addlegendentry{$7+4$}\addplot+coordinates{
				(7,1/330)(8,7/55)(9,1)};
			\addlegendentry{EPC $9$}\addplot+coordinates{(9,0)(9,1)};
			\addlegendentry{$7+6$}\addplot+coordinates{
				(7,1/1716)(8,1/33)(9,3/11)(10,1)};
			\addlegendentry{$7+8$}\addplot+coordinates{
				(7,1/6435)(8,29/2145)(9,141/1001)(10,2591/3003)(11,1)};
			\addlegendentry{$7+10$}\addplot+coordinates{
				(7,1/19448)(8,32/2431)(9,316/2431)(10,7363/9724)(11,5847/6188)(12,1)};
			\addlegendentry{$7+12$}\addplot+coordinates{
				(7,1/50388)(8,682/37791)(9,6768/46189)(10,31703/46189)(11,1771/1989)
				(12,12334/12597)(13,1)};
			\addlegendentry{PDC $12$}\addplot+coordinates{(12,0)(12,1)};
			\addlegendentry{EPC $13$}\addplot+coordinates{(13,0)(13,1)};
		\end{axis}
	}
	\fi
	\caption{
		Pluto codes to multiply $2×2$ matrices and the CDFs of their recovery counts.
		The vertical axis is the cumulative probability.
		The horizontal axis is the recovery count (always integer);
		plots are shifted horizontally to avoid collision.
		The vertical bar at $9$ is EPC's threshold;
		the bar at $12$ is PDC's threshold;
		at $13$ is EPC's secondary threshold.
	}\label{fig:2x2}
\end{figure}
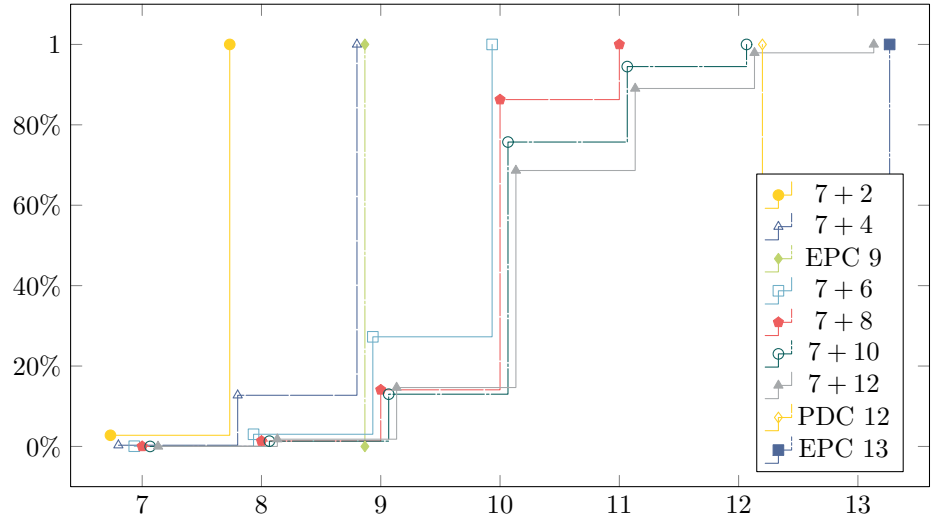

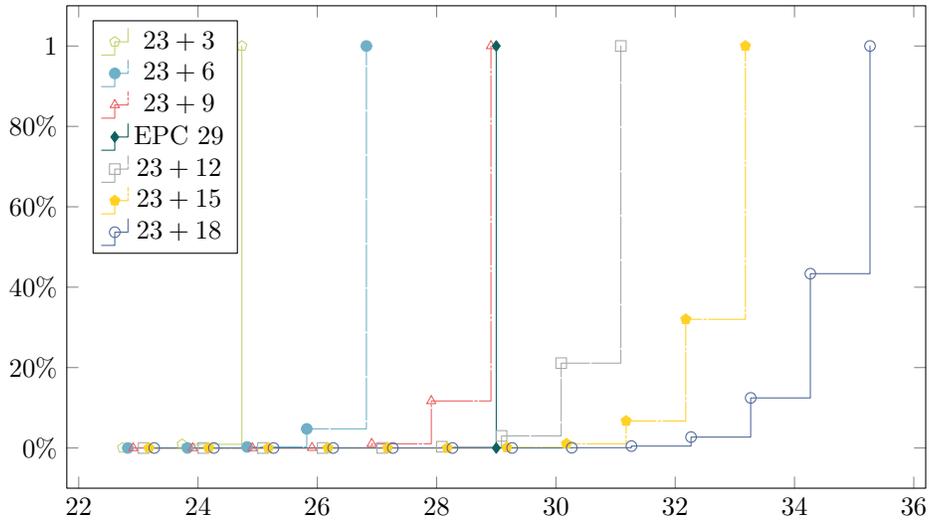
\begin{figure}
	\iflabor
	\tikz{
		\def\plotnum{0}\def\numplots{0}
		\begin{axis}[
			legend pos=north west,
			every axis plot/.style={xshift=(2*\plotnum+1-\numplots)*1},
			yticklabel={\PMT\pers{round(\tick*100)}\ifnum\pers=100 1\else\pers\%\fi}
		]
			\addlegendentry{$23+3$}\addplot+coordinates{
				(23,1/2600)(24,3/325)(25,1)};
			\addlegendentry{$23+6$}\addplot+coordinates{
				(23,1/475020)(24,2/39585)(25,61/23751)(26,1/21)(27,1)};
			\addlegendentry{$23+9$}\addplot+coordinates{
				(23,1/28048800)(24,1/1168700)(25,35/1121952)(26,275/453096)
				(27,263/25172)(28,526/4495)(29,1)};
			\addlegendentry{EPC $29$}\addplot+coordinates{(29,0)(29,1)};
			\addlegendentry{$23+12$}\addplot+coordinates{
				(23,1/834451800)(24,1/34768825)(25,79/91789698)(26,2/121737)
				(27,175/672452)(28,10511/3362260)(29,4885/162316)(30,8567/40579)(31,1)};
			\addlegendentry{$23+15$}\addplot+coordinates{
				(23,1/15471286560)(24,1/644636940)(25,55/1353737574)(26,55/73175004)
				(27,13649/1203322288)(28,64223/472733756)(29,108529/81505820)
				(30,516829/48903492)(31,282635/4206752)(32,883649/2760681)(33,1)};
			\addlegendentry{$23+18$}\addplot+coordinates{
				(23,1/202112640600)(24,1/8421360025)(25,97/34359148902)
				(26,1603/31716137448)(27,839/1136779120)(28,153821/17620076360)
				(29,337607/3949327460)(30,1103485/1579730984)(31,1340589/280274852)
				(32,9514938/350343565)(33,793664/6369883)(34,374981/864690)(35,1)};
		\end{axis}
	}
	\fi
	\caption{
		Pluto codes to multiply $3×3$ matrices and the CDFs of their recovery count.
		Plots are shifted horizontally to avoid collision.
		The line at $29$ is EPC's threshold.
		PDC's and EPC's secondary thresholds are at $45$, too far to be included.
	}\label{fig:3x3}
\end{figure}

\begin{figure}
	\iflabor
	\tikz{
		\begin{axis}[
			legend pos=north west,
		]
			\addlegendentry{$49+4$}\addplot+coordinates{
				(50,8)(51,217)(52,50000)};
			\addlegendentry{$49+8$}\addplot+coordinates{
				(52,1)(53,58)(54,1172)(55,50000)};
			\addlegendentry{$49+12$}\addplot+coordinates{
				(55,11)(56,217)(57,3053)(58,50000)};
			\addlegendentry{Charon $63$}\addplot+coordinates{
				(56,4)(57,84)(58,2570)(59,49244)(60,49896)(61,50000)};
			\addlegendentry{$7·9$}\addplot+coordinates{
				(54,3)(55,59)(56,685)(57,3331)(58,9613)(59,20038)(60,32602)(61,43711)
				(62,50000)};
			\addlegendentry{$49+16$}\addplot+coordinates{
				(57,3)(58,50)(59,603)(60,5852)(61,50000)};
			\addlegendentry{$7·9∪53$}\addplot+coordinates{
				(56,4)(57,24)(58,295)(59,2071)(60,7869)(61,18762)(62,32065)(63,42866)
				(64,48513)(65,50000)};
			\addlegendentry{$49+20$}\addplot+coordinates{
				(59,1)(60,17)(61,161)(62,1353)(63,9431)(64,50000)};
			\addlegendentry{$49+24$}\addplot+coordinates{
				(62,7)(63,57)(64,414)(65,2608)(66,13310)(67,50000)};
			\addlegendentry{EPC $67$}\addplot+coordinates{(67,0)(67,50000)};
		\end{axis}
	}
	\fi
	\caption{
		Pluto for $4×4$ matrices---first half.
		The vertical axis is the empirical accumulated number of samples.
		Charon is a 63-worker tactic constructed in \cref{app:Charon}:
		it applies FMM to the computation of checksums to save workers.
	}\label{fig:4x4win}
\end{figure}
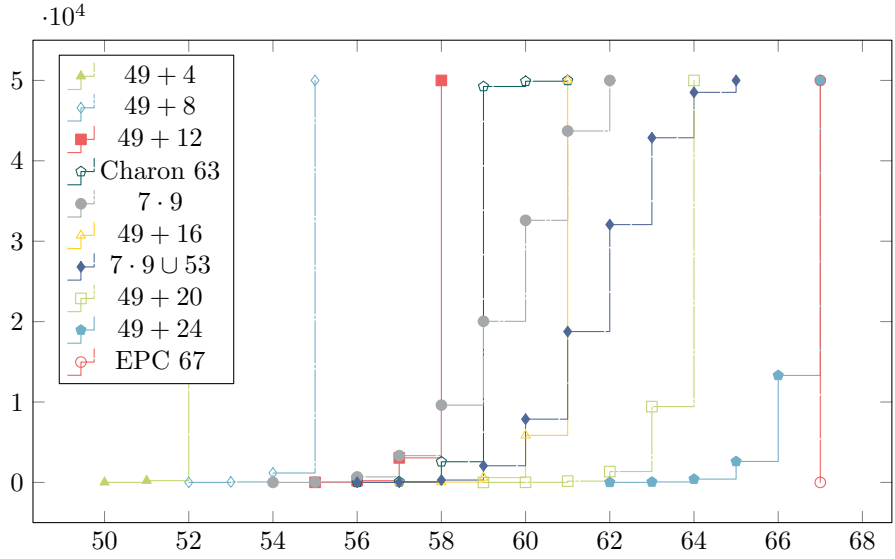

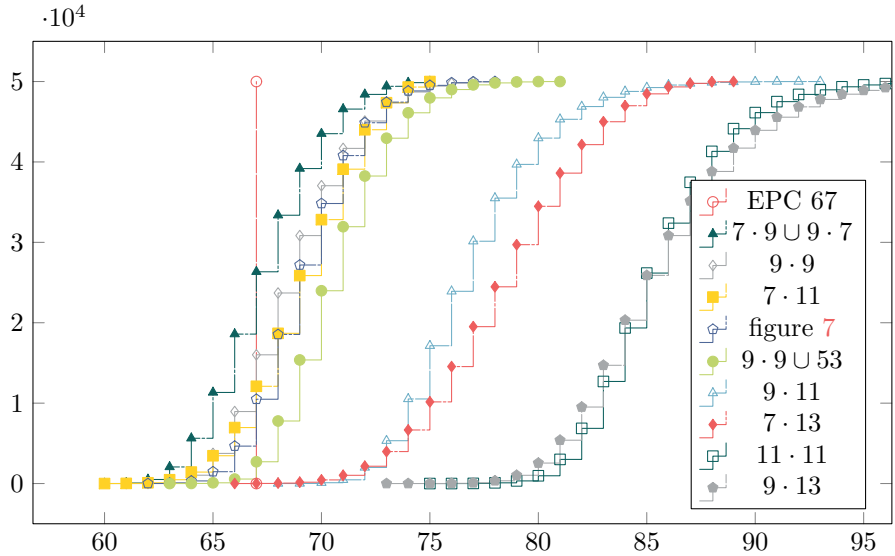
\begin{figure}
	\iflabor
	\tikz{
		\begin{axis}[
			legend pos=south east,
		]
			\xdef\accumulatenumplots{\the\numexpr\accumulatenumplots-1}
			\addlegendentry{EPC $67$}\addplot+coordinates{(67,0)(67,50000)};
			\addlegendentry{$7·9∪9·7$}\addplot+coordinates{
				(60,6)(61,70)(62,506)(63,2064)(64,5632)(65,11330)(66,18588)(67,26339)
				(68,33372)(69,39178)(70,43515)(71,46575)(72,48399)(73,49407)(74,49864)
				(75,50000)};
			\addlegendentry{$9·9$}\addplot+coordinates{
				(61,1)(62,10)(63,88)(64,1037)(65,3749)(66,8957)(67,16012)(68,23702)
				(69,30841)(70,37043)(71,41685)(72,45089)(73,47342)(74,48699)(75,49435)
				(76,49809)(77,49960)(78,50000)};
			\addlegendentry{$7·11$}\addplot+coordinates{
				(60,2)(61,21)(62,98)(63,475)(64,1427)(65,3459)(66,6963)(67,12098)
				(68,18685)(69,25867)(70,32825)(71,39100)(72,44005)(73,47367)(74,49314)
				(75,50000)};
			\addlegendentry{\cref{fig:union}}\addplot+coordinates{
				(62,5)(63,59)(64,328)(65,1474)(66,4656)(67,10487)(68,18544)(69,27182)
				(70,34830)(71,40791)(72,44876)(73,47438)(74,48833)(75,49525)(76,49860)
				(77,49977)(78,50000)};
			\addlegendentry{$9·9∪53$}\addplot+coordinates{
				(63,2)(64,14)(65,94)(66,561)(67,2711)(68,7780)(69,15374)(70,23980)
				(71,31945)(72,38248)(73,42953)(74,46113)(75,47970)(76,49005)(77,49588)
				(78,49830)(79,49953)(80,49996)(81,50000)};
			\addlegendentry{$9·11$}\addplot+coordinates{
				(68,4)(69,13)(70,104)(71,465)(72,2003)(73,5329)(74,10522)(75,17140)
				(76,23910)(77,30138)(78,35503)(79,39694)(80,42966)(81,45304)(82,46889)
				(83,48035)(84,48770)(85,49251)(86,49565)(87,49761)(88,49891)(89,49949)
				(90,49981)(91,49996)(92,49999)(93,50000)};
			\addlegendentry{$7·13$}\addplot+coordinates{
				(66,3)(67,15)(68,55)(69,166)(70,453)(71,1026)(72,2155)(73,3977)(74,6664)
				(75,10157)(76,14547)(77,19514)(78,24471)(79,29710)(80,34483)(81,38609)
				(82,42152)(83,45001)(84,46991)(85,48468)(86,49333)(87,49774)(88,49966)
				(89,50000)};
			\addlegendentry{$11·11$}\addplot+[update limits=false]coordinates{
				(75,1)(76,2)(77,19)(78,90)(79,329)(80,968)(81,3003)(82,6864)(83,12697)
				(84,19351)(85,26179)(86,32390)(87,37473)(88,41315)(89,44138)(90,46136)
				(91,47501)(92,48398)(93,48970)(94,49347)(95,49578)(96,49745)(97,49854)
				(98,49912)(99,49946)(100,49962)(101,49979)(102,49988)(103,49993)
				(104,49997)(105,49998)(107,49999)(108,50000)};
			\addlegendentry{$9·13$}\addplot+[update limits=false]coordinates{
				(73,1)(74,3)(75,6)(76,26)(77,97)(78,345)(79,1027)(80,2548)(81,5391)
				(82,9516)(83,14708)(84,20318)(85,25891)(86,30843)(87,35153)(88,38815)
				(89,41722)(90,43932)(91,45566)(92,46851)(93,47773)(94,48427)(95,48899)
				(96,49249)(97,49498)(98,49669)(99,49785)(100,49867)(101,49923)
				(102,49954)(103,49970)(104,49984)(105,49991)(106,49997)(107,50000)};
		\end{axis}
	}
	\fi
	\caption{
		Pluto for $4×4$ matrices---second half.
		Whereas these recovery counts are in general worse than EPC's 67,
		we argue that Pluto's pre- and post-processes are more lightweight.
	}\label{fig:4x4bad}
\end{figure}

\begin{figure}
	\iflabor
	\tikz{
		\begin{axis}[
			legend pos=south east,
			scaled y ticks=base 10:-3,
		]
			\addlegendentry{$7·29$}\addplot+coordinates{
				(188,6)(189,62)(190,272)(191,829)(192,1985)(193,3710)(194,6188)
				(195,9078)(196,12031)(197,14830)(198,17116)(199,18743)(200,19656)
				(201,20000)};
			\addlegendentry{$9·23$}\addplot+coordinates{
				(188,2)(189,11)(190,19)(191,41)(192,121)(193,270)(194,524)(195,977)
				(196,1682)(197,2759)(198,4251)(199,6139)(200,8426)(201,10840)(202,13321)
				(203,15829)(204,17808)(205,19254)(206,20000)};
			\addlegendentry{$9·26$}\addplot+coordinates{
				(202,7)(203,28)(204,67)(205,150)(206,370)(207,760)(208,1422)(209,2309)
				(210,3584)(211,5135)(212,6841)(213,8602)(214,10485)(215,12192)
				(216,13673)(217,15047)(218,16202)(219,17110)(220,17908)(221,18508)
				(222,18960)(223,19320)(224,19571)(225,19753)(226,19855)(227,19932)
				(228,19969)(229,19990)(230,19998)(231,20000)};
			\addlegendentry{$(13·18)$}\addplot+coordinates{
				(204,5)(205,26)(206,125)(207,374)(208,892)(209,1727)(210,2890)(211,4312)
				(212,6011)(213,7851)(214,9750)(215,11651)(216,13385)(217,14881)
				(218,16121)(219,17112)(220,17891)(221,18524)(222,19044)(223,19379)
				(224,19602)(225,19747)(226,19843)(227,19919)(228,19965)(229,19992)
				(230,19997)(231,20000)};
			\addlegendentry{$(14·17)$}\addplot+coordinates{
				(207,1)(208,9)(209,45)(210,148)(211,416)(212,917)(213,1832)(214,3077)
				(215,4682)(216,6406)(217,8279)(218,10189)(219,12012)(220,13650)
				(221,15079)(222,16286)(223,17227)(224,17985)(225,18549)(226,19022)
				(227,19357)(228,19600)(229,19754)(230,19863)(231,19936)(232,19968)
				(233,19987)(234,19999)(235,20000)};
			\addlegendentry{EPC $221$}\addplot+coordinates{(221,0)(221,20000)};
			\addlegendentry{$9·29$}\addplot+coordinates{
				(215,2)(217,16)(218,49)(219,157)(220,423)(221,908)(222,1677)(223,2827)
				(224,4309)(225,6016)(226,7782)(227,9670)(228,11414)(229,12980)
				(230,14330)(231,15490)(232,16508)(233,17289)(234,17956)(235,18424)
				(236,18826)(237,19138)(238,19355)(239,19545)(240,19665)(241,19770)
				(242,19833)(243,19889)(244,19925)(245,19953)(246,19968)(247,19982)
				(248,19989)(249,19993)(250,19998)(251,19999)(254,20000)};
			\addlegendentry{$11·23$}\addplot+coordinates{
				(215,2)(216,3)(217,9)(218,24)(219,54)(220,87)(221,139)(222,204)(223,333)
				(224,493)(225,717)(226,1034)(227,1420)(228,1942)(229,2573)(230,3297)
				(231,4126)(232,5130)(233,6222)(234,7410)(235,8618)(236,9923)(237,11250)
				(238,12472)(239,13628)(240,14756)(241,15804)(242,16703)(243,17491)
				(244,18183)(245,18748)(246,19182)(247,19514)(248,19753)(249,19910)
				(250,19976)(251,20000)};
			\addlegendentry{$11·26$}\addplot+[update limits=false]coordinates{
				(225,1)(227,3)(228,4)(229,13)(230,27)(231,65)(232,124)(233,243)(234,439)
				(235,729)(236,1147)(237,1727)(238,2481)(239,3453)(240,4552)(241,5790)
				(242,7127)(243,8553)(244,9856)(245,11138)(246,12364)(247,13536)
				(248,14564)(249,15521)(250,16284)(251,16974)(252,17571)(253,18033)
				(254,18434)(255,18745)(256,19000)(257,19225)(258,19414)(259,19549)
				(260,19654)(261,19734)(262,19805)(263,19852)(264,19895)(265,19931)
				(266,19956)(267,19971)(268,19985)(269,19990)(270,19996)(272,19997)
				(273,19999)(275,20000)};
		\end{axis}
	}
	\fi
	\caption{
		$6×6$.
		Between the parentheses are the rectangular syntheses
		$⟨2,2,3;13⟩⊗⟨3,3,2;18⟩$ and $⟨2,3,2;14⟩⊗⟨3,2,3;17⟩$,.
	}\label{fig:6x6}
\end{figure}

\begin{figure}
	\iflabor
	\tikz{
		\begin{axis}[
			legend pos=south east,
			scaled y ticks=base 10:-3,
		]
			\addlegendentry{$\Cyc7·53$}\addplot+coordinates{
				(404,4)(405,28)(406,158)(407,577)(408,1468)(409,2734)(410,4189)
				(411,5655)(412,6909)(413,7894)(414,8636)(415,9139)(416,9501)(417,9716)
				(418,9844)(419,9916)(420,9959)(421,9985)(422,9997)(423,9999)
				(424,10000)};
			\addlegendentry{two $53$ a $57$}\addplot+coordinates{
				(424,3)(425,18)(426,47)(427,165)(428,447)(429,1025)(430,1956)(431,3182)
				(432,4511)(433,5786)(434,6926)(435,7844)(436,8548)(437,9039)(438,9395)
				(439,9625)(440,9761)(441,9841)(442,9900)(443,9940)(444,9966)(445,9988)
				(446,9994)(447,10000)};
			\addlegendentry{$9·53$}\addplot+coordinates{
				(430,2)(431,5)(432,11)(433,17)(434,33)(435,63)(436,104)(437,179)
				(438,287)(439,450)(440,658)(441,931)(442,1278)(443,1680)(444,2136)
				(445,2675)(446,3215)(447,3815)(448,4402)(449,5005)(450,5614)(451,6173)
				(452,6695)(453,7267)(454,7744)(455,8134)(456,8481)(457,8743)(458,8999)
				(459,9231)(460,9405)(461,9534)(462,9654)(463,9741)(464,9818)(465,9873)
				(466,9911)(467,9947)(468,9969)(469,9984)(470,9995)(471,10000)};
			\addlegendentry{a $53$ two $57$}\addplot+coordinates{
				(444,1)(445,5)(446,19)(447,71)(448,184)(449,403)(450,837)(451,1534)
				(452,2486)(453,3600)(454,4818)(455,6002)(456,7093)(457,7920)(458,8591)
				(459,9041)(460,9366)(461,9613)(462,9760)(463,9865)(464,9920)(465,9950)
				(466,9966)(467,9979)(468,9982)(469,9989)(470,9996)(471,9998)
				(473,10000)};
			\addlegendentry{$9·57$}\addplot+coordinates{
				(450,2)(452,5)(453,17)(454,39)(455,69)(456,123)(457,214)(458,345)
				(459,581)(460,867)(461,1240)(462,1672)(463,2199)(464,2778)(465,3386)
				(466,4042)(467,4678)(468,5345)(469,5955)(470,6523)(471,7094)(472,7533)
				(473,7945)(474,8258)(475,8570)(476,8836)(477,9049)(478,9223)(479,9382)
				(480,9515)(481,9622)(482,9694)(483,9759)(484,9800)(485,9846)(486,9872)
				(487,9902)(488,9918)(489,9937)(490,9953)(491,9964)(492,9975)(493,9986)
				(494,9994)(495,9995)(496,9996)(497,9998)(499,10000)};
			\addlegendentry{$\Cyc7·57$}\addplot+coordinates{
				(465,3)(466,11)(467,40)(468,100)(469,198)(470,410)(471,777)(472,1323)
				(473,2108)(474,3076)(475,4195)(476,5298)(477,6386)(478,7325)(479,8142)
				(480,8759)(481,9194)(482,9491)(483,9683)(484,9801)(485,9889)(486,9932)
				(487,9962)(488,9980)(489,9988)(490,9993)(491,9996)(492,9997)(493,9998)
				(496,10000)};
			\addlegendentry{$7·9·9$}\addplot+coordinates{
				(474,2)(475,6)(477,7)(478,8)(479,12)(480,15)(481,28)(482,34)(483,41)
				(484,58)(485,81)(486,104)(487,132)(488,167)(489,212)(490,259)(491,309)
				(492,384)(493,495)(494,597)(495,717)(496,837)(497,981)(498,1131)
				(499,1285)(500,1461)(501,1655)(502,1861)(503,2102)(504,2330)(505,2552)
				(506,2802)(507,3080)(508,3333)(509,3623)(510,3885)(511,4182)(512,4476)
				(513,4788)(514,5063)(515,5304)(516,5568)(517,5833)(518,6112)(519,6377)
				(520,6636)(521,6874)(522,7109)(523,7328)(524,7560)(525,7768)(526,7952)
				(527,8141)(528,8310)(529,8498)(530,8666)(531,8797)(532,8944)(533,9063)
				(534,9194)(535,9279)(536,9346)(537,9435)(538,9508)(539,9558)(540,9624)
				(541,9683)(542,9729)(543,9768)(544,9801)(545,9840)(546,9868)(547,9894)
				(548,9919)(549,9935)(550,9954)(551,9962)(552,9972)(553,9981)(554,9986)
				(555,9989)(556,9994)(557,9996)(558,9998)(559,9999)(560,10000)};
			\addlegendentry{EPC $519$}\addplot+coordinates{(519,0)(519,10000)};
			\addlegendentry{$11·53$}\addplot+coordinates{
				(482,1)(491,2)(492,3)(493,4)(494,11)(495,13)(496,20)(497,37)(498,53)
				(499,67)(500,106)(501,153)(502,208)(503,277)(504,365)(505,481)(506,641)
				(507,815)(508,1019)(509,1263)(510,1564)(511,1886)(512,2215)(513,2539)
				(514,2960)(515,3378)(516,3784)(517,4210)(518,4637)(519,5069)(520,5481)
				(521,5862)(522,6259)(523,6629)(524,6952)(525,7269)(526,7592)(527,7857)
				(528,8111)(529,8333)(530,8541)(531,8748)(532,8901)(533,9040)(534,9168)
				(535,9276)(536,9373)(537,9469)(538,9550)(539,9606)(540,9669)(541,9715)
				(542,9762)(543,9802)(544,9830)(545,9862)(546,9885)(547,9907)(548,9925)
				(549,9934)(550,9945)(551,9960)(552,9964)(553,9971)(554,9979)(555,9986)
				(556,9990)(557,9991)(558,9995)(559,9996)(560,9997)(563,10000)};
			\addlegendentry{$11·57$}\addplot+coordinates{
				(510,1)(515,3)(516,5)(517,7)(518,19)(519,33)(520,56)(521,84)(522,128)
				(523,189)(524,264)(525,348)(526,497)(527,651)(528,827)(529,1087)
				(530,1361)(531,1702)(532,2058)(533,2487)(534,2940)(535,3424)(536,3881)
				(537,4388)(538,4928)(539,5411)(540,5848)(541,6292)(542,6767)(543,7232)
				(544,7586)(545,7920)(546,8219)(547,8481)(548,8737)(549,8966)(550,9136)
				(551,9274)(552,9384)(553,9484)(554,9580)(555,9662)(556,9740)(557,9787)
				(558,9825)(559,9855)(560,9876)(561,9893)(562,9914)(563,9931)(564,9944)
				(565,9953)(566,9962)(567,9974)(568,9980)(569,9982)(570,9985)(571,9990)
				(572,9993)(573,9995)(574,9998)(585,9999)(589,10000)};
			\addlegendentry{$\Cyc7·7·9$}\addplot+[update limits=false]coordinates{
				(502,1)(504,4)(505,5)(506,10)(508,11)(509,15)(510,26)(511,32)(512,39)
				(513,54)(514,64)(515,75)(516,96)(517,109)(518,134)(519,163)(520,198)
				(521,230)(522,266)(523,306)(524,362)(525,411)(526,479)(527,551)(528,632)
				(529,720)(530,812)(531,882)(532,985)(533,1113)(534,1250)(535,1371)
				(536,1504)(537,1647)(538,1781)(539,1928)(540,2071)(541,2201)(542,2388)
				(543,2561)(544,2714)(545,2894)(546,3091)(547,3260)(548,3449)(549,3644)
				(550,3828)(551,4031)(552,4220)(553,4407)(554,4604)(555,4789)(556,4990)
				(557,5174)(558,5343)(559,5534)(560,5727)(561,5911)(562,6078)(563,6251)
				(564,6407)(565,6578)(566,6744)(567,6907)(568,7065)(569,7189)(570,7342)
				(571,7495)(572,7637)(573,7760)(574,7892)(575,8007)(576,8133)(577,8244)
				(578,8354)(579,8464)(580,8560)(581,8659)(582,8763)(583,8838)(584,8926)
				(585,8984)(586,9059)(587,9120)(588,9182)(589,9244)(590,9293)(591,9347)
				(592,9412)(593,9452)(594,9516)(595,9556)(596,9595)(597,9638)(598,9659)
				(599,9691)(600,9725)(601,9760)(602,9788)(603,9821)(604,9841)(605,9855)
				(606,9876)(607,9890)(608,9900)(609,9909)(610,9922)(611,9938)(612,9944)
				(613,9953)(614,9959)(615,9965)(616,9973)(617,9975)(618,9981)(619,9987)
				(620,9990)(621,9994)(622,9996)(623,9997)(624,9999)(627,10000)};
			\addlegendentry{$9·9·9$}\addplot+[update limits=false]coordinates{
				(519,1)(521,2)(522,7)(523,13)(524,16)(525,21)(526,38)(527,59)(528,97)
				(529,132)(530,173)(531,238)(532,320)(533,437)(534,560)(535,711)(536,861)
				(537,1023)(538,1218)(539,1409)(540,1643)(541,1872)(542,2125)(543,2384)
				(544,2632)(545,2883)(546,3107)(547,3376)(548,3628)(549,3896)(550,4140)
				(551,4394)(552,4670)(553,4894)(554,5115)(555,5328)(556,5539)(557,5727)
				(558,5940)(559,6119)(560,6302)(561,6476)(562,6652)(563,6823)(564,6989)
				(565,7139)(566,7278)(567,7420)(568,7543)(569,7679)(570,7790)(571,7896)
				(572,8011)(573,8121)(574,8232)(575,8333)(576,8412)(577,8505)(578,8601)
				(579,8685)(580,8764)(581,8838)(582,8914)(583,8968)(584,9030)(585,9088)
				(586,9141)(587,9198)(588,9248)(589,9280)(590,9322)(591,9373)(592,9410)
				(593,9456)(594,9491)(595,9529)(596,9551)(597,9575)(598,9603)(599,9632)
				(600,9656)(601,9678)(602,9706)(603,9729)(604,9741)(605,9758)(606,9775)
				(607,9793)(608,9811)(609,9823)(610,9828)(611,9835)(612,9842)(613,9856)
				(614,9866)(615,9878)(616,9889)(617,9896)(618,9898)(619,9901)(620,9908)
				(621,9913)(622,9918)(623,9927)(624,9935)(625,9940)(626,9945)(627,9946)
				(628,9949)(629,9951)(630,9954)(631,9962)(632,9964)(633,9967)(634,9968)
				(635,9973)(636,9976)(637,9977)(638,9978)(639,9979)(641,9980)(642,9982)
				(643,9983)(644,9985)(645,9986)(646,9987)(650,9989)(651,9990)(652,9992)
				(654,9994)(656,9997)(657,9998)(663,9999)(667,10000)};
		\end{axis}
	}
	\fi
	\caption{
		$8×8$.
		``Two $53$ a $57$'' means $7·53∪53·7∪ρ(57·7)$.
		``A $53$ two $57$'' means $7·53∪57·7∪ρ(57·7)$.
	}\label{fig:8x8}
\end{figure}

\begin{figure}
	\iflabor
	\tikz{
		\begin{axis}[
			legend pos=south east,
			scaled y ticks=base 10:-3,
		]
			\addlegendentry{$\Cyc23·26$}\addplot+coordinates{
				(624,1)(625,3)(626,5)(627,14)(628,26)(629,61)(630,114)(631,186)(632,286)
				(633,417)(634,569)(635,751)(636,967)(637,1234)(638,1513)(639,1805)
				(640,2098)(641,2423)(642,2725)(643,3019)(644,3275)(645,3521)(646,3760)
				(647,3949)(648,4120)(649,4258)(650,4417)(651,4547)(652,4636)(653,4737)
				(654,4798)(655,4844)(656,4883)(657,4921)(658,4941)(659,4966)(660,4985)
				(661,4988)(662,4994)(663,4998)(664,5000)};
			\addlegendentry{$26·26$}\addplot+coordinates{
				(628,1)(629,4)(630,8)(631,16)(632,31)(633,50)(634,100)(635,190)(636,311)
				(637,452)(638,634)(639,859)(640,1120)(641,1405)(642,1717)(643,2037)
				(644,2364)(645,2631)(646,2930)(647,3210)(648,3432)(649,3646)(650,3868)
				(651,4036)(652,4197)(653,4342)(654,4464)(655,4573)(656,4645)(657,4723)
				(658,4792)(659,4840)(660,4880)(661,4911)(662,4932)(663,4960)(664,4973)
				(665,4985)(666,4992)(667,4996)(668,4999)(669,5000)};
			\addlegendentry{$23·29$}\addplot+coordinates{
				(633,1)(634,2)(635,8)(636,17)(637,36)(638,58)(639,98)(640,154)(641,208)
				(642,290)(643,393)(644,556)(645,738)(646,951)(647,1167)(648,1439)
				(649,1756)(650,2080)(651,2422)(652,2751)(653,3102)(654,3401)(655,3713)
				(656,3988)(657,4236)(658,4441)(659,4605)(660,4748)(661,4860)(662,4933)
				(663,4979)(664,4996)(665,5000)};
			\addlegendentry{$23·29∪26·23$}\addplot+coordinates{
				(670,2)(671,5)(672,7)(673,21)(674,31)(675,53)(676,89)(677,137)(678,210)
				(679,289)(680,381)(681,513)(682,674)(683,857)(684,1063)(685,1278)
				(686,1487)(687,1715)(688,1949)(689,2186)(690,2441)(691,2673)(692,2902)
				(693,3101)(694,3298)(695,3487)(696,3652)(697,3806)(698,3942)(699,4066)
				(700,4189)(701,4282)(702,4368)(703,4455)(704,4533)(705,4610)(706,4666)
				(707,4726)(708,4767)(709,4808)(710,4835)(711,4874)(712,4895)(713,4922)
				(714,4936)(715,4954)(716,4964)(717,4973)(718,4976)(719,4985)(720,4987)
				(721,4994)(722,4995)(723,4996)(724,4998)(725,4999)(730,5000)};
			\addlegendentry{$26·29$}\addplot+coordinates{
				(678,1)(679,2)(680,6)(681,17)(682,48)(683,104)(684,177)(685,291)
				(686,447)(687,632)(688,852)(689,1129)(690,1446)(691,1714)(692,1987)
				(693,2268)(694,2547)(695,2810)(696,3051)(697,3306)(698,3517)(699,3711)
				(700,3881)(701,4053)(702,4185)(703,4315)(704,4417)(705,4519)(706,4577)
				(707,4642)(708,4688)(709,4740)(710,4782)(711,4818)(712,4848)(713,4871)
				(714,4888)(715,4906)(716,4922)(717,4928)(718,4934)(719,4942)(720,4958)
				(721,4968)(722,4977)(723,4982)(724,4984)(725,4987)(726,4992)(727,4993)
				(730,4995)(731,4996)(732,4998)(733,4999)(734,5000)};
			\addlegendentry{$\Cyc23·29$}\addplot+coordinates{
				(712,2)(713,4)(714,8)(715,15)(716,32)(717,55)(718,86)(719,130)(720,187)
				(721,268)(722,368)(723,486)(724,635)(725,791)(726,964)(727,1129)
				(728,1335)(729,1546)(730,1750)(731,1951)(732,2162)(733,2371)(734,2577)
				(735,2785)(736,2994)(737,3162)(738,3349)(739,3519)(740,3653)(741,3789)
				(742,3918)(743,4042)(744,4132)(745,4219)(746,4312)(747,4382)(748,4444)
				(749,4514)(750,4570)(751,4617)(752,4660)(753,4703)(754,4746)(755,4771)
				(756,4794)(757,4818)(758,4836)(759,4858)(760,4877)(761,4894)(762,4906)
				(763,4919)(764,4926)(765,4939)(766,4948)(767,4954)(768,4960)(769,4964)
				(770,4971)(771,4974)(772,4976)(773,4980)(774,4983)(775,4986)(776,4987)
				(777,4989)(778,4993)(780,4994)(781,4995)(782,4996)(783,4997)(784,4998)
				(789,4999)(791,5000)};
			\addlegendentry{EPC $737$}\addplot+coordinates{(737,0)(737,5000)};
			\addlegendentry{$\Cyc26·29$}\addplot+coordinates{
				(727,2)(728,5)(729,12)(730,23)(731,49)(732,96)(733,165)(734,285)
				(735,420)(736,607)(737,820)(738,1077)(739,1373)(740,1685)(741,2018)
				(742,2353)(743,2651)(744,2964)(745,3260)(746,3567)(747,3793)(748,3991)
				(749,4160)(750,4288)(751,4412)(752,4522)(753,4618)(754,4680)(755,4741)
				(756,4793)(757,4830)(758,4865)(759,4882)(760,4899)(761,4921)(762,4939)
				(763,4951)(764,4961)(765,4973)(766,4979)(767,4982)(768,4985)(769,4987)
				(770,4989)(771,4991)(773,4993)(774,4995)(777,4996)(781,4997)(782,4999)
				(788,5000)};
			\addlegendentry{$29·29$}\addplot+coordinates{
				(732,3)(733,9)(734,15)(735,31)(736,71)(737,138)(738,215)(739,352)
				(740,526)(741,750)(742,1003)(743,1299)(744,1641)(745,2011)(746,2348)
				(747,2687)(748,3035)(749,3352)(750,3623)(751,3849)(752,4053)(753,4228)
				(754,4393)(755,4505)(756,4588)(757,4673)(758,4734)(759,4791)(760,4834)
				(761,4874)(762,4902)(763,4919)(764,4930)(765,4950)(766,4956)(767,4965)
				(768,4969)(769,4974)(770,4983)(772,4985)(773,4987)(774,4993)(775,4994)
				(777,4995)(779,4996)(780,4997)(787,4998)(790,5000)};
		\end{axis}
	}
	\fi
	\caption{
		$9×9$.
		Strategy $26·29=754$ is the example in the abstract.
		For none of $5{,}000$ samples does this strategy fall behind EPC.
	}\label{fig:9x9}
\end{figure}
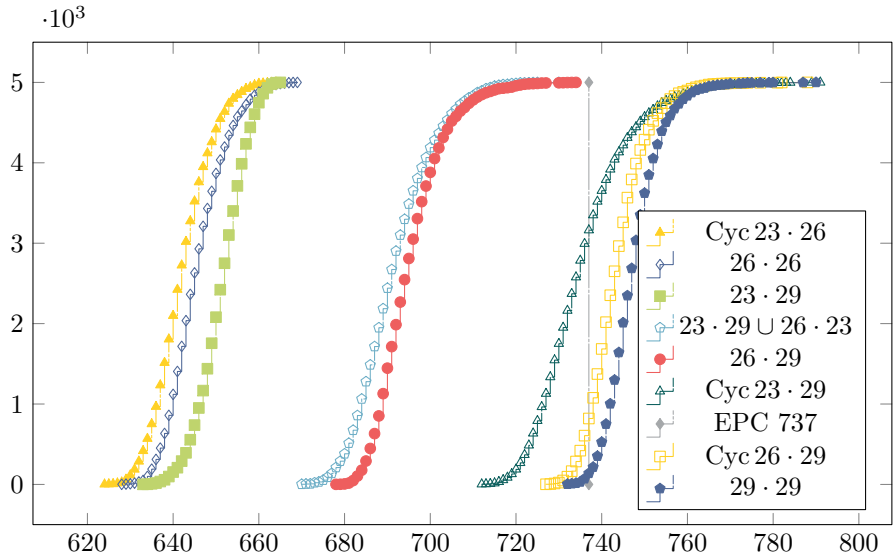

\begin{figure}
	\iflabor
	\tikz{
		\begin{axis}[
			legend pos=south east,
			scaled y ticks=base 10:-2,
		]
			\addlegendentry{$[32·42]$}\addplot+coordinates{
				(1279,1)(1281,2)(1282,3)(1283,4)(1284,13)(1285,22)(1286,30)(1287,48)
				(1288,73)(1289,99)(1290,126)(1291,167)(1292,221)(1293,275)(1294,336)
				(1295,402)(1296,496)(1297,584)(1298,663)(1299,746)(1300,832)(1301,928)
				(1302,1022)(1303,1112)(1304,1189)(1305,1277)(1306,1351)(1307,1429)
				(1308,1493)(1309,1553)(1310,1601)(1311,1658)(1312,1703)(1313,1739)
				(1314,1776)(1315,1806)(1316,1835)(1317,1862)(1318,1877)(1319,1899)
				(1320,1919)(1321,1930)(1322,1940)(1323,1955)(1324,1964)(1325,1973)
				(1326,1976)(1327,1983)(1328,1988)(1329,1991)(1330,1995)(1331,1996)
				(1332,1997)(1333,1998)(1337,1999)(1338,2000)};
			\addlegendentry{$[33·41]$}\addplot+coordinates{
				(1288,1)(1292,6)(1293,12)(1294,22)(1295,35)(1296,45)(1297,64)(1298,92)
				(1299,126)(1300,167)(1301,208)(1302,286)(1303,357)(1304,433)(1305,510)
				(1306,606)(1307,710)(1308,803)(1309,909)(1310,999)(1311,1079)(1312,1166)
				(1313,1247)(1314,1331)(1315,1401)(1316,1459)(1317,1521)(1318,1574)
				(1319,1632)(1320,1676)(1321,1709)(1322,1743)(1323,1775)(1324,1811)
				(1325,1844)(1326,1873)(1327,1893)(1328,1908)(1329,1923)(1330,1939)
				(1331,1949)(1332,1959)(1333,1970)(1334,1977)(1335,1984)(1336,1988)
				(1337,1989)(1339,1994)(1340,1998)(1343,1999)(1344,2000)};
			\addlegendentry{$26·53$}\addplot+coordinates{
				(1312,1)(1313,3)(1314,4)(1315,7)(1316,8)(1317,14)(1318,25)(1319,40)
				(1320,59)(1321,83)(1322,103)(1323,130)(1324,169)(1325,224)(1326,282)
				(1327,345)(1328,418)(1329,508)(1330,585)(1331,681)(1332,749)(1333,853)
				(1334,941)(1335,1018)(1336,1107)(1337,1170)(1338,1243)(1339,1315)
				(1340,1378)(1341,1447)(1342,1506)(1343,1565)(1344,1613)(1345,1665)
				(1346,1705)(1347,1752)(1348,1782)(1349,1800)(1350,1829)(1351,1852)
				(1352,1878)(1353,1901)(1354,1916)(1355,1928)(1356,1947)(1357,1961)
				(1358,1966)(1359,1975)(1360,1982)(1361,1989)(1362,1990)(1363,1993)
				(1364,1997)(1365,1998)(1366,1999)(1368,2000)};
			\addlegendentry{$26·57$}\addplot+coordinates{
				(1384,2)(1385,4)(1386,8)(1387,12)(1388,24)(1389,44)(1390,62)(1391,97)
				(1392,156)(1393,190)(1394,254)(1395,320)(1396,390)(1397,470)(1398,576)
				(1399,671)(1400,753)(1401,852)(1402,957)(1403,1056)(1404,1150)
				(1405,1230)(1406,1307)(1407,1386)(1408,1450)(1409,1512)(1410,1576)
				(1411,1623)(1412,1665)(1413,1710)(1414,1740)(1415,1762)(1416,1790)
				(1417,1816)(1418,1838)(1419,1857)(1420,1882)(1421,1892)(1422,1905)
				(1423,1920)(1424,1933)(1425,1941)(1426,1948)(1427,1953)(1428,1959)
				(1429,1964)(1430,1967)(1431,1975)(1432,1978)(1433,1982)(1434,1984)
				(1435,1987)(1436,1988)(1437,1989)(1438,1990)(1442,1992)(1443,1993)
				(1444,1994)(1447,1995)(1448,1999)(1451,2000)};
			\addlegendentry{$29·53$}\addplot+coordinates{
				(1422,2)(1423,3)(1424,5)(1425,7)(1426,9)(1427,14)(1428,21)(1429,27)
				(1430,36)(1431,52)(1432,77)(1433,111)(1434,143)(1435,189)(1436,233)
				(1437,282)(1438,343)(1439,407)(1440,480)(1441,563)(1442,649)(1443,740)
				(1444,816)(1445,910)(1446,989)(1447,1072)(1448,1159)(1449,1241)
				(1450,1300)(1451,1374)(1452,1431)(1453,1490)(1454,1538)(1455,1588)
				(1456,1620)(1457,1659)(1458,1696)(1459,1724)(1460,1742)(1461,1761)
				(1462,1787)(1463,1808)(1464,1821)(1465,1844)(1466,1858)(1467,1869)
				(1468,1883)(1469,1899)(1470,1914)(1471,1922)(1472,1928)(1473,1932)
				(1474,1939)(1475,1944)(1476,1952)(1477,1958)(1478,1968)(1479,1973)
				(1480,1977)(1481,1980)(1482,1982)(1483,1984)(1484,1986)(1485,1988)
				(1486,1989)(1488,1992)(1489,1994)(1490,1995)(1492,1996)(1493,1997)
				(1496,1998)(1497,1999)(1498,2000)};
			\addlegendentry{$29·57$}\addplot+coordinates{
				(1499,1)(1501,7)(1502,8)(1503,14)(1504,30)(1505,41)(1506,59)(1507,92)
				(1508,130)(1509,176)(1510,249)(1511,324)(1512,397)(1513,478)(1514,559)
				(1515,639)(1516,724)(1517,830)(1518,944)(1519,1040)(1520,1145)
				(1521,1242)(1522,1323)(1523,1401)(1524,1485)(1525,1563)(1526,1619)
				(1527,1677)(1528,1727)(1529,1771)(1530,1817)(1531,1851)(1532,1877)
				(1533,1896)(1534,1917)(1535,1934)(1536,1948)(1537,1960)(1538,1967)
				(1539,1973)(1540,1980)(1541,1984)(1542,1986)(1543,1987)(1544,1990)
				(1545,1994)(1546,1995)(1548,1996)(1550,1997)(1551,1998)(1558,1999)
				(1559,2000)};
			\addlegendentry{$7·9·26$}\addplot+coordinates{
				(1494,1)(1497,2)(1498,4)(1499,6)(1500,7)(1502,11)(1504,12)(1505,14)
				(1506,17)(1507,20)(1508,25)(1509,31)(1510,36)(1511,45)(1512,55)(1513,68)
				(1514,79)(1515,89)(1516,92)(1517,102)(1518,114)(1519,125)(1520,136)
				(1521,148)(1522,166)(1523,187)(1524,206)(1525,221)(1526,240)(1527,262)
				(1528,279)(1529,304)(1530,328)(1531,346)(1532,374)(1533,400)(1534,430)
				(1535,452)(1536,474)(1537,506)(1538,537)(1539,576)(1540,611)(1541,638)
				(1542,667)(1543,703)(1544,730)(1545,762)(1546,794)(1547,819)(1548,853)
				(1549,892)(1550,929)(1551,956)(1552,989)(1553,1022)(1554,1054)
				(1555,1083)(1556,1117)(1557,1153)(1558,1200)(1559,1226)(1560,1260)
				(1561,1298)(1562,1329)(1563,1360)(1564,1388)(1565,1419)(1566,1440)
				(1567,1472)(1568,1493)(1569,1520)(1570,1554)(1571,1570)(1572,1592)
				(1573,1615)(1574,1631)(1575,1649)(1576,1679)(1577,1694)(1578,1713)
				(1579,1730)(1580,1745)(1581,1767)(1582,1786)(1583,1797)(1584,1810)
				(1585,1818)(1586,1828)(1587,1839)(1588,1860)(1589,1869)(1590,1881)
				(1591,1891)(1592,1902)(1593,1908)(1594,1921)(1595,1930)(1596,1937)
				(1597,1943)(1598,1949)(1599,1955)(1600,1959)(1601,1963)(1602,1974)
				(1603,1977)(1604,1980)(1605,1983)(1606,1985)(1607,1989)(1608,1991)
				(1609,1995)(1612,1997)(1613,1998)(1614,1999)(1617,2000)};
			\addlegendentry{$7·9·29$}\addplot+coordinates{
				(1592,2)(1596,3)(1597,6)(1599,7)(1602,8)(1603,12)(1604,14)(1605,18)
				(1606,21)(1607,24)(1608,28)(1609,31)(1610,34)(1611,38)(1612,43)(1613,50)
				(1614,57)(1615,63)(1616,73)(1617,83)(1618,90)(1619,105)(1620,114)
				(1621,123)(1622,136)(1623,145)(1624,161)(1625,178)(1626,195)(1627,213)
				(1628,234)(1629,253)(1630,278)(1631,297)(1632,321)(1633,341)(1634,360)
				(1635,372)(1636,395)(1637,416)(1638,436)(1639,459)(1640,480)(1641,508)
				(1642,533)(1643,558)(1644,590)(1645,619)(1646,642)(1647,665)(1648,695)
				(1649,725)(1650,752)(1651,783)(1652,813)(1653,832)(1654,866)(1655,891)
				(1656,917)(1657,949)(1658,983)(1659,1021)(1660,1056)(1661,1077)
				(1662,1104)(1663,1121)(1664,1144)(1665,1176)(1666,1208)(1667,1230)
				(1668,1255)(1669,1288)(1670,1316)(1671,1336)(1672,1361)(1673,1391)
				(1674,1417)(1675,1436)(1676,1452)(1677,1471)(1678,1486)(1679,1507)
				(1680,1529)(1681,1546)(1682,1559)(1683,1575)(1684,1597)(1685,1617)
				(1686,1642)(1687,1655)(1688,1666)(1689,1685)(1690,1702)(1691,1727)
				(1692,1740)(1693,1750)(1694,1754)(1695,1762)(1696,1772)(1697,1778)
				(1698,1789)(1699,1802)(1700,1808)(1701,1821)(1702,1827)(1703,1838)
				(1704,1845)(1705,1848)(1706,1855)(1707,1858)(1708,1865)(1709,1871)
				(1710,1878)(1711,1885)(1712,1890)(1713,1894)(1714,1898)(1715,1908)
				(1716,1909)(1717,1915)(1718,1918)(1719,1924)(1720,1929)(1722,1933)
				(1724,1938)(1725,1941)(1726,1945)(1727,1950)(1728,1951)(1729,1953)
				(1730,1956)(1731,1961)(1732,1963)(1733,1966)(1734,1970)(1735,1971)
				(1736,1973)(1737,1976)(1738,1978)(1739,1980)(1740,1981)(1741,1983)
				(1743,1984)(1744,1986)(1746,1988)(1749,1989)(1752,1990)(1753,1991)
				(1756,1992)(1757,1993)(1758,1994)(1759,1995)(1762,1997)(1764,1998)
				(1766,1999)(1776,2000)};
			\addlegendentry{$9·9·26$}\addplot+coordinates{
				(1664,1)(1672,2)(1673,3)(1674,4)(1675,9)(1676,11)(1677,13)(1678,15)
				(1679,20)(1680,27)(1681,33)(1682,42)(1683,48)(1684,55)(1685,60)(1686,69)
				(1687,77)(1688,92)(1689,105)(1690,114)(1691,130)(1692,139)(1693,159)
				(1694,174)(1695,193)(1696,213)(1697,236)(1698,257)(1699,286)(1700,306)
				(1701,336)(1702,362)(1703,383)(1704,407)(1705,437)(1706,459)(1707,482)
				(1708,520)(1709,540)(1710,569)(1711,601)(1712,632)(1713,663)(1714,692)
				(1715,722)(1716,744)(1717,778)(1718,802)(1719,828)(1720,856)(1721,879)
				(1722,905)(1723,927)(1724,949)(1725,973)(1726,994)(1727,1017)(1728,1030)
				(1729,1051)(1730,1065)(1731,1086)(1732,1115)(1733,1133)(1734,1154)
				(1735,1176)(1736,1187)(1737,1208)(1738,1225)(1739,1247)(1740,1264)
				(1741,1281)(1742,1296)(1743,1313)(1744,1329)(1745,1345)(1746,1360)
				(1747,1372)(1748,1388)(1749,1404)(1750,1417)(1751,1429)(1752,1440)
				(1753,1455)(1754,1467)(1755,1479)(1756,1490)(1757,1501)(1758,1518)
				(1759,1528)(1760,1538)(1761,1547)(1762,1564)(1763,1568)(1764,1582)
				(1765,1588)(1766,1598)(1767,1609)(1768,1615)(1769,1621)(1770,1633)
				(1771,1640)(1772,1648)(1773,1659)(1774,1666)(1775,1672)(1776,1676)
				(1777,1681)(1778,1686)(1779,1698)(1780,1709)(1781,1715)(1782,1718)
				(1783,1726)(1784,1730)(1785,1743)(1786,1752)(1787,1759)(1788,1765)
				(1789,1769)(1790,1774)(1791,1781)(1793,1786)(1794,1792)(1795,1800)
				(1796,1806)(1797,1814)(1798,1818)(1799,1825)(1800,1828)(1801,1834)
				(1802,1839)(1803,1845)(1804,1849)(1805,1853)(1806,1864)(1807,1869)
				(1808,1873)(1809,1877)(1810,1881)(1811,1885)(1812,1891)(1813,1893)
				(1814,1898)(1815,1899)(1816,1902)(1817,1903)(1818,1907)(1819,1909)
				(1820,1912)(1821,1916)(1822,1921)(1823,1923)(1824,1925)(1825,1927)
				(1826,1930)(1827,1932)(1828,1933)(1829,1935)(1830,1937)(1831,1939)
				(1833,1941)(1834,1942)(1835,1946)(1836,1947)(1837,1950)(1838,1952)
				(1839,1953)(1840,1954)(1841,1956)(1842,1961)(1843,1962)(1846,1963)
				(1847,1964)(1851,1966)(1853,1968)(1854,1970)(1855,1971)(1858,1972)
				(1861,1974)(1862,1976)(1863,1977)(1866,1980)(1867,1982)(1868,1984)
				(1869,1985)(1872,1986)(1878,1988)(1885,1989)(1886,1990)(1887,1991)
				(1888,1992)(1891,1993)(1896,1994)(1900,1995)(1903,1996)(1905,1997)
				(1908,1998)(1911,1999)(1940,2000)};
			\addlegendentry{EPC $1739$}\addplot+coordinates{(1739,0)(1739,2000)};
			\addlegendentry{$9·(13·18)$}\addplot+coordinates{
				(1667,1)(1681,2)(1683,5)(1684,9)(1685,12)(1686,16)(1687,17)(1688,23)
				(1689,30)(1690,33)(1691,45)(1692,54)(1693,67)(1694,82)(1695,98)
				(1696,112)(1697,128)(1698,145)(1699,168)(1700,191)(1701,220)(1702,244)
				(1703,275)(1704,300)(1705,323)(1706,351)(1707,380)(1708,412)(1709,444)
				(1710,480)(1711,510)(1712,542)(1713,579)(1714,611)(1715,639)(1716,663)
				(1717,695)(1718,713)(1719,740)(1720,767)(1721,799)(1722,837)(1723,859)
				(1724,885)(1725,905)(1726,925)(1727,948)(1728,978)(1729,1001)(1730,1028)
				(1731,1051)(1732,1078)(1733,1095)(1734,1114)(1735,1134)(1736,1148)
				(1737,1163)(1738,1181)(1739,1204)(1740,1220)(1741,1241)(1742,1265)
				(1743,1285)(1744,1300)(1745,1318)(1746,1337)(1747,1344)(1748,1355)
				(1749,1362)(1750,1377)(1751,1393)(1752,1409)(1753,1426)(1754,1443)
				(1755,1457)(1756,1463)(1757,1475)(1758,1483)(1759,1495)(1760,1508)
				(1761,1520)(1762,1527)(1763,1538)(1764,1548)(1765,1556)(1766,1564)
				(1767,1573)(1768,1582)(1769,1592)(1770,1608)(1771,1621)(1772,1631)
				(1773,1639)(1774,1643)(1775,1653)(1776,1668)(1777,1674)(1778,1684)
				(1779,1691)(1780,1698)(1781,1707)(1782,1710)(1783,1715)(1784,1723)
				(1785,1732)(1786,1740)(1787,1745)(1788,1749)(1789,1756)(1790,1764)
				(1791,1770)(1792,1777)(1793,1784)(1794,1788)(1795,1790)(1796,1793)
				(1797,1797)(1798,1804)(1799,1809)(1800,1816)(1801,1822)(1802,1828)
				(1803,1832)(1804,1834)(1805,1837)(1806,1841)(1807,1849)(1809,1852)
				(1810,1855)(1811,1859)(1812,1864)(1813,1865)(1814,1866)(1815,1868)
				(1816,1871)(1817,1876)(1818,1877)(1819,1883)(1820,1886)(1821,1892)
				(1822,1896)(1823,1898)(1824,1901)(1825,1903)(1826,1907)(1827,1908)
				(1828,1909)(1829,1910)(1830,1915)(1831,1916)(1832,1921)(1833,1922)
				(1834,1923)(1835,1926)(1836,1928)(1837,1931)(1838,1934)(1839,1935)
				(1841,1940)(1842,1943)(1843,1945)(1845,1946)(1846,1947)(1847,1948)
				(1849,1950)(1850,1952)(1852,1953)(1853,1956)(1854,1957)(1855,1960)
				(1856,1965)(1857,1967)(1858,1968)(1859,1969)(1860,1970)(1861,1973)
				(1862,1974)(1863,1976)(1868,1977)(1874,1978)(1878,1979)(1879,1980)
				(1880,1981)(1881,1982)(1884,1985)(1885,1986)(1886,1987)(1887,1988)
				(1889,1989)(1894,1990)(1898,1992)(1899,1993)(1907,1994)(1909,1995)
				(1910,1996)(1920,1997)(1946,1998)(1950,1999)(1951,2000)};
			\addlegendentry{$9·(14·17)$}\addplot+coordinates{
				(1705,1)(1709,2)(1710,3)(1713,6)(1714,7)(1715,8)(1716,13)(1717,16)
				(1718,19)(1719,23)(1720,29)(1721,38)(1722,45)(1723,52)(1724,70)(1725,83)
				(1726,94)(1727,115)(1728,136)(1729,156)(1730,178)(1731,214)(1732,232)
				(1733,258)(1734,293)(1735,319)(1736,357)(1737,379)(1738,408)(1739,440)
				(1740,462)(1741,504)(1742,535)(1743,570)(1744,603)(1745,634)(1746,665)
				(1747,698)(1748,732)(1749,755)(1750,781)(1751,811)(1752,839)(1753,870)
				(1754,898)(1755,933)(1756,955)(1757,977)(1758,1006)(1759,1024)
				(1760,1046)(1761,1075)(1762,1094)(1763,1111)(1764,1134)(1765,1164)
				(1766,1185)(1767,1204)(1768,1220)(1769,1235)(1770,1249)(1771,1262)
				(1772,1273)(1773,1285)(1774,1299)(1775,1312)(1776,1333)(1777,1346)
				(1778,1361)(1779,1372)(1780,1383)(1781,1404)(1782,1419)(1783,1433)
				(1784,1444)(1785,1462)(1786,1473)(1787,1483)(1788,1498)(1789,1507)
				(1790,1519)(1791,1530)(1792,1540)(1793,1547)(1794,1561)(1795,1574)
				(1796,1584)(1797,1597)(1798,1610)(1799,1617)(1800,1628)(1801,1635)
				(1802,1643)(1803,1649)(1804,1658)(1805,1668)(1806,1672)(1807,1680)
				(1808,1692)(1809,1702)(1810,1707)(1811,1719)(1812,1727)(1813,1737)
				(1814,1747)(1815,1748)(1816,1756)(1817,1763)(1818,1770)(1819,1777)
				(1820,1782)(1821,1787)(1822,1795)(1823,1803)(1824,1807)(1825,1814)
				(1826,1818)(1827,1823)(1828,1826)(1829,1829)(1830,1835)(1831,1842)
				(1832,1848)(1833,1852)(1834,1859)(1835,1862)(1836,1866)(1837,1868)
				(1838,1870)(1839,1872)(1840,1874)(1841,1875)(1842,1877)(1843,1879)
				(1844,1882)(1845,1886)(1846,1890)(1847,1896)(1848,1900)(1849,1904)
				(1851,1909)(1852,1911)(1853,1914)(1854,1915)(1855,1918)(1856,1923)
				(1857,1925)(1858,1928)(1859,1930)(1860,1932)(1861,1934)(1862,1935)
				(1863,1936)(1865,1937)(1866,1940)(1867,1941)(1868,1943)(1869,1945)
				(1872,1947)(1874,1951)(1875,1953)(1876,1954)(1877,1956)(1878,1958)
				(1879,1959)(1883,1960)(1884,1964)(1885,1965)(1886,1966)(1887,1967)
				(1888,1968)(1889,1969)(1890,1970)(1892,1971)(1893,1972)(1895,1973)
				(1901,1974)(1903,1975)(1906,1977)(1907,1978)(1909,1979)(1912,1980)
				(1913,1981)(1916,1983)(1917,1984)(1918,1986)(1923,1987)(1924,1988)
				(1925,1989)(1928,1990)(1929,1991)(1936,1992)(1940,1993)(1943,1994)
				(1948,1995)(1949,1996)(1955,1997)(1972,1998)(1980,1999)(1981,2000)};
		\end{axis}
	}
	\fi
	\caption{
		$12×12$.
		Between the brackets are the rectangular syntheses
		$⟨3,3,4;32⟩⊗⟨4,4,3;42⟩$ and $⟨3,3,4;33⟩⊗⟨4,4,3;41⟩$.
		Between the parentheses are components from \cref{fig:6x6}.
	}\label{fig:12x12}
\end{figure}

\def\assigncellcontent#1#2#3{
	\PMT\rowindex{#1}	\PMT\columnindex{#2}
	\pgfplotstableset{every row \rowindex\space column \columnindex/.estyle={
		assign cell content/.style={@cell content=#3}
	}}
}
\pgfplotstableread{
	dimen		naive	rank	n-worker	expon	thres	EPC	PDC	EPC2
	⟨2,2,2⟩		8		7		+2=9		3.170	8		9	12	13	
	⟨2,2,2⟩		8		7		+4=11		3.459	9		9	12	13	
	⟨2,2,2⟩		8		7		+6=13		3.700	10		9	12	13	
	⟨2,2,2⟩		8		7		+8=15		3.907	11		9	12	13	
	⟨2,2,2⟩		8		7		+10=17		4.087	12		9	12	13	
	⟨2,2,2⟩		8		7		+12=19		4.248	13		9	12	13	
	⟨2,2,3⟩		12		11		+2=13		3.126	12		13	18	21	
	⟨2,3,2⟩		12		11		+3=14		3.126	13		14	20	21	
	⟨3,2,3⟩		18		15		+2=17		2.980	16		19	27	29	
	⟨2,3,3⟩		18		15		+3=18		2.980	17		20	30	29	
	⟨3,3,3⟩		27		23		+3=26		2.966	25		29	45	45	
	⟨3,3,3⟩		27		23		+6=29		3.065	27		29	45	45	
	⟨3,3,3⟩		27		23		+9=32		3.155	29		29	45	45	
	⟨3,3,3⟩		27		23		+12=35		3.236	31		29	45	45	
	⟨3,3,3⟩		27		23		+15=38		3.311	33		29	45	45	
	⟨3,3,4⟩		36		29		+3=32		2.910	31		38	60	57	
	⟨3,4,3⟩		36		29		+4=33		2.910	32		39	63	57	
	⟨4,3,4⟩		48		38		+3=41		2.890	40		50	80	75	
	⟨3,4,4⟩		48		38		+4=42		2.890	41		51	84	75	
	⟨4,4,4⟩		64		49		+4=53		2.864	52		67	112	97	
	⟨4,4,4⟩		64		49		+8=57		2.916	55		67	112	97	
	⟨4,4,4⟩		64		49		+12=61		2.965	58		67	112	97	
	⟨4,4,4⟩		64		49		+14=63		2.989	61		67	112	97	
}\tablePrime
\begin{table}
	\caption{
		Summary of prime Pluto codes.
		The last row, with $63$ workers,
		is the Charon construction (see \cref{app:Charon}).
	}\label{tab:prime}
	\def\arraystretch{1.2}
	\pgfplotstablegetrowsof\tablePrime\PMT\lastrow{\pgfplotsretval-1}
	\pgfplotsforeachungrouped\r in{0,...,\lastrow}{
		\pgfplotstablegetelem\r{dimen}\of\tablePrime
		\ifx\lastdimen\pgfplotsretval
			\assigncellcontent{\r-1}0{}
			\assigncellcontent{\r-1}6{}
			\assigncellcontent{\r-1}7{}
		\else
			\let\lastdimen\pgfplotsretval
			\def\dimenrepeat{}
		\fi
		\edef\dimenrepeat{\the\numexpr\dimenrepeat-1}
		\assigncellcontent\r0{\NE\multirow{\dimenrepeat}*{$##1$}}
		\assigncellcontent\r6{\NE\multirow{\dimenrepeat}*{$##1$}}
		\assigncellcontent\r7{\NE\multirow{\dimenrepeat}*{$##1$}}
	}
	\pgfplotsforeachungrouped\r in{0,...,\lastrow}{
		\pgfplotstablegetelem\r{expon}\of\tablePrime
		\ifx\lastexpon\pgfplotsretval
			\assigncellcontent{\r-1}4{}
		\else
			\let\lastexpon\pgfplotsretval
			\def\exponrepeat{}
		\fi
		\edef\exponrepeat{\the\numexpr\exponrepeat-1}
		\assigncellcontent\r4{\NE\multirow{\exponrepeat}*{$##1$}}
	}
	\def\shadesign{1}
	\pgfplotsforeachungrouped\r in{0,...,\lastrow}{
		\pgfplotstablegetelem\r{naive}\of\tablePrime
		\ifx\lastnaive\pgfplotsretval
			\assigncellcontent{\r-1}1{}
			\assigncellcontent{\r-1}2{}
			\assigncellcontent{\r-1}8{}
		\else
			\let\lastnaive\pgfplotsretval
			\def\naiverepeat{}\edef\shadesign{\the\numexpr-\shadesign}
		\fi
		\edef\naiverepeat{\the\numexpr\naiverepeat-1}
		\assigncellcontent\r1{\NE\multirow{\naiverepeat}*{$##1$}}
		\assigncellcontent\r2{\NE\multirow{\naiverepeat}*{$##1$}}
		\assigncellcontent\r8{\NE\multirow{\naiverepeat}*{$##1$}}
		\ifnum\shadesign=1
			\pgfplotstableset{every row no \r/.style={before row=\rowcolor{427}}}
		\fi
	}
	\centering\pgfplotstabletypeset[
		columns/naive/.style={column name=naïve},
		columns/n-worker/.style={
			column type/.initial={r<{\pgfplotstableresetcolortbloverhangright}
								@{}l<{\pgfplotstableresetcolortbloverhangleft}},
			column name={\multicolumn2c{\#worker}},
			string replace*={=}{\;$&$=},
		},
		every head row/.style={before row=\toprule,after row=\midrule},
		every last row/.style={after row=\bottomrule},
		assign cell content/.style={@cell content={${}#1$}},
	]\tablePrime
\end{table}

\pgfplotstableread{
	dimension				naive	rank	n-worker		EPC		EPC2	
	⟨2,2,2⟩					8		7		7+2=9			9		13		
	⟨3,3,3⟩					27		23		23+3=26			29		45		
	⟨4,4,4⟩					64		49		49+4=57			67		97		
	⟨2,2,2⟩^{⊗2}			64		49		9^2=81			67		97		
	⟨2,2,2⟩⊗⟨3,3,3⟩			216		161		9·26=234		221		321		
	⟨2,2,3⟩⊗⟨3,3,2⟩			216		161		13·18=234		221		321		
	⟨2,3,2⟩⊗⟨3,2,3⟩			216		161		14·17=238		221		321		
	⟨2,2,2⟩^{⊗3}			512		343		9^3=729			519		685		
	⟨2,2,2⟩⊗⟨4,4,4⟩			512		343		9·53=477		519		685		
	\Cyc⟨2,2,2⟩⊗⟨4,4,4⟩		512		343		\Cyc7·53=427	519		685		
	⟨3,3,3⟩^{⊗2}			729		522		26·26=676		737		1,043	
	⟨3,3,3⟩⊗⟨4,4,4⟩			1,728	1,125	26·53=1,378		1,739	2,249	
	⟨3,3,4⟩⊗⟨4,4,3⟩			1,728	1,125	32·42=1,344		1,739	2,249	
	⟨3,4,4⟩⊗⟨4,3,4⟩			1,728	1,125	33·41=1,353		1,739	2,249	
	⟨2,2,2⟩^{⊗2}⊗⟨3,3,3⟩	1,728	1,125	9^2·26=2,106	1,739	2,249	
	⟨2,2,2⟩⊗⟨2,2,3⟩⊗⟨3,3,2⟩	1,728	1,125	9·13·18=2,106	1,739	2,249	
	⟨2,2,2⟩⊗⟨2,3,2⟩⊗⟨3,2,3⟩	1,728	1,125	9·14·17=2,142	1,739	2,249	
	⟨2,2,3⟩⊗⟨2,3,2⟩⊗⟨3,2,2⟩	1,728	1,125	13·14·13=2,366	1,739	2,249	
	⟨4,4,4⟩^{⊗2}			4,096	2,401	53·53=2,809		4,111	4,801	
	⟨2,2,2⟩^{⊗2}⊗⟨4,4,4⟩	4,096	2,401	9^2·53=4,293	4,111	4,801	
	⟨2,2,2⟩^{⊗4}			4,096	2,401	9^4=6,561		4,111	4,801	
	⟨2,2,2⟩⊗⟨3,3,3⟩^2		5,832	3,200	9·26^2=6,084	5,849	6,399	
	⟨2,2,3⟩⊗⟨3,3,2⟩⊗⟨3,3,3⟩	5,832	3,200	13·18·26=6,084	5,849	6,399	
	⟨2,3,2⟩⊗⟨3,2,3⟩⊗⟨3,3,3⟩	5,832	3,200	14·17·26=6,188	5,849	6,399	
	⟨2,3,3⟩⊗⟨3,2,3⟩⊗⟨3,3,2⟩	5,832	3,200	18·17·18=5,508	5,849	6,399	
}\tableCompose
\begin{table}
	\caption{
		A partial list of composite Pluto codes;
		one representative for each dimension type.
	}\label{tab:compose}
	\def\arraystretch{1.2}
	\def\shadesign{1}
	\pgfplotstablegetrowsof\tableCompose\PMT\lastrow{\pgfplotsretval-1}
	\pgfplotsforeachungrouped\r in{0,...,\lastrow}{
		\pgfplotstablegetelem\r{naive}\of\tableCompose
		\ifx\lastnaive\pgfplotsretval
			\assigncellcontent{\r-1}1{}
			\assigncellcontent{\r-1}2{}
			\assigncellcontent{\r-1}4{}
			\assigncellcontent{\r-1}5{}
		\else
			\let\lastnaive\pgfplotsretval
			\def\naiverepeat{}\edef\shadesign{\the\numexpr-\shadesign}
		\fi
		\edef\naiverepeat{\the\numexpr\naiverepeat-1}
		\assigncellcontent\r1{\NE\multirow{\naiverepeat}*{$##1$}}
		\assigncellcontent\r2{\NE\multirow{\naiverepeat}*{$##1$}}
		\assigncellcontent\r4{\NE\multirow{\naiverepeat}*{$##1$}}
		\assigncellcontent\r5{\NE\multirow{\naiverepeat}*{$##1$}}
		\ifnum\shadesign=1
			\pgfplotstableset{every row no \r/.style={before row=\rowcolor{427}}}
		\fi
	}
	\centering\pgfplotstabletypeset[
		assign cell content/.style={@cell content={$#1$}},
		columns/naive/.style={column name=naïve,string replace*={,}{{,}},},
		columns/rank/.style={string replace*={,}{{,}}},
		columns/n-worker/.style={
			column type/.initial={
				r<{\pgfplotstableresetcolortbloverhangright}%
				@{}%
				l<{\pgfplotstableresetcolortbloverhangleft}%
			},
			column name={\multicolumn2c{\#worker}},
			string replace*={=}{=$&$\;},
			string replace*={,}{{,}},
		},
		columns/thres/.style={string replace*={,}{{,}}},
		columns/EPC/.style={string replace*={,}{{,}}},
		columns/EPC2/.style={string replace*={,}{{,}}},
		every head row/.style={before row=\toprule,after row=\midrule},
		every last row/.style={after row=\bottomrule},
	]\tableCompose
\end{table}

\begin{figure}
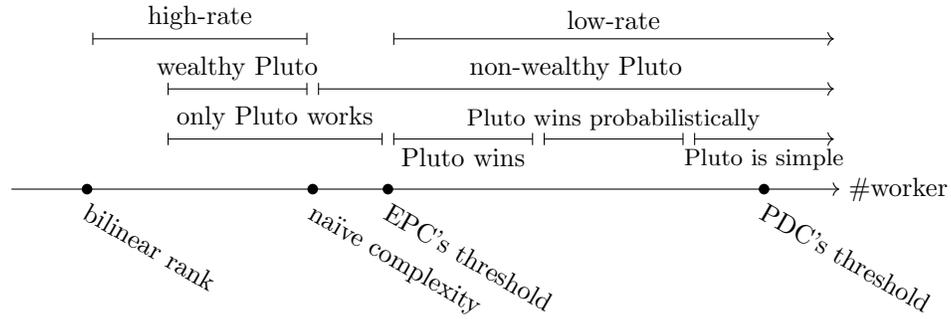

	\tikz{
		\draw[->](0,0)--(11,0)node[right]{\#worker};
		\fill[nodes={below right,rotate=-30}]
			(1,0)circle(2pt)node{bilinear rank}
			(4,0)circle(2pt)node{naïve complexity}
			(5,0)circle(2pt)node{EPC's threshold}
			(10,0)circle(2pt)node{PDC's threshold}
		;
		\tikzset{yscale=2/3,,shorten <=2pt,shorten >=2pt,nodes=above}
		\draw[|-|](1,3)--node{high-rate}(4,3);
		\draw[|->](5,3)--node{low-rate}(11,3);
		\draw[|-|](2,2)--node{wealthy Pluto}(4,2);
		\draw[|->](4,2)--node{non-wealthy Pluto}(11,2);
		\draw[|-|](2,1)--node{only Pluto works}(5,1);
		\draw[|-|](5,1)--node[below]{Pluto wins}(7,1);
		\draw[|-|](7,1)--node{\smaller Pluto wins probabilistically}(9,1);
		\draw[|->](9,1)--node[below]{\smaller Pluto is simple}(11,1);
	}
	\caption{
		The low- and high-rate regions.
	}\label{fig:worker}
\end{figure}

\appendix

\section{EPC/PDC-based Checksums}\label{app:RS}

	We remark that the checksums can be made compatible with EPC/PDC.
	Recall the first checksum
	\[*\bma{1&2}C\bma{-1\\1}=\(\bma{1&2}A\)⋆\left(B\bma{-1\\1}\right).\]*
	We modify it slightly:
	Choose a generic scalar $ζ≠0$, then
	\[2\bma{1&ζ^2}C\bma{1\\ζ^4}
		=\bma{1&ζ^2}A\bma{2\\&2}B\bma{1\\ζ^4}.\eqlabel{equ:EPC}\]
	The \LHS/ is a linear combination of $\S1$--$\S7$.
	The \RHS/ contains a diagonal matrix, which is the sum
	\[*\bma{2\\&2}=\bma{1\\ζ}\bma{1&ζ^{-1}}+\bma{1\\-ζ}\bma{1&÷1{-ζ}}.\]*
	Thus, the \RHS/ of \cref{equ:EPC} equals
	\[*
		\left(\bma{1&ζ^2}A\bma{1\\ζ}\right)⋆\left(\bma{1&÷1{ζ}}B\bma{1\\ζ^4}\right)+
		\left(\bma{1&ζ^2}A\bma{1\\-ζ}\right)⋆\left(\bma{1&÷1{-ζ}}B\bma{1\\ζ^4}\right).
	\]*
	Call the first star $\S8$, the second star $\S9$.
	We get that a linear combination of $\S1$--$\S7$ equals $(\S8+\S9)$.
	Reselecting $ζ$ if necessary, we can make $\S1$--$\S9$ MDS.
	
	For general dimension $⟨ℓ,m,n⟩$, establish a parity-check equation:
	\[*m\bsm{ζ^m&ζ^{2m}&⋯&ζ^{ℓm}}C\bsm{ζ^{ℓm}\\ζ^{2ℓm}\\⋮\\ζ^{ℓmn}}
		=\bsm{ζ^m&ζ^{2m}&⋯&ζ^{ℓm}}A
			\bsm{m\\&⋱\\&&m}B\bsm{ζ^{ℓm}\\ζ^{2ℓm}\\⋮\\ζ^{ℓmn}}\]*
	Use the $m$th root of unity $μ^m=1$ to breakdown the diagonal matrix:
	\[*\bsm{m\\&⋱\\&&m}
		=∑_{k=1}^m\bsm{ζμ^k\\ζ^2μ^{2k}\\⋮\\ζ^mμ^{mk}}
			\bsm{÷1{ζμ^k}&÷1{ζ^2μ^{2k}}&⋯&÷1{ζ^mμ^{mk}}}\]*
	This is nothing but discrete Fourier, $∑_{k=1}^mμ^{k(i-j)}=mδ_{ij}$.
	Lastly, perform the associativity trick.

\section{Cyclic Symmetry of Strassen}\label{app:triangle}

	This appendix archives the group action we used to generate
	Strassen checksums alongside the preparation of this paper.
	For the full symmetry group, plus an index-$2$ extension,
	refer to \cite{Burichenko14}.
	
	Consider the change of basis
	\[*A⟼RAR^{-1},\qquad R≔\bma{-1&1\\-1&0}.\]*
	Then it fixes the trace $\A11+\A22$ and creates orbits
	\begin{gather*}
		\A11⟼\A22-\A12⟼\A22+\A21⟼†cycle†,	\\
		\A11+\A22⟼†stabilized†,	\\
		\A22⟼\A11+\A12⟼\A11-\A21⟼†cycle†.	
	\end{gather*}
	
	Let the same operation apply to $A,B,C$ at once, then \crefrange{for:C11}{for:C22}
	expand to the following ``orbital closure'' that respects symmetry:
	\[*\def\cdot{\color{427}0}
		\bma{
			\C22+\C21	\\	\C11	\\	\C22-\C12	\\[2ex]
			\C11+\C22								\\[2ex]
			\C11-\C21	\\	\C22	\\	\C11+\C12	
		}=\arraycolsep4pt\bma{
			-1 &  · &  · && 1 &&  · & -1 & -1 \\
			 · & -1 &  · && 1 && -1 &  · & -1 \\
			 · &  · & -1 && 1 && -1 & -1 &  · \\[2ex]
			-1 & -1 & -1 && 2 && -1 & -1 & -1 \\[2ex]
			 · & -1 & -1 && 1 && -1 &  · &  · \\
			-1 &  · & -1 && 1 &&  · & -1 &  · \\
			-1 & -1 &  · && 1 &&  · & · & -1  
		}\arraycolsep0pt
		\left[\begin{array}{rl}
				\A11		&	(\B22-\B12)		\\
			(\A22-\A12)		&	(\B22+\B21)		\\
			(\A22+\A21)		&		\B11		\\[2ex]
			(\A11+\A22)		&	(\B11+\B22)		\\[2ex]
			(\A11+\A12)		&		\B22		\\
			(\A11-\A21)		&	(\B11+\B12)		\\
				\A22		&	(\B11-\B21)		
		\end{array}\right]
	\]*
	
	Strassen's ingredients, $\S1$--$\S7$, could have been defined
	as what are in the rightmost vector (for both signs and orbits).
	The rightmost vector equals, in terms of the current notation,
	$\bma{-\S3 & -\S7 & \S2 & \S1 & \S5 & -\S6 & -\S4}^⊤$.

\section{Dihedral Symmetry of Laderman}\label{app:dihedral}

	The following is the group action on $3$-by-$3$ matrices that respects Laderman.
	Declare matrices
	\[*P≔\bma{1&0&0\\0&0&1\\0&1&0},\qquad Q≔\bma{0&0&-1\\0&1&0\\-1&0&0}.\]*
	Then the following action is a rotation of order $4$
	\[*A⟼PB^⊤Q,\qquad B⟼QA^⊤,\qquad C⟼PC^⊤.\]*
	Moreover, the following action is a reflection of order $2$
	\[*A⟼-B^⊤,\qquad B⟼-A^⊤,\qquad C⟼C^⊤.\]*
	As usual, conjugating the rotation by the reflection
	is rotating in the opposite direction.
	
	The rotation rotates \cref{fig:shuriken,fig:sharingan} by $90^∘$ clockwise.
	It acts on $A$'s, $B$'s, $C$'s, and the indices of $L$'s by
	{\def\;{\mskip\thickmuskip\mathopen{}}
	\begin{gather*}
		(\A21\;-\B33\;\A31\;-\B32)(\A22\;\B23\;\A32\;\B22)(\A23\;-\B13\;\A33\;-\B12) \\
		(\A31\;-\B32\;\A21\;-\B33)(\A32\;\B22\;\A22\;\B23)(\A33\;-\B12\;\A23\;-\B13) \\
		(\A11\;-\B31)(\A12\;\B21)(\A13\;-\B11)(\C12\C21\C13\C31)(\C22\C23\C33\C32)   \\
		(6\;14)(1\;3\;10\;11)(4\;16\;7\;12)(5\;17\;9\;13)(15\;2\;18\;8)(20\;21\;23\;22).
	\end{gather*}}%
	The reflection reflects \cref{fig:shuriken,fig:sharingan}
	along the matrix-transpose axis.
	It acts on $A$'s, $B$'s, $C$'s, and the indices of $L$'s by
	\[*(\A ij\;\mathopen{}-\B ji)(\C ij\C ji)
		(1\:3)(2\:5)(8\:9)(10\:11)(12\:16)(13\:18)(15\:17)(21\:22).\]*

\section{Minor Aspect of Laderman}\label{app:minor}

	In \cref{sec:matroid}, it can be seen that there are three versions
	of Laderman matroid---over $ℚ$, over $𝔽_2$, and over $𝔽_3$.
	It is not obvious if any of them is a directed graph.
	That said, we observe that some matroid minors turn out to be directed graphs.
	
	Consider linear combinations of $\C11$--$\C33$;
	we found the following combinations ``low-weight''
	(in terms of how many $L$'s are used):
	{\everymath{\displaystyle}
	\PMS\fakeL{width("$L\series{00}$")}
	\def\L#1{\rlap{$L\series{#1}$}\kern\fakeL pt}
	\begin{align*}
		\C12-\C11	&	=\L4	+\L5	+\L1	+\L{15}	+\L{12}	-\L{19},	\\
		\C21-\C11	&	=\L{16}	+\L{17}	+\L3	+\L2	+\L4	-\L{19},	\\
		\C13-\C11	&	=\L7	+\L9	+\L{10}	+\L{18}	+\L{16}	-\L{19},	\\
		\C31-\C11	&	=\L{12}	+\L{13}	+\L{11}	+\L8	+\L7	-\L{19},	\\
		\C32-\C12+\C22		&	=\L2	+\L{20}	-\L1	+\L{22}	+\L{13},	\\
		\C22-\C21+\C23		&	=\L{18}	+\L{21}	-\L3	+\L{20}	+\L5,		\\
		\C23-\C13+\C33		&	=\L8	+\L{23}	-\L{10}	+\L{21}	+\L{17},	\\
		\C33-\C31+\C32		&	=\L{15}	+\L{22}	-\L{11}	+\L{23}	+\L9,		\\
		\C11		&	=\L6	+\L{14}	+\L{19},							\\
		\C22		&	=\L6	+\L4	+\L5	+\L2	+\L{20},			\\
		\C12-\C22	&	=\L{14}	+\L1	+\L{15}	-\L2	-\L{20}	+\L{12}.	
	\end{align*}}%
	
	Shortened at $\L6$ and $\L{14}$ and punctured at $\L{19}$, the first eight equations
	constitute a graph on eight vertices, as every variable appears at most twice.
	This graph can be directed by negating every other equation.
	\Cref{fig:sharingan} is the said directed graph;
	a vertex is identified with an equation if the former appears in
	the left-hand side of the latter, with the correct sign.
	
	Punctured at $\L{19}$, $\L2$ and $\L{20}$,
	the last ten equations constitute a directed graph on ten vertices.
	Note that the choice of $\L2$ and $\L{20}$ breaks the symmetry.
	That also means that there are eight ways to puncture,
	all yield the same structure up to isomorphism.

\begin{figure}
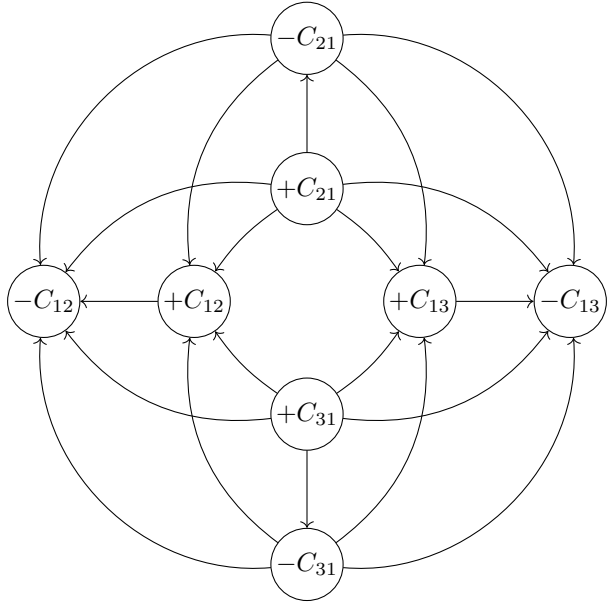

	\tikz{
		\draw[nodes={circle,draw,minimum size=2.5em,inner sep=1}]
			foreach\i in{2,3}{
				foreach\j in{+,-}{
					(0,{(2.5-\i)*(5-\j2)})node(C\i\j){$\j\C\i1$}
					({(\i-2.5)*(5-\j2)},0)node(C\j\i){$\j\C1\i$}
				}
			}
		;
		\draw[every edge/.append style={->}]
			foreach\i in{2,3}{
				(C\i+)edge(C\i-)(C+\i)edge(C-\i)
				foreach\j in{+,-}{
					foreach\k in{2,3}{
						foreach\l in{+,-}{
							(C\i\j)edge[bend left={(\i==\k?-1:1)*(30-\j10-\l10)}](C\l\k)
						}
					}
				}
			}
		;
	}
	\caption{
		A rank-$7$ minor of the Laderman matroid.
	}\label{fig:sharingan}
\end{figure}

\section{Charon Construction}\label{app:Charon}

	Over the entirety of the paper, we rely on auxiliary \emph{vectors}
	$g$ and $h$ to establish the parity-check equation $gCh=(gA)(Bh)$.
	A new idea in one sentence: use FMM in computing the checksum $(gA)⋆(Bh)$.
	We call this scheme \emph{Charon} after a moon of Pluto.
	
	A walking example goes below:
	Let $G∈𝔽^{2×4}$;
	let $A,B∈𝔸^{4×4}$;
	let $H∈𝔽^{4×2}$.
	Then $GCH=(GA)(BH)∈𝔸^{2×2}$ is worth (almost) four checksums.
	Normally, it would have cost $16$ workers to buy four checks.
	But in this case, $(GA)⋆(BH)$ is of type $⟨2,4,2;14⟩$, which saves two workers.
	We end up paying $49+14=63$ workers;
	note that this implies the code is wealthy.
	
	It can be shown that this Charon code is guaranteed to defeat two erasures.
	With empirical frequency $99.7\%$ it eliminates three;
	with empirical frequency $98.4\%$ it beats four.
	(See \cref{fig:4x4win}.)
	Charon for bigger matrices are expected to work as this example does.

\section*{Acknowledgement}

	We thank Chih-Yang Hsia (Google Inc.) for programming advices.

\hbadness9999\makeatletter\g@addto@macro\sloppy{\advance\baselineskip0ptplus1ptminus1pt}
\bibliographystyle{alphaurl}\bibliography{CheckStrassen-1}

\end{document}